\newcommand\cqfd{\skip10=\parfillskip\parfillskip=0pt
\enspace\hfill\symbolecqfd\par\parfillskip=\skip10\par\medskip}
\newcommand\symbolecqfd{\rlap{$\sqcap$}$\sqcup$}
\newcommand\mathcalC{\mathcal{C}}
\newcommand\mathcalK{\mathcal{K}}
\newcommand\mathcalL{\mathcal{L}}
\newcommand\mathcalV{\mathcal{V}}
\newcommand\psvA{\textbf{Ap}}
\newcommand\Com{\textbf{Com}}
\newcommand\psvD{\textbf{D}}
\newcommand\psvDA{\textbf{DA}}
\newcommand\psvG{\textbf{G}}
\newcommand\psvI{\textbf{I}}
\newcommand\psvJ{\textbf{J}}
\newcommand\psvK{\textbf{K}}
\newcommand\LL{\textbf{L}}
\newcommand\RR{\textbf{R}}
\newcommand\Sl{\textbf{Sl}}
\newcommand\psvV{\textbf{V}}
\newcommand\psvW{\textbf{W}}
\newcommand\psvX{\textbf{X}}
\renewcommand\phi\varphi
\newcommand\calC{\mathcal{C}}
\newcommand\calJ{\mathcal{J}}
\newcommand\calL{\mathcal{L}}
\newcommand\calR{\mathcal{R}}
\newcommand\calV{\mathcal{V}}
\newcommand\calW{\mathcal{W}}
\newcommand\inv{^{-1}}
\font\petite=cmmi10 at 8pt
\newcommand\malcev{\mathbin{\hbox{$\bigcirc$\rlap{\kern-9pt\raise0,75pt\hbox{\petite m}}}}}
\newcommand\llbrack{[\![}
\newcommand\rrbrack{]\!]}
\newcommand{\synt}{\operatorname{Synt}}
\newcommand{\Det}{\operatorname{Det}}
\newcommand{\coDet}{\operatorname{coDet}}
\newcommand{\Pol}{\operatorname{Pol}}
\newcommand{\UPol}{\operatorname{UPol}}
\newcommand{\FO}{\operatorname{FO}}
\newcommand\dast{\mathbin{\ast\!\ast}}
\begin{document}
\markboth{H. Straubing, P. Weil}{Varieties}
\title{Varieties}
\author{Howard Straubing$^{1}$\thanks{Work partially supported by NSF Grant
    CCF-0915065}  and Pascal Weil$^{2}$\thanks{Work partially supported by ANR Grant ANR-16-CE40-0007 (project \textsc{DeLTA}) and by ReLaX, CNRS UMI 2000.}}
\address{ $^1$Computer Science Department\\ Boston College\\  Chestnut Hill, Massachusetts 02467, USA\\
  email:\,\url{straubin@cs.bc.edu}
  \\[4pt]
  $^2$Univ. Bordeaux, CNRS, LaBRI, UMR 5800, F-33400 Talence, France\\
  email:\,\url{pascal.weil@labri.fr} }


\maketitle\label{chapter16}

\setcounter{page}{509}


This chapter is devoted to the theory of varieties, which provides an
important tool, based in universal algebra, for the classification of
regular languages.  In the introductory section, we present a number
of examples that illustrate and motivate the fundamental concepts.  We
do this for the most part without proofs, and often without precise
definitions, leaving these to the formal development of the theory
that begins in Section 2.  Our presentation of the theory draws heavily on the work of Gehrke, Grigorieff and Pin~\cite{Gehrke&Grigorieff&Pin:2008} on the equational theory of lattices of regular languages. In the subsequent sections we consider in more detail aspects of varieties that were only briefly evoked in the introduction:  Decidability, operations on languages, and characterizations in formal logic.

\section{Motivation and examples}

We refer the readers to Chapter~\ref{Chap1}, and specifically to Sections 4.2 and 4.3 of that chapter, for the notion of a language recognized by a morphism into a finite monoid, and for the definition of the syntactic monoid $\synt(L)$ of a language $L$.

\subsection{Idempotent and commutative monoids}\label{SW:subsection:j_1}

When one begins the study of abstract algebra, groups are usually
encountered before semigroups and monoids.  The simplest example of a
monoid that is not a group is the set $\{0,1\}$ with the usual
multiplication.  We denote this monoid $U_1$\index{U@$U_1$}.

What are the regular languages recognized by $U_1$?  If $A$ is a
finite alphabet and $\phi\colon A^*\to U_1$ is a morphism, then any
language $L\subseteq A^*$ recognized by $\phi$---that is, any set of
the form $\phi^{-1}(X)$ where $X\subseteq U_1$---has either the form
$B^*$ or $A^*\setminus B^*$, where $B\subseteq A$. In particular,
membership of a word $w$ in $L$ depends only on the set $\alpha(w)$ of
letters occurring in $w$ (see Example 4.9 in Chapter~\ref{Chap1}).

The property `membership of $w$ in $L$ depends only on $\alpha(w)$'
is preserved under union and complement, and thus defines a boolean
algebra of regular languages.  Of course, not every language in this
boolean algebra is recognized by $U_1$; for example, we could take
$L=a^*\cup b^*$. However, it follows from basic properties of the
syntactic monoid
that this boolean algebra consists of precisely the languages recognized by
finite direct products of copies of $U_1$.

We have thus characterized a syntactic property of regular languages
in terms of an algebraic property of its syntactic monoid.  The family
of finite monoids that divide a direct product of a finite number of
copies of $U_1$ is itself closed under finite direct products and
division.  Such a family of finite monoids is called a \emph{pseudovariety}\index{pseudovariety}. This particular pseudovariety is often denoted $\psvJ_1$\index{J@$\psvJ_1$} in the literature\footnote{It is also written $\Sl$ because its elements are called semilattices\index{semilattice}.}.

\subsubsection{Decidability and equational description}

Thus if we
want to decide whether a given language $L\subseteq A^*$ has this
syntactic property, we can compute $\synt(L)$ and try to determine whether
$\synt(L)\in\psvJ_1$. But how do we do \emph{that?} There are, after all,
infinitely many monoids in $\psvJ_1$. We can, however, bound the
size of the search space in terms of $|A|$. It is not hard to prove
that if $M$ is a finite monoid, and
\begin{displaymath}
\phi\colon A^*\to \underbrace{M\times\cdots\times M}_{r\quad{\rm times}}
\end{displaymath}
is  a morphism,  then $N=\phi(A^*)$ embeds into  
\begin{displaymath}
\underbrace{M\times\cdots\times M}_{s\quad{\rm times}},
\end{displaymath}
where $s=|M|^{|A|}$. This settles, in a not very satisfactory way, the
question of deciding whether $\synt(L)$ is in $\psvJ_1$: The resulting
`decision procedure'---check all the divisors of $U_1^{2^{|A|}}$ and
see if $\synt(L)$ is isomorphic to any of them!---is of course
ridiculously impractical.  Fortunately, there is a better approach:
$U_1$ is both commutative and idempotent (\emph{i.e.,} all its elements
are idempotents).These two properties are preserved under direct
products and division, and consequently shared by all members of $\psvJ_1$. That is, the idempotent and commutative monoids\index{monoid!idempotent and commutative} form a
pseudovariety that contains $\psvJ_1$. Conversely, every idempotent
and commutative finite monoid belongs to $\psvJ_1$. To see this, we
make note of a fact that will play a large role in this chapter: If
$M$ is a finite monoid and $\phi\colon A^*\to M$ an onto morphism, then
\begin{displaymath}
M\prec \prod_{m\in M} \synt(\phi^{-1}(m)).
\end{displaymath}
In particular, \emph{every pseudovariety is generated by the syntactic
monoids it contains.} We now observe that if
$\alpha(w_1)=\alpha(w_2)$, and if $\phi\colon A^*\to M$ is a morphism
onto an idempotent and commutative monoid, then $\phi(w_1)=\phi(w_2)$,
since we can permute letters and eliminate duplications in any word
$w$ without changing its value under $\phi$. Thus each
$\phi^{-1}(m)$ satisfies our syntactic property, and so by the remark just made,
$M\in \psvJ_1$.

We can express `$M$ is idempotent and commutative' by saying that $M$
\emph{satisfies the identities}\index{identity} $xy=yx$ and $x^2=x$. This means that
these equations hold no matter how we substitute elements of $M$ for
the variables $x$ and $y$. This equational characterization of $\psvJ_1$ provides a much more satisfactory procedure for determining if a
monoid $M$ belongs to $\psvJ_1:$ If $M$ is given by its
multiplication table, then we can verify the identities in time
polynomial in $|M|$.

\subsubsection{Connection to logic}

Before leaving this example, we
note a connection with formal logic.  We express properties of words
over $A^*$ by sentences of first-order logic in which variables denote
positions in a word.  For each $a\in A$, our logic contains a unary
predicate $Q_a$, where $Q_ax$ is interpreted to mean `the letter in
position $x$ is $a$'.  We allow only these formulas $Q_ax$ as atomic
formulas---in particular, we do not include equality as a predicate.
A sentence in this logic, for example (with $A=\{a,b,c\}$)
\begin{displaymath}
\exists x\exists y\forall z(Q_ax\wedge Q_by\wedge \neg Q_cz)
\end{displaymath}
defines a language over $A^*$, in this case the set of all words
containing both $a$ and $b$, but with no occurrence of $c$. It is easy
to see that the languages definable in this logic are exactly those in
which membership of a word $w$ depends only on $\alpha(w)$.

The following theorem summarizes the results of this subsection.

\begin{theorem}
    Let $A$ be a finite alphabet and let $L\subseteq A^*$ be a regular
    language.  The following are equivalent.
    \begin{enumerate}
	\item[(i)] Membership of $w$ in $L$ depends only on the set
	$\alpha(w)$ of letters appearing in $w$.
	\item[(ii)] $\synt(L)\in\psvJ_1$, that is, $\synt(L)$ divides a finite
	direct product of copies of $U_1$.
	\item[(iii)] $\synt(L)$ satisfies the identities $xy=yx$ and $x^2=x$.
	\item[(iv)] $L$ is definable by a first-order sentence over the
	predicates $Q_a$, $a\in A$.
    \end{enumerate}
\end{theorem}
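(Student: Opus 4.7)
The plan is to establish a cycle (i) $\Rightarrow$ (ii) $\Rightarrow$ (iii) $\Rightarrow$ (i), and then separately close the loop with (i) $\Leftrightarrow$ (iv). Most ingredients are already in the preceding discussion and only need to be organized.

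For (i) $\Rightarrow$ (ii), I would note that the assignment $w \mapsto \alpha(w)$ is itself a monoid morphism from $A^*$ into $(\mathcal{P}(A),\cup)$, and that the latter monoid is isomorphic to $U_1^{|A|}$ via characteristic functions. If membership in $L$ depends only on $\alpha(w)$, then $L$ is recognized by this morphism, hence $\synt(L)$ divides $U_1^{|A|}$ and so lies in $\psvJ_1$. For (ii) $\Rightarrow$ (iii), observe that $U_1$ itself satisfies $xy=yx$ and $x^2=x$, and satisfaction of identities is preserved by direct products and division. For (iii) $\Rightarrow$ (i), take the syntactic morphism $\phi\colon A^*\to\synt(L)$; commutativity lets us permute letters of $w$ without changing $\phi(w)$, and idempotence lets us eliminate duplicate letters, so $\phi(w)$ depends only on $\alpha(w)$, and therefore so does membership in $L$.

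For (i) $\Rightarrow$ (iv), the language $L_a = \{w\in A^*\mid a\in\alpha(w)\}$ is defined by the sentence $\exists x\, Q_a x$. If $L$ depends only on $\alpha(w)$, then $L$ is a (finite) boolean combination of the $L_a$ for $a\in A$, hence definable in the logic.

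The slightly less routine direction is (iv) $\Rightarrow$ (i), because it requires showing a genuine expressive limitation of the equality-free logic. I would proceed by an Ehrenfeucht--Fraïssé style argument: claim that if $\alpha(w_1)=\alpha(w_2)$, then $w_1$ and $w_2$ satisfy exactly the same sentences. Because there is no equality and no order, the only information a variable assignment conveys about a chosen position is the letter sitting there. So for any $k$-round game, Duplicator's strategy is to respond to a Spoiler move in one word by picking any position in the other word carrying the same letter; such a position exists precisely because $\alpha(w_1)=\alpha(w_2)$. An easy induction on formula structure, using this observation for the quantifier step, shows that every formula has the same truth value on $w_1$ and $w_2$, which yields (i). This EF-type argument is the one step where care is needed; the rest of the theorem is essentially a compilation of remarks already made in the subsection.
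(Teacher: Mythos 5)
Your proof is correct, and it works with the same circle of ideas as the paper's (deliberately informal) treatment, but it is organized differently in one substantive respect. The paper closes the algebraic loop by arguing (iii) $\Rightarrow$ (ii): it invokes the division $M \prec \prod_{m\in M}\synt(\phi^{-1}(m))$ (the fact that every pseudovariety is generated by the syntactic monoids it contains) and then checks that each $\phi^{-1}(m)$ has the $\alpha$-property when $M$ is idempotent and commutative. You instead prove (iii) $\Rightarrow$ (i) directly on the syntactic morphism via the sort-and-deduplicate normal form, which bypasses the syntactic-generation lemma entirely and is the more elementary route (that normal-form computation is exactly the one the paper spells out when it revisits $\psvJ_1$ among its examples of varieties in Section 2). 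Your (i) $\Rightarrow$ (ii) is also more explicit than the paper's appeal to ``basic properties of the syntactic monoid'': recognizing $L$ by the morphism $\alpha\colon A^*\to(\mathcal{P}(A),\cup)\cong U_1^{|A|}$ and then quoting that $\synt(L)$ divides any recognizer is exactly right (only note that the isomorphism should send a subset $B$ to the tuple that is $0$ on the coordinates in $B$, so that union becomes coordinatewise multiplication in $U_1^{|A|}$). Finally, the paper dismisses (i) $\Leftrightarrow$ (iv) with ``it is easy to see''; your Ehrenfeucht--Fra\"{\i}ss\'e (or formula-induction) argument for (iv) $\Rightarrow$ (i) supplies the missing justification and correctly isolates the key point: with only the unary predicates $Q_a$, no equality and no order, Duplicator need only match letters, which is possible precisely when $\alpha(w_1)=\alpha(w_2)$.
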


\subsection{Piecewise-testable languages}\label{SW:subsection:j}

Suppose that instead of testing for occurrences of individual letters
in a word, we test for occurrences of non-contiguous sequences of
letters, or \emph{subwords}\index{subword}.  More precisely, we say that $v=a_1\cdots
a_k$, where each $a_i\in A$, is a subword of $w\in A^*$ if
\begin{displaymath}
w=w_0a_1w_1\cdots a_kw_k
\end{displaymath}
for some $w_0,\ldots, w_k\in A^*$. We also say that the empty word 1
is a subword of every word in $A^*$. The set of all words in $A^*$
that contain $v$ as a subword is thus the regular language
\begin{displaymath}
L_v=A^*a_1A^*\cdots a_kA^*.
\end{displaymath}
We say that a language is \emph{piecewise-testable}\index{language!piecewise-testable} if it belongs to
the boolean algebra generated by the $L_v$.

\subsubsection{Decidability and equational description}

It is not clear that we can effectively decide whether a given regular
language is piecewise testable.  For the language class of
\ref{SW:subsection:j_1}, we were able to settle this question by in
effect observing that for every finite alphabet $A$ there were only
finitely many languages of the class in $A^*$. For piecewise-testable
languages, this is no longer the case.  It is possible,
however, to obtain an algebraic characterization of the
piecewise-testable languages, and this leads to a fairly efficient
decision procedure.  We first note two relatively easy-to-prove facts.
First, the monoids $\synt(L_v)$ are all \emph{${\mathcal J}$-trivial}\index{J-trivial@$\mathcal{J}$-trivial}: This
means that if $m,m',s,t,s',t'\in \synt(L_v)$ are such that $m=s'm't'$,
$m'=smt$, then $m=m'$. Second, the family $\psvJ$\index{J@$\psvJ$} of ${\mathcal
J}$-trivial monoids forms a pseudovariety.  It follows then that the
syntactic monoid of every piecewise-testable language is ${\mathcal
J}$-trivial.  A deep theorem, due to I. Simon~\cite{Simon:1975}, shows that
the converse is true as well: Every language recognized by a finite
${\mathcal J}$-trivial monoid is piecewise-testable.
 
Clearly, we can effectively determine, from the multiplication table
of a finite monoid $M$, all the pairs $(m,m')\in M\times M$ such that
$m'=smt$ for some $s,t\in M$, and thus determine if $M\in\psvJ$.
This gives us an algebraic decision procedure for
piecewise-testability.
 
Can the pseudovariety $\psvJ$ be defined by identities in the same
manner as $\psvJ_1$?  The short answer is `no'.  This is because
satisfaction of an identity $u=v$, where $u$ and $v$ are words over an
alphabet $\{x,y,\ldots\}$ of variables, is preserved by \emph{infinite}
direct products as well as finite direct products and divisors.
Consider now the monoids
\begin{displaymath}
M_j=\{1,m,m^2,\ldots,m^j=m^{j+1}\}.
\end{displaymath}
Each $M_j\in \psvJ$, but $\prod_{j\geq 1} M_j$ contains an
isomorphic copy of the infinite cyclic monoid $\{1,a,a^2,\ldots\}$,
which has every finite cyclic group as a quotient.  Thus every
identity satisfied by all the monoids in $\psvJ$ is also satisfied by
all the finite cyclic groups, which are not in $\psvJ$.
 
In spite of this, we can still obtain an equational description of
$\psvJ$, provided we adopt an expanded notion of what constitutes an
identity.  If $s$ is an element of a finite monoid $M$, then we denote
by $s^{\omega}$ the unique idempotent power of $s$. We will allow identities in
which the operation $x\mapsto x^{\omega}$ is allowed to appear; these are special instances of what we will call \emph{profinite identities.}  It is not hard to see that satisfaction of these new identities is preserved under finite direct products and quotients, and thus every
set of such identities defines a pseudovariety.

For example,
the profinite identity
\begin{displaymath}
x^{\omega}=xx^{\omega}
\end{displaymath}
is satisfied by precisely the finite monoids that contain no nontrivial groups.  This is the pseudovariety of \emph{aperiodic monoids}\index{monoid!aperiodic}, which we denote $\psvA$\index{A@$\psvA$}.  Similarly, the profinite identity
\begin{displaymath}
x^{\omega}=1
\end{displaymath}
defines the pseudovariety $\psvG$ of finite groups.  As was the case with $\psvJ$, neither of these pseudovarieties can be defined by a set of ordinary identities.

 It can be shown
that the pseudovariety $\psvJ$ of finite ${\mathcal J}$-trivial monoids is
defined by the pair of profinite identities
\begin{align}
(xy)^{\omega}x&=(xy)^{\omega}\nonumber \\
y(xy)^{\omega}&=(xy)^{\omega}\nonumber,
\end{align}
or, alternatively, by the pair
\begin{align}
(xy)^{\omega}&=(yx)^{\omega}\nonumber \\
xx^{\omega}&=x^{\omega}\nonumber .
\end{align}

\subsubsection{Connection with logic}\label{SW:sssec: sigma1}

Let us supplement the
first-order logic for words that we introduced earlier with atomic
formulas of the form $x<y$, which is interpreted to mean `position $x$
is strictly to the left of position $y$'.  The language $L_v$, where
$v=a_1\cdots a_k$, is defined by the sentence
\begin{displaymath}
\exists x_1\exists x_2\cdots \exists x_k(x_1<x_2\wedge
x_2<x_3\wedge\cdots\wedge x_{k-1}<x_k\wedge
Q_{a_1}x_1\wedge\cdots\wedge Q_{a_k}x_k).
\end{displaymath}
This is a $\Sigma_1$-sentence\index{Sigma@$\Sigma_1$-sentence}---one in which all the quantifiers are
in a single block of existential quantifiers at the start of the
sentence.  It follows easily that a language is piecewise-testable if
and only if it is defined by a boolean combination of
$\Sigma_1$-sentences.
 
The following theorem summarizes the results of this subsection.
 
\begin{theorem}
    Let $A$ be a finite alphabet and let $L\subseteq A^*$ be a regular
    language.  The following are equivalent.
    \begin{enumerate}
	\item[(i)]  $L$ is piecewise testable.
	\item[(ii)] $\synt(L)\in\psvJ$, that is, $\synt(L)$ is ${\mathcal J}$-trivial.
	\item[(iii)] $\synt(L)$ satisfies the identities $(xy)^{\omega}=(yx)^{\omega}$
	and $xx^{\omega}=x^{\omega}$.
	\item[(iv)] $\synt(L)$ satisfies the identities $(xy)^{\omega}x=y(xy)^{\omega}$.
	\item[(v)] $L$ is definable by a boolean combination of
	$\Sigma_1$-sentences over the predicates $<$ and $Q_a$, $a\in A$.
    \end{enumerate}
\end{theorem}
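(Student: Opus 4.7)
The plan is to establish the five equivalences by closing the cycle $(\mathrm{i}) \Leftrightarrow (\mathrm{v})$, $(\mathrm{i}) \Rightarrow (\mathrm{ii})$, $(\mathrm{ii}) \Rightarrow (\mathrm{i})$, and $(\mathrm{ii}) \Leftrightarrow (\mathrm{iii}) \Leftrightarrow (\mathrm{iv})$, leaning on the observations already developed in this subsection.

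For $(\mathrm{i}) \Rightarrow (\mathrm{ii})$, I would first verify, by direct inspection of the minimal DFA of $L_v$, that each $\synt(L_v)$ is $\mathcal{J}$-trivial. Then, since the text has noted that $\psvJ$ is a pseudovariety and since the syntactic monoid of a boolean combination of languages divides the direct product of the syntactic monoids of its components, every piecewise-testable language has $\mathcal{J}$-trivial syntactic monoid. The converse $(\mathrm{ii}) \Rightarrow (\mathrm{i})$ is Simon's theorem, which we invoke from \cite{Simon:1975}; this is the deep step and I would not attempt to reprove it, only to cite it carefully and explain its role in closing the loop.

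For $(\mathrm{i}) \Leftrightarrow (\mathrm{v})$, the forward implication is immediate from the displayed $\Sigma_1$-sentence defining $L_v$. For the converse, I would put the quantifier-free part of a $\Sigma_1$-sentence in disjunctive normal form, distribute the existentials over the disjunction, and reduce to counting, for each conjunction of literals $Q_a x_i$, $\neg Q_a x_i$, $x_i < x_j$, $\neg(x_i < x_j)$, all possible orderings of the variables consistent with the constraints. Each such ordering asserts the existence of a subword pattern (a single $L_v$-membership), while the letter-negations restrict the alphabet; a straightforward inclusion-exclusion then expresses the defined language as a boolean combination of $L_v$'s.

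For $(\mathrm{ii}) \Leftrightarrow (\mathrm{iii})$ and $(\mathrm{ii}) \Leftrightarrow (\mathrm{iv})$, I would argue first that any $\mathcal{J}$-trivial finite monoid satisfies the stated profinite identities: since $x^\omega$ and $xx^\omega$ lie in the same $\mathcal{J}$-class (they generate the same principal ideal in the cyclic submonoid generated by $x$), they must be equal; similarly $(xy)^\omega = x(yx)^\omega y$ shows $(xy)^\omega \,\mathcal{J}\, (yx)^\omega$, giving $(xy)^\omega = (yx)^\omega$; and $(xy)^\omega x \,\mathcal{J}\, y(xy)^\omega$ by an analogous ideal computation. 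Conversely, if $M$ satisfies the identities and $m \,\mathcal{J}\, m'$ with $m = s'm't'$ and $m' = smt$, iterating yields $m = (s's)^k\, m\, (tt')^k$ for all $k$, and in particular $m = (s's)^\omega\, m\, (tt')^\omega$; the identities then allow commutation and collapse of these $\omega$-powers to force $m = m'$. The manipulation here is routine but requires care in choosing how to introduce the auxiliary variables.

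The principal obstacle in the entire proof is $(\mathrm{ii}) \Rightarrow (\mathrm{i})$, Simon's theorem; all the other implications are either definitional, folkloric equational manipulations, or standard translations between logic and the subword ordering.
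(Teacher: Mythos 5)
Your decomposition is the same as the paper's: the paper, too, disposes of (i)$\Rightarrow$(ii) by noting that each $\synt(L_v)$ is ${\mathcal J}$-trivial and that $\psvJ$ is a pseudovariety, cites Simon's theorem for (ii)$\Rightarrow$(i), asserts the equational characterizations, and treats (i)$\Leftrightarrow$(v) as an easy translation between boolean combinations of $\Sigma_1$-sentences and boolean combinations of the $L_v$ (your DNF/ordering argument is a correct way to fill in that translation, with the remark that negative literals $\neg Q_a x$ are best expanded as $\bigvee_{b\neq a}Q_bx$, after which each disjunct is literally a single $L_v$-membership and no inclusion--exclusion is needed).

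The one place where your extra detail does not quite close as written is the implication from the identities back to ${\mathcal J}$-triviality. From $m=s'm't'$, $m'=smt$ you correctly get $m=(s's)^\omega m(tt')^\omega$, and the identity $(xy)^\omega=(yx)^\omega$ gives you that $e=(s's)^\omega=(ss')^\omega$ commutes with $s$ and $s'$ (and dually for $f=(tt')^\omega$); but if you now try to ``commute and collapse'' directly you only recover $m'=em'f$, i.e.\ you go in a circle. The standard repair is to first derive the chain identities $y(xy)^\omega=(xy)^\omega=(xy)^\omega x$ from your pair: with $e=(xy)^\omega$ one has $ex=xe$ and $ey=ye$, aperiodicity gives $e\,xy=yx\,e=e$, so $exe$ and $eye$ are mutually inverse units of the monoid $eMe$, and since aperiodicity makes the group of units of $eMe$ (the ${\mathcal H}$-class of $e$) trivial, $ex=ey=e$. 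From the chain identities, $a=bu$, $b=av$ give $a=a(vu)^\omega=a(vu)^\omega v=b$, so the monoid is ${\mathcal R}$-trivial, dually ${\mathcal L}$-trivial, hence ${\mathcal J}$-trivial because ${\mathcal J}={\mathcal D}$ in finite monoids. Also note a small slip in the forward direction: $x(yx)^\omega y$ equals $(xy)^{\omega+1}$, not $(xy)^\omega$, in a general finite monoid; this is harmless because you may either invoke the aperiodicity identity you have just established, or simply observe that $(xy)^\omega\in M(yx)^\omega M$ and symmetrically, which already yields the needed ${\mathcal J}$-equivalence. With these adjustments the proposal is complete and matches the paper's route, the deep step (ii)$\Rightarrow$(i) being, as in the paper, Simon's theorem quoted from \cite{Simon:1975}.
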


\subsection{Pseudovarieties of monoids and varieties of languages}\label{SW:sec: varieties}

We tentatively extract a few general principles from the preceding
discussion.  These will be explored at length in the subsequent
sections.  Given a pseudovariety $\psvV$ of finite monoids and a
finite alphabet $A$, we form the family $A^*{\mathcal V}$ of all regular
languages $L\subseteq A^*$ for which $\synt(L)\in \psvV$. We can think
of ${\mathcal V}$ itself as an operator that associates to each finite
alphabet $A$ a family of regular languages over $A$. ${\mathcal V}$ is
called a \emph{variety of languages}\index{variety of languages}.  (We will give a very different, although equivalent definition of this term in our formal discussion in Section 2.) From our earlier observation
that pseudovarieties are generated by the syntactic monoids they
contain, it follows that if $\psvV$ and $\psvW$ are distinct
pseudovarieties, then the associated varieties of languages ${\mathcal V}$
and ${\mathcal W}$ are also distinct.  Thus \emph{there is a one-to-one
correspondence between varieties of languages and pseudovarieties of
finite monoids.}
 
Often we are interested in the following sort of decision problem:
Given a regular language $L\subseteq A^*$, does it belong to some
predefined family ${\mathcal V}$ of regular languages, for example, the
languages definable in some logic?  If ${\mathcal V}$ forms a variety of
languages, then we can answer the question if we have some effective
criterion for determining if a given finite monoid belongs to the
corresponding pseudovariety $\psvV$.  (The converse is true as well: if
we could decide the question about membership in the variety of
languages, we would be able to decide membership in $\psvV$.)

\emph{Pseudovarieties are precisely the families of finite monoids
defined by sets of profinite identities.} For the time being this
assertion---a theorem due to Reiterman--- will have to remain somewhat
vague, since we haven't even come close to saying what a
profinite identity actually is!  Such equational characterizations of
pseudovarieties are frequently the source of the decision procedures
discussed above.

If ${\mathcal V}$ is a variety of languages, then, as we have seen, each
$A^*{\mathcal V}$ is closed under boolean operations.  Observe further
that if $L\in A^*{\mathcal V}$ and $v\in A^*$, then both of the \emph{quotient}\index{language!quotient} languages
\begin{align}
v^{-1}L&=\{w\in A^* \mid vw\in L\}\nonumber\\
Lv^{-1}&=\{w\in A^*\mid wv\in L\}\nonumber
\end{align}
are in $A^*{\mathcal V}$, because any monoid recognizing $L$ also
recognizes the quotients.  For the same reason, if $\phi\colon B^*\to A^*$ is a
morphism, $\phi^{-1}(L)$ is in $B^*{\mathcal V}$. An important result,
due to Eilenberg, showed that these closure properties characterize
varieties of languages.

\begin{theorem}\label{SW:thm:eilenberg}
    Let ${\mathcal V}$ assign to each finite alphabet $A$ a family
    $A^*{\mathcal V}$ of regular languages in $A^*$.  ${\mathcal V}$ is a
    variety of languages if and only if the following three conditions
    hold:
    \begin{enumerate}
	\item[(i)] Each $A^*{\mathcal V}$ is closed under boolean operations.
	
	\item[(ii)] If $L\in A^*{\mathcal V}$ and $w\in A$, then $w^{-1}L\in
	A^*{\mathcal V}$, and $Lw^{-1}\in A^*{\mathcal V}$.

	\item[(iii)] If $L\in A^*{\mathcal V}$ and $\phi\colon B^*\to A^*$ is a
	morphism of finitely generated free monoids, then
	$\phi^{-1}(L)\in B^*{\mathcal V}$.
    \end{enumerate}
\end{theorem}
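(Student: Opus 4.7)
The plan is to prove each direction of the equivalence separately. The ``only if'' direction is relatively straightforward: if $\mathcal{V}$ arises from a pseudovariety $\psvV$, conditions (i)--(iii) reduce to standard facts about syntactic monoids. Namely, $\synt(L_1\cap L_2)$ and $\synt(L_1\cup L_2)$ both divide $\synt(L_1)\times\synt(L_2)$; $\synt(L)=\synt(A^*\setminus L)$; the syntactic monoid of any quotient $w^{-1}L$ or $Lw^{-1}$ divides $\synt(L)$ because any monoid recognizing $L$ also recognizes its quotients; and for a morphism $\phi\colon B^*\to A^*$, composing $\phi$ with the syntactic morphism of $L$ yields a morphism recognizing $\phi^{-1}(L)$, so $\synt(\phi^{-1}(L))\prec\synt(L)$. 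Closure of $\psvV$ under division and finite products then delivers (i)--(iii).

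For the converse, given $\mathcal{V}$ satisfying (i)--(iii), I would set $\psvV$ to be the pseudovariety generated by $\{\synt(L)\mid L\in A^*\mathcal{V},\ A\text{ a finite alphabet}\}$ and prove the equivalence $L\in A^*\mathcal{V}\iff\synt(L)\in\psvV$. The forward implication is immediate from the definition. The key auxiliary fact needed for the reverse implication is: whenever $K\in A^*\mathcal{V}$ has syntactic morphism $\eta\colon A^*\to\synt(K)$, every syntactic class $\eta^{-1}(m)$ also lies in $A^*\mathcal{V}$. I would prove this by expressing $\eta^{-1}(\eta(u))$ as a finite intersection, indexed by representatives of the (finitely many) distinct biquotients $s^{-1}Kt^{-1}$, of either that biquotient or its complement, depending on whether $sut\in K$; applying condition~(ii) iteratively letter by letter produces arbitrary word-biquotients from single-letter ones, and condition~(i) closes the intersection under boolean operations.

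To finish, suppose $\synt(L)\in\psvV$. The definition of the generated pseudovariety supplies $L_i\in A_i^*\mathcal{V}$ with $\synt(L)\prec\synt(L_1)\times\cdots\times\synt(L_n)$, realized as a surjection $\pi\colon N'\to\synt(L)$ from a submonoid $N'\leq\prod_i\synt(L_i)$. I would lift the syntactic morphism $\psi\colon A^*\to\synt(L)$ of $L$ to $\tilde\psi\colon A^*\to N'$ by choosing, for each $a\in A$, any $\pi$-preimage of $\psi(a)$; project coordinate-wise to $\tilde\psi_i\colon A^*\to\synt(L_i)$; and then lift each $\tilde\psi_i$ across the surjection $\eta_i\colon A_i^*\to\synt(L_i)$ to a morphism of free monoids $\alpha_i\colon A^*\to A_i^*$ satisfying $\eta_i\circ\alpha_i=\tilde\psi_i$. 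Then $L$ is a finite boolean combination of sets $\alpha_i^{-1}(\eta_i^{-1}(m))$; each $\eta_i^{-1}(m)$ is in $A_i^*\mathcal{V}$ by the auxiliary fact, each pullback $\alpha_i^{-1}(\eta_i^{-1}(m))$ lies in $A^*\mathcal{V}$ by~(iii), and $L$ itself does by~(i). The main obstacle, motivating both the auxiliary fact and the repeated lifting, is that condition~(iii) applies only to morphisms between free monoids, so the abstract division $\synt(L)\prec\prod\synt(L_i)$ must first be converted into an explicit chain of free-monoid morphisms before (iii) can be invoked.
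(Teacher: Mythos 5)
Your proof is correct, but it takes a genuinely different route from the one in this chapter. The paper deliberately does not prove Theorem~\ref{SW:thm:eilenberg} by the classical argument you give: it first establishes the profinite-equational characterization of lattices of regular languages (Theorem~\ref{SW:thm lattice}, via compactness of $\widehat{A^*}$), refines it to quotient-closed boolean algebras and to $\mathcal{C}$-varieties (Corollary~\ref{SW:cor: boolean and quotients}, Theorem~\ref{SW:thm: identities and varieties}), and then in Section~\ref{SW:sec:eilenbergreiterman} deduces Eilenberg's theorem: a class satisfying (i)--(iii) is defined by a set $E$ of profinite identities, the finite monoids satisfying $E$ form a pseudovariety $\psvV$, and $L\in A^*\mathcal{V}$ if and only if $\synt(L)\in\psvV$, with uniqueness coming from Lemma~\ref{SW:lemma: syntactic generation}. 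You instead argue directly, in the style of Eilenberg and Pin (which the paper cites but does not reproduce): the easy direction from divisibility facts about syntactic monoids; the converse by taking $\psvV$ generated by the syntactic monoids of $\mathcal{V}$, showing each syntactic class $\eta^{-1}(m)$ lies in $A^*\mathcal{V}$ as a finite boolean combination of two-sided quotients $s^{-1}Kt^{-1}$ (finitely many since $K$ is regular, single-letter quotients iterated via (ii)), and converting the abstract division $\synt(L)\prec\prod_i\synt(L_i)$ into concrete free-monoid morphisms $\alpha_i$ with $\eta_i\circ\alpha_i=\tilde\psi_i$, so that (iii) and (i) exhibit $L$ as a finite union of intersections of the languages $\alpha_i^{-1}(\eta_i^{-1}(m))$; that lifting step is exactly where the classical proof does its work, and your execution of it is sound. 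What your approach buys is elementarity and self-containment: no profinite completion, no compactness, only finite combinatorics of syntactic monoids and lifting through free monoids. What the paper's approach buys is uniformity and generality: one compactness argument yields equational descriptions of lattices, quotient-closed lattices, positive varieties and $\mathcal{C}$-varieties, and delivers Reiterman's theorem simultaneously with Eilenberg's, at the price of the topological apparatus of $\widehat{A^*}$.
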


This theorem can be quite useful for showing, in the absence of an
explicit algebraic characterization of the corresponding pseudovariety
of monoids, that a combinatorially or logically defined family of
languages forms a variety.  We conclude from this that such an algebraic
characterization in principle exists.

Although it is somewhat involved, Theorem~\ref{SW:thm:eilenberg} is quite elementary, see \cite{Eilenberg:1976,Pin:1986}. In the next section we will revisit the definition of varieties of languages and profinite identities in a way that will permit
us to prove both Theorem~\ref{SW:thm:eilenberg} and Reiterman's theorem in a single argument.

Before we proceed with this program, we briefly describe certain
classes of regular languages which admit syntactic characterizations
(that is: characterizations in terms of syntactic monoids and
syntactic morphisms), but which are not varieties in the sense described above.

\subsection{Extensions}\label{SW:sec: extensions}

Interesting classes of regular languages frequently admit
characterizations in terms of their syntactic monoids and syntactic
morphisms, and the theory sketched above is meant to provide a formal
setting for this algebraic classification of regular languages.
However, the framework is not adequate to capture all the examples of
interest that arise.  Here we give three examples.

Consider, first, the family $A^*{\mathcal K}_1$ of languages $L\subseteq
A^*$ for which membership of $w$ in $L$ is determined by the leftmost letter
of $w$. This class forms a boolean algebra closed under quotients, but
is not a variety of languages.  To see this, note that $a(a+b)^*\in\{a,b\}^*{\mathcal K}_1$ and $c^*a(a+b+c)^*\notin\{a,b,c\}^*{\mathcal K}_1$, even though the two languages have the same
syntactic monoid.  Alternatively, we can reason using
Theorem~\ref{SW:thm:eilenberg}, and note that the second language is an
inverse homomorphic image of the first, and thus ${\mathcal K}_1$ fails to
be a variety of languages. More generally, we can define the family 
$A^*{\mathcal K}_d$ of languages $L$ for which membership of $w$ in $L$ depends only on the leftmost $\min(|w|,d)$ letters of $w$, as well as
$A^*{\mathcal K}=\bigcup_{d>0}A^*{\mathcal K}_d$.  All these families are closed under boolean operations and quotients, yet fail to be varieties of languages.

We obtain an example with a similar flavor if we supplement the
predicate logic described earlier by atomic formulas $x\equiv_q 0$,
where $q>1$, which is interpreted to mean that position $x$ is
divisible by $q$. (We assume that positions in a word are numbered,
beginning with $1$ for the leftmost position.)  We denote by $A^*\mathcal{QA}$\index{QA@$\mathcal{QA}$} the family of languages over $A^*$ definable in this logic.
Languages in $A^*\mathcal{QA}$ arise as the regular languages definable in
the circuit complexity class $AC^0$\index{AC0@$AC^0$} (see \cite{Barrington&Compton&Straubing&Therien:1992}). Each $A^*\mathcal{QA}$ is a boolean
algebra closed under quotients, however $\mathcal{QA}$ is not a variety of
languages: To see this, consider the morphism $\{a,b\}^*\to\{a\}^*$ that maps
$a$ to $a$ and $b$ to the empty string.  The set
$\{a^{2n} \mid n\geq 0\}$ is in $\{a\}^*\mathcal{QA}$, as it is defined by by
the sentence
\begin{displaymath}
\forall x(\forall y(y\leq x)\rightarrow x\equiv_2 0).
\end{displaymath}
However the inverse image of this language under the morphism is
the set of strings over $\{a,b\}$ with an even number of occurrences
of $a$, and it is possible to prove by model-theoretic means that this
language is not definable in our logic.

Finally, consider the family $A^*{\mathcal J}^+$\index{J+@$\mathcal{J}^+$} of languages definable by
$\Sigma_1$-sentences over the predicates $<$ and $Q_a$ with $a\in A$
(in contrast to the languages definable by boolean combinations of
$\Sigma_1$-sentences, which we considered earlier).  It is easy to see
that if $L\in A^*{\mathcal J}^+$ and $w\in L$, then $L_w\subseteq L$. This
readily implies that $A^*{\mathcal J}^+$ is not closed under complement,
since, for example, the complement of $(a+b)^*a(a+b)^*$ does not have
this property.  Thus ${\mathcal J}^+$ is not a variety of languages.  On
the other hand, it does satisfy many of the properties of varieties of
languages: It is closed under finite unions and intersections,
quotients, and inverse images of morphisms between free monoids.

It turns out that each of these three examples admits an algebraic
characterization in terms of classes that are very much like
pseudovarieties.  For our first example, in which membership of a word
in a language is determined by the leftmost letter, the correct
generalization of pseudovarieties was already known to Eilenberg: One
looks not at the syntactic monoid of a language $L$, but at the image
of the set $A^+$ of nonempty words under the syntactic morphism.  This
is called the \emph{syntactic semigroup}\index{semigroup!syntactic} of $L$. We can define
\emph{pseudovarieties of finite semigroups}\index{pseudovariety!of finite semigroups} just as we defined
pseudovarieties of finite monoids.  Then $L\in A^*{\mathcal K}_1$ if and
only if its syntactic semigroup belongs to the pseudovariety of
semigroups defined by the identity $xy=x$. While ${\mathcal K}_1$ is not
closed under inverse images of morphisms between free monoids, it is
closed if we restrict ourselves to \emph{non-erasing} morphisms\index{morphism!non-erasing}---those that map
every letter to a nonempty word.

We can use a similar method to characterize the class $\mathcal{QA}$. Once
again we look not just at the syntactic monoid of a language $L$, but
at the additional structure provided by the syntactic morphism
$\eta_L$. It is known that $L\in A^*\mathcal{QA}$ if and only if for every
$k\geq 0$, $\eta_L(A^k)$ contains no nontrivial groups \cite{Barrington&Compton&Straubing&Therien:1992}.  The family
${\bf QA}$\index{QA@{\bf QA}} of morphisms from free monoids onto finite monoids with
this property forms a kind of pseudovariety with respect to
appropriately modified definitions of direct product and division.  An
equational characterization of {\bf QA} is provided by the identity
\begin{displaymath}
(x^{\omega-1}y)^{\omega}=(x^{\omega-1}y)^{\omega+1},
\end{displaymath}
where the identity is interpreted in the following sense: $\phi\in
{\bf QA}$ if and only if for all words $u$ and $v$ \emph{of the same
length,} $x=\phi(u)$ and $y=\phi(v)$ satisfy the identity.  $\mathcal{QA}$
is closed under inverse images of morphisms $f\colon B^*\to A^*$ such that
$f(B)\subseteq A^k$ for some $k\geq 0$; these are called \emph{length
multiplying morphisms}\index{morphism!length multiplying}. In fact, these last two examples are
instances of a single phenomenon: Families of morphisms
$\phi\colon A^*\to M$ onto finite monoids that form pseudovarieties with
respect to some underlying composition-closed class ${\mathcal C}$ of
morphisms between free monoids.

For the example ${\mathcal J}^+$ of $\Sigma_1$-definable languages, the
algebraic characterization involves a different generalization of
pseudovarieties.  Here the additional structure on the syntactic
monoid is provided by the embedding of $\eta_L(L)$ in $\synt(L):$ If
$m_1,m_2\in M$ then we say $m_1\leq_L m_2$ if
\begin{multline}
\{(s,t)\in \synt(L)\times \synt(L) \mid sm_2t\in \eta_L(L)\} \\
\subseteq \{(s,t)\in \synt(L)\times \synt(L) \mid sm_1t\in \eta_L(L)\}.\nonumber
\end{multline}
This gives a partial order on $\synt(L)$ compatible with multiplication (see Section 4.4 in  Chapter~\ref{Chap1}).
We then find that $L\in A^*{\mathcal J}^+$ if and only if this ordered
syntactic monoid satisfies the \emph{inequality} $x\leq 1$ for each element $x$. The family
of partially-ordered monoids satisfying this inequality is a
pseudovariety of ordered finite monoids---it is closed under finite
direct products, and order-compatible submonoids and quotients.  The theory of pseudovarieties of ordered monoids and the corresponding \emph{positive varieties}\index{variety!positive} of languages is due to Pin~\cite{Pin:1995}

In the next section we will formally develop the framework that gives
the correspondence between pseudovarieties and language varieties, and
the definition by profinite identities, in a very general setting.
Pseudovarieties of finite monoids, as well as all the generalizations
mentioned above, will appear as special cases.

\section{Equations, identities and families of languages}\label{SW:sec: eqs}

The original statement of Eilenberg's theorem dealt
exclusively with varieties of languages. Here we will show how to use a whole hierarchy
of increasingly complex equational characterizations of increasingly
structured families of languages. 
Before we describe these results, we need to give a quick 
introduction to the free profinite monoid and its connection to the 
theory of regular languages

\subsection{The free profinite monoid}\label{SW:sec: profinite monoid}

Say that a finite monoid $M$ \emph{separates} two words $u,v\in A^*$
if there exists a morphism $\phi\colon A^*\to M$ such that $\phi(u)
\ne \phi(v)$.  Note that if $u\ne v$, there always exists such a
monoid.  Indeed, for each $n\ge 1$, consider the quotient monoid $A^*/
A^{\ge n}$: it consists of the set of words of length less than $n$,
plus a zero, and each product with length at least $n$ (in $A^*$) is
equal to 0.  Then $A^*/ A^{\ge n}$ separates $u$ and $v$ if $n >
\max(|u|,|v|)$. We denote by $r(u,v)$ the minimum cardinality of a 
monoid separating $u$ and $v$.

The \emph{profinite distance}\index{distance!profinite}\index{profinite!distance} on $A^*$ is defined by letting $d(u,v)
= 2^{-r(u,v)}$ if $u\ne v$ and $d(u,u) = 0$.  One verifies easily that
$d$ is in fact an ultrametric distance (it satisfies the ultrametric
inequality $d(u,v) \le \max(d(u,w),d(v,w))$, stronger than the
triangle inequality), and the above discussion shows that the
resulting metric space is Hausdorff.

The topology thus defined on $A^*$ is not especially interesting: we get a discrete space,
where a sequence $(u_n)_n$ converges to a word $u$ if and only if
$(u_n)_n$ is ultimately equal to $u$\dots This can be verified using
the monoids $A^*/ A^{\ge n}$ described above.  There are, however,
non-trivial Cauchy sequences.  In fact, one can show the following.

\begin{proposition}
    A sequence $(u_n)_n$ is Cauchy if and only if, for each morphism 
    $\phi\colon A^*\to M$ into a finite monoid, the sequence 
    $(\phi(u_n))_n$ is ultimately constant.
\end{proposition}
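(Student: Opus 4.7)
The plan is to translate the Cauchy condition into a statement about separating monoids, and then handle the two directions separately, the converse resting on a finiteness argument. First, from $d(u,v)=2^{-r(u,v)}$ it follows that $(u_n)_n$ is Cauchy if and only if, for every integer $k\geq 1$, there exists $N$ such that for all $m,n\geq N$ no monoid of cardinality at most $k$ separates $u_m$ from $u_n$ (the case $u_m=u_n$ is vacuously included).

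The forward direction is then almost immediate. Given a morphism $\phi\colon A^*\to M$ with $|M|=k$, apply the reformulation at level $k$ to obtain $N$ such that for $m,n\geq N$ no monoid of size at most $k$ separates $u_m$ from $u_n$; in particular $M$ does not, so $\phi(u_m)=\phi(u_n)$, whence $(\phi(u_n))_n$ is ultimately constant.

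For the converse, fix $k\geq 1$. The key observation is that only finitely many morphisms $\phi\colon A^*\to N$ with $|N|\leq k$ need be considered: there are finitely many monoid structures on a set of size $\leq k$, and once a target $N$ is fixed, a morphism from $A^*$ is determined by its values on the finite set $A$, giving at most $k^{|A|}$ choices. Take representatives $\phi_1,\ldots,\phi_r$ of the resulting equivalence classes. By hypothesis, each $(\phi_i(u_n))_n$ stabilizes past some index $N_i$; set $N=\max_i N_i$. For $m,n\geq N$ no $\phi_i$ separates $u_m$ from $u_n$. Any morphism $\psi\colon A^*\to M'$ with $|M'|\leq k$ that separated them would, after corestriction to its image (still of size $\leq k$) and identification with one of the representatives, contradict this. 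Hence no monoid of size $\leq k$ separates $u_m$ from $u_n$, and $(u_n)_n$ is Cauchy.

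The main obstacle is precisely the finiteness claim in the converse direction: once one is comfortable that, for each $k$, only finitely many morphisms into monoids of size $\leq k$ need be inspected, the rest of the argument is routine. I would therefore state this counting explicitly rather than leave it implicit, since without it the equivalence would collapse to a tautology.
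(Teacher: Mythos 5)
Your proof is correct. The paper states this proposition without giving a proof (it is offered as something ``one can show''), and your argument is precisely the intended elementary verification: unwinding $d(u,v)=2^{-r(u,v)}$ into the separation condition at each cardinality $k$, and, for the converse, using the fact that over a finite alphabet $A$ there are---up to isomorphism of the target---only finitely many morphisms from $A^*$ into monoids of size at most $k$, since each is determined by its values on $A$. Your emphasis on making that finiteness count explicit is well placed, as it is the only non-routine point in the converse direction.
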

For instance, if $u$ is a word, then $(u^{n!})_n$ is a Cauchy sequence
(this can be deduced from the fact that its image under any morphism into
a finite monoid is ultimately constant), but it is non-trivial if
$u\ne 1$.  In topological terms, the uniform structure defined by the
profinite distance is non-trivial.

Using a classical construction from topology (analogous to the
construction of the real numbers from the rationals), we can now
consider the completion of $(A^*,d)$, denoted by $\widehat{A^*}$.  It
can be viewed as the quotient of the set of Cauchy sequences in $(A^*,d)$ by the relation identifying two sequences $(u_n)$ and $(v_n)$ if the
mixed sequence, alternating the terms of $(u_n)$ and $(v_n)$, is
Cauchy as well.  In particular, $A^*$ is naturally seen as a dense
subset of $\widehat{A^*}$.

The following results can be verified by elementary means.

\begin{proposition}\label{SW:elementary profinite}
    Let $A$ be an alphabet.
    \begin{conditions}
   \item  The multiplication operation $(u,v) \mapsto uv$ in
    $A^*$ is uniformly continuous.

    \item  Every morphism $\phi\colon A^* \to B^*$ between free monoids,
    and every morphism $\psi\colon A^* \to M$ from a free monoid to a
    finite monoid (equipped with the discrete distance) is uniformly
    continuous.
    
    \item $\widehat{A^*}$ is a compact space.
    \end{conditions}
\end{proposition}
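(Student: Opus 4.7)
The three statements all reduce to a careful use of the defining property of $r(u,v)$ together with the fact that homomorphisms compose.

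For (i), the plan is to show that multiplication is in fact $1$-Lipschitz for the ultrametric. Given a morphism $\phi\colon A^* \to M$ into a finite monoid, the identities $\phi(uv) = \phi(u)\phi(v)$ force $\phi(uv) = \phi(u'v')$ whenever $\phi(u) = \phi(u')$ and $\phi(v) = \phi(v')$. Consequently, any finite monoid that separates $uv$ from $u'v'$ must separate either $u$ from $u'$ or $v$ from $v'$, so $r(uv,u'v') \ge \min(r(u,u'),r(v,v'))$ and therefore $d(uv,u'v') \le \max(d(u,u'),d(v,v'))$. Uniform continuity follows immediately.

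For (ii), I would handle the two cases by the same composition trick. If $\psi\colon A^* \to M$ is a morphism to a finite monoid of cardinality $n$, then $\psi$ itself witnesses that any two words it fails to identify can be separated by a monoid of size $\le n$; contrapositively, $d(u,u') < 2^{-n}$ already forces $\psi(u) = \psi(u')$, which is uniform continuity to the discrete space $M$. For a morphism $\phi\colon A^* \to B^*$, given any morphism $\eta\colon B^* \to N$ into a finite monoid $N$, the composite $\eta\circ\phi$ is a morphism $A^*\to N$; so any $N$ separating $\phi(u)$ from $\phi(u')$ also separates $u$ from $u'$, which yields $r(\phi(u),\phi(u')) \ge r(u,u')$ and hence $d(\phi(u),\phi(u')) \le d(u,u')$, again $1$-Lipschitz.

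For (iii), the plan is the standard criterion: a complete metric space is compact if and only if it is totally bounded. Since $\widehat{A^*}$ is complete by construction and $A^*$ is dense in it, it suffices to show that $A^*$ itself is totally bounded. The main obstacle, and the one step that requires genuine input beyond the definitions, is producing a finite cover of $A^*$ by balls of prescribed radius $2^{-n}$. To this end I would introduce the equivalence relation $u \sim_n u'$ meaning that $\theta(u) = \theta(u')$ for every morphism $\theta$ from $A^*$ to a finite monoid of cardinality $\le n$. Up to isomorphism there are only finitely many monoids of cardinality $\le n$, and since $A$ is finite each such monoid admits only finitely many morphisms from $A^*$; taking the product of all these morphisms yields a single morphism $\Theta_n$ from $A^*$ into a finite monoid whose kernel equivalence is exactly $\sim_n$. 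Hence $\sim_n$ has finite index, and any two $\sim_n$-equivalent words $u,u'$ satisfy $r(u,u') > n$, so $d(u,u') \le 2^{-(n+1)}$. This produces the required finite cover and completes the proof.
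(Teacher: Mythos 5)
Your proof is correct, and since the paper simply asserts these facts ``can be verified by elementary means'' without giving an argument, your write-up supplies exactly the standard verification the authors leave to the reader: the $1$-Lipschitz estimates for multiplication and for morphisms follow from the inequalities $r(uv,u'v')\ge\min(r(u,u'),r(v,v'))$ and $r(\phi(u),\phi(u'))\ge r(u,u')$, and compactness follows from completeness plus total boundedness of $A^*$ via the finite-index congruences $\sim_n$ (note this is the one place where finiteness of $A$ is genuinely used). No gaps.
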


By a standard property of completions, it follows from
Proposition~\ref{SW:elementary profinite} (1) that the multiplication of
$A^*$ can be extended to $\widehat{A^*}$: the resulting monoid is
called the \emph{free profinite monoid on $A$}\index{profinite!monoid}\index{monoid!free profinite}.  Similarly,
Proposition~\ref{SW:elementary profinite} (3) shows that each morphism
$\phi\colon A^* \to B^*$ between free monoids (resp.  each morphism
$\psi\colon A^* \to M$ from a free monoid to a finite monoid) admits a
uniquely defined continuous extension, $\hat\phi\colon \widehat{A^*}
\to \widehat{B^*}$ (resp.  $\hat\psi\colon \widehat{A^*} \to M$).

For example, consider the Cauchy sequence $(u^{n!})_n$,  where $u\in A^*$, which we discussed above.  This represents an element of $\widehat{A^*}$, which we will denote $u^{\omega}$\index{omega!$\omega$-power}.  Observe that for any morphism $\phi$ from $A^*$ into a finite monoid, the sequence $\hat\phi(u^{n!})$ is ultimately constant and equal to the unique idempotent power of $\phi(u)$, so in the notation we introduced earlier we have, very conveniently,
\begin{displaymath}
\hat\phi(u^{\omega})=(\phi(u))^{\omega}.
\end{displaymath}
We can similarly define $u^{\omega -1}$ as the element of $\widehat{A^*}$ represented by the Cauchy sequence $\hat\phi(u^{{n!}-1})$.

Finally, we note the strong connection between regular languages and 
free profinite monoids.

\begin{proposition}\label{SW:prop: regular and profinite}
    Let $A$ be an alphabet and let $L\subseteq A^*$.
    \begin{conditions}
    \item $L$ is regular if and only if its topological closure in $\widehat{A^*}$, $\overline L$, is clopen
    (i.e., open and closed), if and only if $L = K \cap A^*$ for some
    clopen set $K \subseteq \widehat{A^*}$.
    
    \item If $L$ is regular and $u\in\widehat{A^*}$, then the 
    following are equivalent:
    \begin{enumerate}
    \item[(i)]$u \in \overline L$;
    
    \item[(ii)] $\hat\phi(u) \in \phi(L)$ for every morphism $\phi$ from $A^*$
    to a finite monoid;
    
    \item[(iii)]$\hat\phi(u) \in \phi(L)$ for every morphism $\phi$ from $A^*$
    to a finite monoid recognizing $L$;
    
    \item[(iv)] $\hat\eta(u) \in \eta(L)$ where $\eta$ is the syntactic morphism
    of $L$.
    \end{enumerate}
    \end{conditions}
\end{proposition}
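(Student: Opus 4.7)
For part (1) I will argue the cycle: $L$ regular $\Rightarrow$ $\overline L$ clopen $\Rightarrow$ $L=K\cap A^*$ for some clopen $K$ $\Rightarrow$ $L$ regular. For the first implication, pick a morphism $\phi\colon A^*\to M$ onto a finite monoid recognizing $L$, so $L=\phi^{-1}(\phi(L))$. By Proposition~\ref{SW:elementary profinite}(2), $\hat\phi\colon\widehat{A^*}\to M$ is continuous, and since $M$ is discrete the set $\hat\phi^{-1}(\phi(L))$ is clopen. I claim it equals $\overline L$: the inclusion $\overline L\subseteq\hat\phi^{-1}(\phi(L))$ is immediate because the right side is closed and contains $L$; for the reverse, any $u$ in the preimage is a limit of some sequence $u_n\in A^*$, continuity plus discreteness of $M$ force $\phi(u_n)=\hat\phi(u)\in\phi(L)$ eventually, so $u_n\in L$ eventually, giving $u\in\overline L$. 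The second implication is immediate with $K=\overline L$ once one checks $L=\overline L\cap A^*$; this holds because $A^*$ carries the discrete subspace topology (each word $w$ is separated from every other by the morphism into $A^*/A^{\ge|w|+1}$).

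For the third implication in part (1) I need the fact that every clopen subset of $\widehat{A^*}$ has the form $\hat\phi^{-1}(Y)$ for some morphism $\phi$ onto a finite monoid and some $Y\subseteq M$. By the very definition of the profinite topology, the sets $\hat\phi^{-1}(\{m\})$ form a subbasis; their finite intersections, obtained by taking the product morphism into a product of finite monoids, constitute a clopen basis. Since $\widehat{A^*}$ is compact by Proposition~\ref{SW:elementary profinite}(3), any clopen $K$ is open and compact, hence a finite union of basic clopens, which again reassembles into a single $\hat\psi^{-1}(Y)$ by merging the morphisms involved. Then $L=K\cap A^*=\psi^{-1}(Y)$ is recognized by $\psi$ and hence regular.

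For part (2) I will run the cycle $(i)\Rightarrow(ii)\Rightarrow(iii)\Rightarrow(iv)\Rightarrow(i)$. The implication $(i)\Rightarrow(ii)$ is the same continuity-plus-discreteness argument as above: a sequence $u_n\in L$ converging to $u$ yields $\hat\phi(u_n)=\hat\phi(u)$ eventually, and the left side lies in $\phi(L)$. The implications $(ii)\Rightarrow(iii)$ and $(iii)\Rightarrow(iv)$ are trivial, the latter because the syntactic morphism $\eta$ recognizes $L$. For $(iv)\Rightarrow(i)$, approximate $u$ by a sequence $u_n\in A^*$; then $\eta(u_n)\to\hat\eta(u)\in\eta(L)$, and discreteness of $\synt(L)$ makes $\eta(u_n)\in\eta(L)$ eventually. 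Since $\eta$ recognizes $L$, this means $u_n\in L$ eventually, so $u\in\overline L$.

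The one step that requires genuine care is the compactness-based classification of clopen subsets of $\widehat{A^*}$ used in the third implication of part (1); without it, one cannot recover a recognizing morphism from the mere existence of a clopen $K$. Once this is in hand, every other step reduces to a single observation, repeatedly applied: a continuous map from $\widehat{A^*}$ into a discrete finite space is locally constant, so convergent sequences are sent to eventually constant sequences, and density of $A^*$ in $\widehat{A^*}$ transfers finite-monoid information back and forth between $L$ and $\overline L$.
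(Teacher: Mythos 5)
Your proof is correct, and in fact the paper offers nothing to compare it with: this proposition is stated without proof, as part of the background material that is said to be verifiable by elementary means, so you have supplied an argument the text omits. Both of your cycles are sound, and the recurring continuity-plus-discreteness-plus-density argument is exactly what is needed for part (2) and for the first two implications of part (1). The one place where you lean on something that is not literally available is the phrase ``by the very definition of the profinite topology'' for the claim that the sets $\hat\phi^{-1}(\{m\})$ form a subbasis: in this chapter the topology on $\widehat{A^*}$ is defined as the metric completion of $(A^*,d)$, not as the initial topology induced by morphisms into finite monoids, so this identification deserves a short argument. Namely, since $A$ is finite there are, up to identification, only finitely many morphisms from $A^*$ to monoids of cardinality at most $k$, and one checks (by approximating $u,v$ by words and using that $d$ only takes values $0$ and $2^{-j}$) that $d(u,v)<2^{-k}$ holds if and only if $\hat\phi(u)=\hat\phi(v)$ for all such morphisms $\phi$; hence the open ball of radius $2^{-k}$ around $u$ equals $\hat\psi^{-1}(\hat\psi(u))$ for the product $\psi$ of this finite family, and the sets $\hat\phi^{-1}(\{m\})$ are indeed a basis of clopen sets. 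With that supplied, your compactness step (a clopen $K$ is compact and open, hence a finite union of basic clopens, reassembled through a single product morphism into $K=\hat\psi^{-1}(Y)$, whence $L=K\cap A^*=\psi^{-1}(Y)$ is regular) is exactly the right idea, and the rest of the proof stands as written.
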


\subsection{Equations and lattices of languages}

We begin our study of families of regular languages with the simplest such family: a lattice  of
languages over a fixed alphabet.  In this chapter, we define a
\emph{lattice of languages}\index{lattice!of languages} over an alphabet $A$ to be a set of
languages over $A$ which is closed under finite union and finite
intersection, and which contains $A^*$ and $\emptyset$ (respectively,
the union and the intersection of an empty family of languages).

A \emph{profinite equation on $A$}\index{profinite!equation}\index{equation!profinite} is a pair $(u,v)$ of elements of
$\widehat{A^*}$, usually denoted by $u \to v$.  If $u,v\in A^*$, the
equation is called \emph{explicit}\index{equation!profinite!explicit}\index{explicit!profinite equation}.  A language $L \subseteq A^*$ is
said to \emph{satisfy} the equation $u \to v$, written $L \vdash u
\to v$, if
\begin{displaymath}
u \in \overline L \Longrightarrow v \in \overline L.
\end{displaymath}

\begin{remark}
    It is important to note that $u$,  $v$ and the words in $L$ are all defined over the
    same alphabet $A$.  In contrast to the identities we encountered in Section 1, in this definition, the letters occurring in
    $u$ and $v$ are \emph{not} considered as variables, to be replaced by
    arbitrary elements.  We will formally define
    identities in Section~\ref{SW:sec: identities and varieties}.
\end{remark}

The notion of equation is particularly relevant for regular
languages.  The following results directly from Proposition~\ref{SW:prop:
regular and profinite}.

\begin{proposition}\label{SW:prop: elementary equations}
    Let $L\subseteq A^*$ be regular and let $u,v\in \widehat{A^*}$.
    \begin{conditions}
    
    \item If $u,v \in A^*$, then $L \vdash u \to v$ if and only
    if $u\in L \Longrightarrow v\in L$.
    
    \item If $\eta$ is the syntactic morphism of $L$, then $L
    \vdash u\to v$ if and only if $\hat\eta(u) \in \eta(L)
    \Longrightarrow \hat\eta(v) \in \eta(L)$.
    \end{conditions}
\end{proposition}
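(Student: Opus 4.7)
The plan is to derive both parts directly from Proposition~\ref{SW:prop: regular and profinite}, specifically from the equivalence (i) $\Leftrightarrow$ (iv) of its second part, which asserts that for a regular language $L$ and any $u \in \widehat{A^*}$, one has $u \in \overline L$ if and only if $\hat\eta(u) \in \eta(L)$. With this in hand, the definition of $L \vdash u \to v$ as $u \in \overline L \Rightarrow v \in \overline L$ becomes a statement about $\eta$.

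For part (2), I would simply apply the equivalence above to both $u$ and $v$: this rewrites $u \in \overline L \Rightarrow v \in \overline L$ as $\hat\eta(u) \in \eta(L) \Rightarrow \hat\eta(v) \in \eta(L)$, which is exactly the desired reformulation.

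For part (1), I would first specialize the equivalence to elements of $A^*$. For $u \in A^*$, the continuous extension $\hat\eta$ agrees with $\eta$, so $u \in \overline L$ iff $\eta(u) \in \eta(L)$. Since $\eta$ is the \emph{syntactic} morphism of $L$, we have $L = \eta^{-1}(\eta(L))$, so this condition further simplifies to $u \in L$. The same holds for $v \in A^*$, and combining these gives that $L \vdash u \to v$ is equivalent to $u \in L \Rightarrow v \in L$.

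There is no real obstacle in this argument: all the substantive work has been packaged into Proposition~\ref{SW:prop: regular and profinite}. The only points to verify are the trivial facts that $\hat\eta$ restricts to $\eta$ on $A^*$ and that the syntactic morphism satisfies $L = \eta^{-1}(\eta(L))$, both of which are standard.
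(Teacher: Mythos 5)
Your proposal is correct and matches the paper, which gives no separate argument but simply observes that the proposition ``results directly'' from Proposition~\ref{SW:prop: regular and profinite}; your use of the equivalence $u\in\overline L \Leftrightarrow \hat\eta(u)\in\eta(L)$, together with the standard facts that $\hat\eta$ restricts to $\eta$ on $A^*$ and that $L=\eta^{-1}(\eta(L))$, is exactly the intended derivation.
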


Let $E$ be a set of equations on $A$.  We denote by $\mathcalL(E)$ the set
of regular languages in $A^*$ which satisfy all the equations in $E$.
It is immediately verified that this set is closed under unions and
intersections.  Further, both $\emptyset$ and $A^*$ satisfy every equation. So $\mathcalL(E)$ is a lattice.  The main theorem of
this section states that all lattices of regular languages arise this 
way.

\begin{theorem}\label{SW:thm lattice}
    Let $\mathcalL$ be a class of regular languages in $A^*$. Then 
    $\mathcalL$ is a lattice if and only if there exists a set $E$ of 
    profinite equations on $A$ such that $\mathcalL = \mathcalL(E)$.
\end{theorem}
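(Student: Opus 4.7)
The easy direction has already been observed in the paragraph preceding the theorem: for any set $E$ of profinite equations, $\mathcal{L}(E)$ is closed under finite unions and intersections, and $\emptyset, A^*$ satisfy every equation trivially, so $\mathcal{L}(E)$ is a lattice. I will concentrate on the converse.

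Given a lattice $\mathcalL$ of regular languages over $A$, the natural choice is
\[
E = \{(u,v) \in \widehat{A^*} \times \widehat{A^*} \mid L \vdash u \to v \text{ for every } L \in \mathcalL\}.
\]
Then $\mathcalL \subseteq \mathcalL(E)$ by construction. The real work is the reverse inclusion. The strategy is to fix a regular language $K \in \mathcalL(E)$ and reconstruct it as a finite union of finite intersections of members of $\mathcalL$, using compactness of $\widehat{A^*}$ twice.

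First I would record the auxiliary fact that, for regular languages $L_1,L_2 \subseteq A^*$, the topological closure operator commutes with finite intersection and finite union: $\overline{L_1 \cap L_2} = \overline{L_1} \cap \overline{L_2}$ and similarly for union. This follows from Proposition~\ref{SW:prop: regular and profinite} by choosing a common morphism $\phi\colon A^* \to M$ recognizing both $L_i$, which gives $\overline{L_i} = \hat\phi^{-1}(\phi(L_i))$ and $\phi(L_1 \cap L_2) = \phi(L_1) \cap \phi(L_2)$. Combined with Proposition~\ref{SW:prop: regular and profinite}(1), this means that the closures $\overline L$ for $L \in \mathcalL$ form a family of clopen subsets of $\widehat{A^*}$ closed under finite unions and intersections, and that $L = \overline L \cap A^*$ for each regular $L$; hence showing $K \in \mathcalL$ reduces to showing $\overline K = \overline L$ for some $L \in \mathcalL$.

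The main step is the contrapositive formulation: since $K \in \mathcalL(E)$, for every $u \in \overline K$ and every $v \notin \overline K$ there exists some $L_{u,v} \in \mathcalL$ with $u \in \overline{L_{u,v}}$ and $v \notin \overline{L_{u,v}}$. Now I fix $u \in \overline K$. The complement $\widehat{A^*}\setminus \overline K$ is clopen, hence compact (by Proposition~\ref{SW:elementary profinite}(3) applied to a closed subset of a compact space), and it is covered by the open sets $\widehat{A^*} \setminus \overline{L_{u,v}}$ as $v$ ranges over $\widehat{A^*}\setminus \overline K$. Extracting a finite subcover $v_1,\dots,v_n$ and using the auxiliary fact, I obtain $L_u := L_{u,v_1} \cap \cdots \cap L_{u,v_n} \in \mathcalL$ satisfying $u \in \overline{L_u} \subseteq \overline K$. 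A second compactness argument, applied this time to the compact set $\overline K$ covered by the open sets $\overline{L_u}$ for $u \in \overline K$, yields finitely many $u_1,\dots,u_m$ with $\overline K = \overline{L_{u_1}} \cup \cdots \cup \overline{L_{u_m}} = \overline{L_{u_1} \cup \cdots \cup L_{u_m}}$. The union $L = L_{u_1} \cup \cdots \cup L_{u_m}$ lies in $\mathcalL$ and satisfies $\overline L = \overline K$, so $L = \overline L \cap A^* = \overline K \cap A^* = K$, proving $K \in \mathcalL$.

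The main obstacle is the two-layer compactness argument; once one sees to separate, point by point, the inside of $\overline K$ from its outside using members of $\mathcalL$, the argument flows naturally. The most technically delicate supporting ingredient is making sure the closure/intersection interchange is used only for regular languages, where Proposition~\ref{SW:prop: regular and profinite} applies — it can fail for non-regular languages, but that never matters here because $\mathcalL$ consists of regular languages by hypothesis.
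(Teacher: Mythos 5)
Your argument is correct, but it is organized rather differently from the paper's proof. The paper works with the equation sets $E_L=\{(u,v)\mid L\vdash u\to v\}$, shows each is clopen (Lemma~\ref{SW:claimEL}), and applies compactness exactly once, in the product space $\widehat{A^*}\times\widehat{A^*}$, to replace the whole lattice by a finite subfamily $\mathcal{K}\subseteq\mathcal{L}$ (Lemma~\ref{SW:claim compactness}); it then finishes entirely at the level of words, using only explicit equations: for $u\in L$ it forms $\mathcal{K}(u)=\bigcap\{K\in\mathcal{K}\mid u\in K\}$, observes that only finitely many such intersections occur, and verifies $L=\bigcup_{u\in L}\mathcal{K}(u)$ via Proposition~\ref{SW:prop: elementary equations}\,(1). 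You never reduce to a finite subfamily; instead you run two nested covering arguments directly on clopen subsets of $\widehat{A^*}$: for each $u\in\overline{K}$ you separate $u$ from the compact set $\widehat{A^*}\setminus\overline{K}$ by a finite intersection $L_u\in\mathcal{L}$ with $u\in\overline{L_u}\subseteq\overline{K}$, and then you cover $\overline{K}$ by finitely many $\overline{L_{u_i}}$. This costs you one ingredient the paper does not need, namely that closure commutes with finite intersections of regular languages (your justification via a common recognizing morphism is essentially right, though to go from membership of $\hat\phi(u)$ in $\phi(L)$ for a single recognizing $\phi$ back to $u\in\overline{L}$ you should either invoke the density/clopen-preimage argument or factor through the syntactic morphism so as to apply Proposition~\ref{SW:prop: regular and profinite}\,(2)(iv)); in exchange, your concluding step is the clean topological identity $\overline{K}=\overline{L}$ rather than the paper's word-level bookkeeping with the sets $\mathcal{K}(u)$. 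Both proofs ultimately exhibit the target language as a finite union of finite intersections of members of $\mathcal{L}$, and your empty-cover corner cases ($K=A^*$ or $K=\emptyset$) are harmless because the chapter's definition of a lattice requires $A^*,\emptyset\in\mathcal{L}$.
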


We have already seen that one direction of this equivalence holds: 
every set of the form $\mathcalL(E)$ is a lattice. The proof of the 
converse is obtained after several steps. The first concerns the set 
of equations satisfied by a given language. If $L \subseteq A^*$, let
\begin{displaymath}
E_L = \left\{(u,v) \in \widehat{A^*} \times \widehat{A^*} \mid L \vdash 
u \to v\right\}.
\end{displaymath}

\begin{lemma}\label{SW:claimEL}
    If $L$ is regular, then $E_L$ is clopen.
\end{lemma}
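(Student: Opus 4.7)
The plan is to rewrite $E_L$ as a simple Boolean combination of products of clopen sets in $\widehat{A^*}$, and then invoke Proposition~\ref{SW:prop: regular and profinite}(1) which tells us that, since $L$ is regular, its closure $\overline L$ is clopen in $\widehat{A^*}$.

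First I would unpack the definition of $L \vdash u \to v$: it is the implication $u \in \overline L \Longrightarrow v \in \overline L$, which is logically equivalent to $u \notin \overline L$ or $v \in \overline L$. This immediately gives the set-theoretic identity
\begin{displaymath}
E_L \;=\; \bigl((\widehat{A^*} \setminus \overline L) \times \widehat{A^*}\bigr) \;\cup\; \bigl(\widehat{A^*} \times \overline L\bigr).
\end{displaymath}

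Next I would apply Proposition~\ref{SW:prop: regular and profinite}(1): because $L$ is regular, $\overline L$ is clopen, and hence so is its complement $\widehat{A^*} \setminus \overline L$. The whole space $\widehat{A^*}$ is clopen as well. By the definition of the product topology, products of clopen sets are clopen in $\widehat{A^*} \times \widehat{A^*}$, and the union of two clopen sets is clopen; therefore $E_L$ is clopen, as desired.

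There is no real obstacle here—the argument is essentially a direct unraveling of definitions once one has the key fact that $\overline L$ is clopen. The only thing to be a bit careful about is the product topology: $\widehat{A^*} \times \widehat{A^*}$ is given the product of the profinite topologies (and, being the product of two compact Hausdorff spaces, it is itself compact Hausdorff), so the rectangles above are genuine clopen subsets of the product space.
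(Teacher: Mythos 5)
Your proof is correct and follows essentially the same route as the paper: the identical decomposition $E_L = (\overline{L}^c\times\widehat{A^*}) \cup (\widehat{A^*}\times\overline{L})$, with clopenness of $\overline L$ (from Proposition~\ref{SW:prop: regular and profinite}(1), since $L$ is regular) doing all the work. The paper phrases the final step via compactness of $\overline L$ and $\overline L^c$ rather than quoting clopenness directly, but this is the same argument.
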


\begin{proof}
By definition of the satisfaction of equations, we have
\begin{displaymath}
E_L = \left\{(u,v) \in \widehat{A^*} \times \widehat{A^*} \mid (u 
\not\in\overline{L}) \vee (v \in \overline{L})\right\} = 
\left(\overline{L}^c\times\widehat{A^*}\right) \cup 
\left(\widehat{A^*}\times\overline{L}\right).
\end{displaymath}
Lemma~\ref{SW:claimEL} follows from the fact that $\widehat{A^*}$, $\overline L$
and $\overline{L}^c$ are compact (since $L$ is regular).
\end{proof}

The proof of the next claim illustrates the crucial role played by the compactness of $\widehat{A^*}$.  Let $\mathcalL$ be a lattice of regular languages in $A^*$ and let $E_\mathcalL
= \bigcap_{L\in\mathcalL} E_L$.

\begin{lemma}\label{SW:claim compactness}
    Let $L$ be a regular language in $\mathcalL(E_\mathcalL)$: that is, $L$
    satisfies all the profinite equations satisfied by all the
    elements of $\mathcalL$.  Then there exists a finite subset $\mathcalK$ of
    $\mathcalL$ such that $L \in \mathcalL(E_\mathcalK)$.
\end{lemma}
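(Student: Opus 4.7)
The plan is to reduce the lemma to a topological compactness argument in $\widehat{A^*}\times\widehat{A^*}$. First I would reformulate membership in $\mathcalL(E)$ set-theoretically: for any regular language $M$ and any set $E$ of equations, $M \in \mathcalL(E)$ if and only if $E \subseteq E_M$. In particular, the hypothesis $L\in\mathcalL(E_\mathcalL)$ is just the inclusion $E_\mathcalL = \bigcap_{K\in\mathcalL} E_K \subseteq E_L$, and the conclusion to prove is that $\bigcap_{K\in\mathcalK} E_K \subseteq E_L$ for some finite $\mathcalK\subseteq\mathcalL$. Dualizing, the hypothesis becomes
\[
E_L^c \subseteq \bigcup_{K\in\mathcalL} E_K^c,
\]
and I have to produce a finite subfamily whose complements already cover $E_L^c$.

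Next I would observe that each $E_K^c$ is open in $\widehat{A^*}\times\widehat{A^*}$: indeed, Lemma~\ref{SW:claimEL} says $E_K$ is clopen whenever $K$ is regular. So the family $\{E_K^c\}_{K\in\mathcalL}$ is an open cover of $E_L^c$. Since $L$ itself is regular, the same lemma gives that $E_L^c$ is also clopen; and since $\widehat{A^*}$ is compact by Proposition~\ref{SW:elementary profinite}(3), so is the product $\widehat{A^*}\times\widehat{A^*}$, and therefore its closed subset $E_L^c$ is compact.

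The compactness of $E_L^c$ allows me to extract a finite subcover $E_{K_1}^c,\dots,E_{K_n}^c$ from the open cover $\{E_K^c\}_{K\in\mathcalL}$. Setting $\mathcalK = \{K_1,\dots,K_n\}$ and dualizing back, I obtain
\[
E_\mathcalK = \bigcap_{i=1}^n E_{K_i} \subseteq E_L,
\]
which by the initial reformulation is precisely the statement $L \in \mathcalL(E_\mathcalK)$.

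I don't expect a genuine obstacle here: the entire content of the lemma is the passage from an infinite intersection of clopen sets to a finite one, which is exactly what compactness delivers once Lemma~\ref{SW:claimEL} is in hand. The only points requiring some care are (i) dualizing correctly between ``intersection of equations'' and ``union of complements'', and (ii) making explicit that $\widehat{A^*}\times\widehat{A^*}$ is compact so that the closed set $E_L^c$ inherits compactness — both routine once the formulation is set up.
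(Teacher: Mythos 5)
Your proof is correct and is essentially the paper's argument: Lemma~\ref{SW:claimEL} gives that the sets $E_K^c$ are open and $E_L$ is clopen, and compactness then extracts the finite subfamily. The only cosmetic difference is that you cover the compact set $E_L^c$ by the $E_K^c$, while the paper covers all of $\widehat{A^*}\times\widehat{A^*}$ by $E_L$ together with the $E_K^c$ --- these are the same compactness argument up to dualization.
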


\begin{proof}
By Lemma~\ref{SW:claimEL}, $E_L$ and each $E_K^c$ ($K\in \mathcalL$) are
open sets.  Moreover, if $(u,v)$ does not belong to any of the $E_K^c$
($K\in \mathcalL$), then $(u,v)$ belongs to each $E_K$, that is, every
language in $\mathcalL$ satisfies $u\to v$.  It follows that $L$ satisfies
$u\to v$ as well, that is, $(u,v) \in E_L$.  Therefore $E_L$ and the
$E_K^c$ ($K\in \mathcalL$) form an open cover of $\widehat{A^*}$.

By compactness, there exists a finite subcollection $\mathcalK$ of $\mathcalL$
such that $\widehat{A^*}$ is covered by $E_L$ and the $E_K^c$, $K\in
\mathcalK$.  It follows that $E_L$ contains the complement of
$\bigcup_{K\in\mathcalK}E_K^c$, namely the intersection
$\bigcap_{K\in\mathcalK}E_K$.  That is, $L$ satisfies all the equations
satisfied by the elements of $\mathcalK$, which establishes the claim.
\end{proof}

We are now ready to prove Theorem~\ref{SW:thm lattice}, by showing that
if $\mathcalL$ is a lattice of regular languages in $A^*$, then $\mathcalL =
\mathcalL(E_\mathcalL)$.  It is immediate by construction that $\mathcalL$ is
contained in $\mathcalL(E_\mathcalL)$.  Let us now consider a language $L \in
\mathcalL(E_\mathcalL)$.  By Lemma~\ref{SW:claim compactness}, we have $L \in
\mathcalL(E_\mathcalK)$ for a finite subset $\mathcalK$ of $\mathcalL$.

For each $u\in L$, let $\mathcalK(u)$ be the intersection of the 
languages $K\in \mathcalK$ containing $u$. Even though $L$ may be 
infinite, $\mathcalK(u)$ takes only finitely many values since $\mathcalK$ is 
finite. By definition of the $\mathcalK(u)$, we have $L \subseteq 
\bigcup_{u\in L}\mathcalK(u)$, a finite union.

Conversely, let $v\in \bigcup_{u\in L}\mathcalK(u)$.  Then there exists a
word $u\in L$ such that $v$ belongs to every $K\in \mathcalK$ containing
$u$.  That is, every $K\in \mathcalK$ satisfies the equation $u \to v$.
In other words, $u\to v$ lies in $E_\mathcalK$, and hence $L$ satisfies
that equation.  Since $u\in L$, it follows that $v\in L$.  Thus $L =
\bigcup_{u\in L}\mathcalK(u)$ and hence $L \in \mathcalL$, which concludes the
proof.


\subsection{More classes of languages: from lattices to varieties}\label{SW:sec: lattices 2 varieties}

Here we explore how classes of regular languages that are more
structured than lattices can be defined by more structured sets of
equations. We start with an elementary lemma.

\begin{lemma}
    Let $\mathcalL$ be a lattice of regular languages satisfying the 
    profinite equation $u\to v$.
    \begin{conditions}
    \item If $\mathcalL$ is closed under complementation, then $\mathcalL$ 
    also satisfies $v\to u$.
    
    \item If $\mathcalL$ is closed under quotients, then $\mathcalL$ satisfies the 
    equations $xuy \to xvy$, for all $x,y\in \widehat{A^*}$.
\end{conditions}
\end{lemma}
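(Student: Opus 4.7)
For part (1), the plan is to apply the hypothesis to the complement of $L$ and re-interpret. Given $L \in \mathcalL$, closure under complementation puts $L^c = A^* \setminus L$ in $\mathcalL$, so $L^c$ satisfies $u \to v$. For regular $L$, $\overline{L}$ is clopen in $\widehat{A^*}$ by Proposition~\ref{SW:prop: regular and profinite}, so $\overline{L^c}$ and $\overline{L}^c$ are two closed subsets of $\widehat{A^*}$ whose traces on the dense set $A^*$ both equal $L^c$, hence they coincide. The implication $u \in \overline{L^c} \Rightarrow v \in \overline{L^c}$ therefore rewrites as $u \notin \overline{L} \Rightarrow v \notin \overline{L}$, whose contrapositive is exactly $L \vdash v \to u$.

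For part (2), the plan is to first treat the case $x,y \in A^*$ using quotient-closure, then extend to $x,y \in \widehat{A^*}$ by density and continuity. Given $L \in \mathcalL$ and $x,y \in A^*$, closure under quotients yields $K = x^{-1}Ly^{-1} \in \mathcalL$. The key identity to establish is
\[
\overline{K} = \{w \in \widehat{A^*} : xwy \in \overline{L}\}.
\]
Both sides are clopen in $\widehat{A^*}$: the right-hand side because the map $w \mapsto xwy$ is continuous by Proposition~\ref{SW:elementary profinite} and $\overline L$ is clopen; the left-hand side because $K$ is regular. Moreover both have trace $K$ on $A^*$, so density of $A^*$ forces equality. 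Applying $K \vdash u \to v$ then translates directly into $xuy \in \overline{L} \Rightarrow xvy \in \overline{L}$, giving $L \vdash xuy \to xvy$ whenever $x,y \in A^*$.

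To extend to arbitrary $x,y \in \widehat{A^*}$, set
\[
S = \{(x,y) \in \widehat{A^*} \times \widehat{A^*} : L \vdash xuy \to xvy\}.
\]
Write $S = \phi^{-1}(\widehat{A^*} \setminus \overline{L}) \cup \psi^{-1}(\overline{L})$, where the maps $\phi(x,y) = xuy$ and $\psi(x,y) = xvy$ are continuous by continuity of multiplication in $\widehat{A^*}$ (Proposition~\ref{SW:elementary profinite}). Since $\overline L$ is clopen, $S$ is a union of two closed sets, hence closed; by the previous step it contains the dense subset $A^* \times A^*$, and therefore $S = \widehat{A^*}\times\widehat{A^*}$, as required. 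The main subtlety is the closure-preimage identity in part (2): it is precisely the place where clopenness of $\overline L$ (itself a hallmark of regularity) is indispensable. Once that identity is in hand, both assertions reduce to routine density and continuity arguments.
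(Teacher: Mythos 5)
Your proof is correct and takes essentially the same route as the paper's: part (1) by identifying $\overline{L^c}$ with $\overline{L}^{\,c}$, and part (2) by first handling $x,y\in A^*$ through the quotient $x^{-1}Ly^{-1}$ and then extending to $x,y\in\widehat{A^*}$ by density and continuity of multiplication (the paper packages this last step as ``$E_{\mathcal{L}}$ is closed and $A^*$ is dense,'' which is the same topological fact you use per language). One small wording point: the ``same trace on a dense set'' argument in part (1) requires that both sets be \emph{clopen} (an open set disjoint from the dense set $A^*$ is empty), not merely closed as you state there---this is harmless since $L^c$ is regular so $\overline{L^c}$ is clopen by Proposition~\ref{SW:prop: regular and profinite}, and you invoke clopenness correctly in the analogous step of part (2).
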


\begin{proof}
It follows from the definition of equations that $L$ satisfies $u\to 
v$ if and only if its complement satisfies $v\to u$. The first part 
of the claim follows immediately.

It is also elementary that, if $x,y\in A^*$ and $x\inv Ly\inv \vdash
u\to v$, then $L\vdash xuy\to xvy$.  Thus, if $\mathcalL$ is closed under
quotients, then $\mathcalL$ satisfies all the equations $xuy \to xvy$ with
$x,y\in A^*$.  This holds also if $x,y\in \widehat{A^*}$ since $E_\mathcalL$
is closed and $A^*$ is dense in $\widehat{A^*}$.
\end{proof}

We now extend the notion of profinite equations as follows: if $u,v\in
\widehat{A^*}$, we say that a language $L$ satisfies the
\emph{symmetrical} equation\index{equation!symmetrical} $u \leftrightarrow v$ if $L$ satisfies
both $u\to v$ and $v\to u$.

We also say that a language $L$ satisfies the \emph{profinite
inequality}\index{profinite!inequality} $v \le u$ if it satisfies all the equations of the form
$xuy \to xvy$ with $x,y \in \widehat{A^*}$, and it satisfies the
\emph{profinite equality}\index{profinite!equality} $u = v$ if if satisfies both $u\le v$ and
$v\le u$.
The verification of the following corollary is now elementary.

\begin{corollary}\label{SW:cor: boolean and quotients}
    Let $\mathcalL$ be a set of regular languages in $A^*$.
    \begin{conditions}
    
    \item Then $\mathcalL$ is a boolean algebra if and only if $\mathcalL
    = \mathcalL(E)$ for some set $E$ of symmetrical profinite equations on $A$.
    
    \item $\mathcalL$ is a lattice closed under quotients if and
    only if $\mathcalL = \mathcalL(E)$ for some set $E$ of profinite
    inequalities on $A$.
    
    \item $\mathcalL$ is a boolean algebra closed under quotients if
    and only if $\mathcalL = \mathcalL(E)$ for some set $E$ of profinite
    equalities on $A$.
    \end{conditions}
\end{corollary}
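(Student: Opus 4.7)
The plan is to derive all three parts uniformly by combining Theorem~\ref{SW:thm lattice}, which represents every lattice of regular languages as $\mathcalL(E)$ for some set of profinite equations, with the preceding Lemma, which already describes how closure under complement or quotients forces extra structure on the set of satisfied equations. The work splits cleanly into a forward and a backward direction.

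For the forward direction I would start with a class $\mathcalL$ having the stated closure properties, apply Theorem~\ref{SW:thm lattice} to write $\mathcalL = \mathcalL(E_\mathcalL)$, and then use the preceding Lemma to observe that $E_\mathcalL$ is already closed under the appropriate operation. In case~(1), closure of $\mathcalL$ under complement yields $(u,v)\in E_\mathcalL \Rightarrow (v,u)\in E_\mathcalL$, so $E_\mathcalL$ re-packages as a set of symmetrical equations $u\leftrightarrow v$. In case~(2), closure under quotients yields $(u,v)\in E_\mathcalL \Rightarrow (xuy,xvy)\in E_\mathcalL$ for all $x,y\in\widehat{A^*}$, so the equations of $E_\mathcalL$ group into full profinite inequalities $v\le u$. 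In case~(3), both reductions apply simultaneously, giving a grouping into profinite equalities $u = v$.

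For the backward direction I would start from $\mathcalL=\mathcalL(E)$ with $E$ of the required special form. That $\mathcalL$ is a lattice is automatic from Theorem~\ref{SW:thm lattice}, since each symmetrical equation, inequality or equality simply abbreviates a set of ordinary profinite equations. The remaining verification is language-by-language and rests on two basic topological computations, both consequences of Proposition~\ref{SW:prop: regular and profinite} and continuity of multiplication in $\widehat{A^*}$. First, for a regular $L$ one has $\overline{L^c} = \widehat{A^*}\setminus \overline L$ (since $\overline L$ is clopen and $A^*$ is dense), whence $L\vdash u\leftrightarrow v$ if and only if $L^c \vdash u\leftrightarrow v$; this handles complement-closure for (1) and (3). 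Second, for $x\in A^*$ one has $\overline{x\inv L}=\{w\in\widehat{A^*} : xw\in \overline L\}$, so $L\vdash xuy\to xvy$ if and only if $x\inv Ly\inv\vdash u\to v$; taking this over all $x,y\in\widehat{A^*}$ shows that satisfaction of a profinite inequality is preserved under left and right quotients, handling (2) and (3).

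I do not anticipate a serious obstacle: the real content has already been concentrated in Theorem~\ref{SW:thm lattice} and in the preceding Lemma, and the corollary is mostly a bookkeeping exercise. The one point requiring a bit of care is justifying that quantifying $x,y$ over $\widehat{A^*}$ rather than over $A^*$ in the definition of an inequality does not strengthen the condition; this follows from $E_L$ being closed in $\widehat{A^*}\times\widehat{A^*}$ (Lemma~\ref{SW:claimEL}) together with density of $A^*$ in $\widehat{A^*}$, which is exactly the argument already invoked in the proof of the preceding Lemma.
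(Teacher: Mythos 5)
Your proposal is correct and follows essentially the route the paper intends: the paper declares the verification ``elementary'' precisely because it reduces, as you do, to Theorem~\ref{SW:thm lattice} together with the preceding lemma for the forward direction, and to the same elementary observations about $\overline{L^c}$ and $\overline{x^{-1}Ly^{-1}}$ (with the density/closedness argument for profinite contexts) for the backward direction. Nothing in your argument diverges from, or is missing relative to, the paper's intended proof.
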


\subsection{Identities and varieties}\label{SW:sec: identities and varieties}

We now come to the historically and mathematically important class of
varieties.  Varieties of languages were defined in Section~\ref{SW:sec:
varieties} but we will not  use this definition here.  In fact, in
the course of this section, we will give an alternate, equivalent
definition of varieties.

An important difference between varieties and the lattices of languages over a fixed
alphabet discussed so far in Section~\ref{SW:sec: eqs}, is that a
variety $\mathcalV$ consists of a collection of lattices $A^*\mathcalV$, one
for each finite alphabet $A$.  More generally, we define a
\emph{class of regular languages}\index{class of languages} $\mathcalV$ to be an operator which
assigns to each finite alphabet $A$, a family $A^*\mathcalV$ of regular
languages in $A^*$.

First, we prove a technical lemma.

\begin{lemma}\label{SW:lemma: inverse morphism}
    Let $\phi\colon A^* \to B^*$ be a morphism, $L\subseteq B^*$ and
    $u,v \in \widehat{A^*}$.
    \begin{conditions}
    \item $\hat\phi(u) \in \overline{L}$ if and only if $u\in
    \overline{\phi\inv(L)}$.
    
    \item $L$ satisfies $\hat\phi(u) \to
    \hat\phi(v)$ if and only if $\phi\inv(L)$ satisfies $u\to v$.
\end{conditions}\end{lemma}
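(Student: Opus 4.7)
The plan is to prove (1) first, and then observe that (2) follows essentially for free by unfolding the definition of satisfaction. Indeed, $L \vdash \hat\phi(u) \to \hat\phi(v)$ means $\hat\phi(u) \in \overline L \Rightarrow \hat\phi(v) \in \overline L$; applying (1) to both $u$ and $v$ rewrites this as $u \in \overline{\phi\inv(L)} \Rightarrow v \in \overline{\phi\inv(L)}$, which is exactly $\phi\inv(L) \vdash u \to v$.

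For the easy direction of (1), namely $u \in \overline{\phi\inv(L)} \Rightarrow \hat\phi(u) \in \overline L$, I would choose a Cauchy sequence $(u_n)$ in $\phi\inv(L)$ converging to $u$; by continuity of $\hat\phi$ (Proposition~\ref{SW:elementary profinite}), $\phi(u_n) = \hat\phi(u_n) \to \hat\phi(u)$. Since each $\phi(u_n)$ lies in $L$, the limit $\hat\phi(u)$ lies in the closed set $\overline L$.

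For the reverse direction, assuming (as is implicit in this setting) that $L$ is regular, I would invoke Proposition~\ref{SW:prop: regular and profinite}: $\overline L$ is clopen in $\widehat{B^*}$, and moreover $\overline L \cap B^* = L$ (for $w\in B^*$, the equivalence (i)$\Leftrightarrow$(iv) there gives $w \in \overline L$ iff $\hat\eta(w)=\eta(w)\in\eta(L)$ iff $w \in L$, where $\eta$ is the syntactic morphism of $L$). Given $\hat\phi(u) \in \overline L$, I would pick a Cauchy sequence $(u_n) \subseteq A^*$ converging to $u$ using density of $A^*$ in $\widehat{A^*}$; by continuity $\phi(u_n) \to \hat\phi(u)$, and since $\overline L$ is open, eventually $\phi(u_n) \in \overline L \cap B^* = L$, so eventually $u_n \in \phi\inv(L)$, proving $u \in \overline{\phi\inv(L)}$.

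The main obstacle is this reverse direction of (1): turning membership of $\hat\phi(u)$ in the closure of $L$ into membership of $u$ in the closure of $\phi\inv(L)$ requires passing from a statement about $B^*$-neighborhoods back to $A^*$-neighborhoods, and this is where the clopenness of $\overline L$---equivalently, the regularity of $L$---is used in an essential way via Proposition~\ref{SW:prop: regular and profinite}.
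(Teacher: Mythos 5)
Your proposal is correct and follows essentially the same route as the paper: the paper reduces (1) to the trivial case of words (where $\overline L\cap B^*=L$ and $\overline{\phi\inv(L)}\cap A^*=\phi\inv(L)$) and then ``extends by density,'' which is exactly what your sequence argument via continuity of $\hat\phi$ and clopenness of $\overline L$ spells out, and (2) is deduced from (1) in both cases by unfolding the definition of satisfaction. Your explicit use of regularity of $L$ in the harder direction is appropriate, since the paper's density step implicitly relies on it as well (this lemma is only applied to regular languages).
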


\begin{proof}
The first statement is trivial if $u,v\in A^*$: indeed, $\phi$ and
$\hat\phi$ coincide on words, and the intersection of $\overline{L}$
(resp.  $\overline{\phi\inv(L)}$) with $A^*$ (resp. $B^*$) is $L$ (resp.
$\phi\inv(L)$). The extension to the case where $u,v \in 
\widehat{A^*}$ is obtained by density.


The second statement follows immediately from the first and the 
definition of profinite equations.
\end{proof}

We extend the notion of profinite equations, this time to 
profinite identities, to permit the treatment of classes of
regular languages instead of lattices of regular languages over a
fixed alphabet.  Since there is no alphabet of reference anymore, we
will usually denote by $X$ the alphabet over which profinite
identities are written.

Let $\mathcalC$ be a composition-closed class of morphisms between free monoids, $u,v\in
\widehat{X^*}$ and $L\subseteq A^*$, where $X$ and $A$ are finite, but
possibly different alphabets.  We say that $L$
\emph{$\mathcalC$-identically satisfies} $u\to v$ if, for each morphism
$\phi\colon X^*\to A^*$ in $\mathcalC$, $L$ satisfies $\hat\phi(u) \to
\hat\phi(v)$.  We say that a class of regular languages $\mathcalV$
$\mathcalC$-identically satisfies an equation if $A^*\mathcalV$ does, for each
finite alphabet $A$.

The following statement is a direct application of Lemma~\ref{SW:lemma:
inverse morphism}.

\begin{corollary}\label{SW:cor: inverse morphisms}
    Let $\mathcalV$ be a class of regular languages, let $\mathcalC$ be a
    family of morphisms between free monoids closed under composition, such that
    whenever $\phi\colon A^*\to B^*$ is in $\mathcalC$ and $L\in
    B^*\mathcalV$, then $\phi\inv(L) \in A^*\mathcalV$.
    
    If $X^*\mathcalV$ satisfies the profinite equation $u\to v$ (with
    $u,v\in\widehat{X^*}$), then $\mathcalV$ $\mathcalC$-identically satisfies
    $u\to v$.
\end{corollary}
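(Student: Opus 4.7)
The plan is to unwind the definition of $\mathcalC$-identical satisfaction and reduce the claim, in one step, to Lemma~\ref{SW:lemma: inverse morphism}(2) together with the closure hypothesis on $\mathcalC$. Concretely, I want to show that for every finite alphabet $A$, every language $L\in A^*\mathcalV$, and every morphism $\phi\colon X^*\to A^*$ in $\mathcalC$, $L$ satisfies the profinite equation $\hat\phi(u)\to\hat\phi(v)$.

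First I would fix such data $A$, $L$, $\phi$. The closure hypothesis on $\mathcalC$ and $\mathcalV$ states that $\phi\inv(L)\in X^*\mathcalV$. Since by assumption $X^*\mathcalV$ satisfies the profinite equation $u\to v$, we get in particular $\phi\inv(L)\vdash u\to v$. Then I apply Lemma~\ref{SW:lemma: inverse morphism}(2), which says precisely that $\phi\inv(L)$ satisfies $u\to v$ if and only if $L$ satisfies $\hat\phi(u)\to\hat\phi(v)$. This yields the desired conclusion $L\vdash \hat\phi(u)\to\hat\phi(v)$, and since $A$, $L$, and $\phi$ were arbitrary, $\mathcalV$ $\mathcalC$-identically satisfies $u\to v$ by definition.

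There is essentially no obstacle here beyond careful bookkeeping: the substantive content has been isolated in Lemma~\ref{SW:lemma: inverse morphism}, whose second part handles the transfer of a profinite equation through $\hat\phi$, and in the closure property of $\mathcalV$ under inverse images by morphisms in $\mathcalC$. The composition-closure of $\mathcalC$ is not needed in this particular argument (it will matter for later results where one iterates the construction), so I would not invoke it. The proof should fit in a few lines.
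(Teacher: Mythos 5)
Your proof is correct and matches the paper's intent exactly: the paper presents this corollary as ``a direct application of Lemma~\ref{SW:lemma: inverse morphism}'', and your argument simply fills in that application (closure of $\mathcalV$ under $\mathcalC$-inverse images gives $\phi\inv(L)\in X^*\mathcalV$, hence $\phi\inv(L)\vdash u\to v$, hence $L\vdash\hat\phi(u)\to\hat\phi(v)$ by the lemma). Your side remark that composition-closure of $\mathcalC$ is not actually needed here is also accurate.
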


Using the notions introduced in Section~\ref{SW:sec: lattices 2
varieties}, we say that $L$ satisfies the \emph{profinite}
\emph{$\mathcalC$-identity}\index{profinite!$\mathcalC$-identity} $u = v$ (resp.  \emph{profinite ordered
$\mathcalC$-identity}\index{profinite!$\mathcalC$-identity!ordered} $u\le v$) if $L$ $\mathcalC$-identically satisfies $u =
v$ (resp.  $u\le v$).  If $E$ is a set of profinite equations and for
each finite alphabet $A$, $A^*\mathcalV$ is the set of regular languages
in $A^*$ which $\mathcalC$-identically satisfy the elements of $E$, we say
that the resulting class of regular languages $\mathcalV$ is
\emph{$\mathcalC$-defined by} $E$.

Let us now define (positive) $\mathcalC$-varieties: a class $\mathcalV$ of
regular languages is a \emph{positive $\mathcalC$-variety}\index{variety!$\mathcalC$-}\index{variety!$\mathcalC$-!positive} (resp.  a
\emph{$\mathcalC$-variety}) \emph{of languages} if each $A^*\mathcalV$ is
a lattice (resp.  a boolean algebra) closed under quotients and if,
for each $\phi\colon A^*\to B^*$ in $\mathcalC$ and each $L\in B^*\mathcalV$,
we have $\phi\inv(L)\in A^*\mathcalV$.

If $\mathcalC$ is the class of all morphisms between free monoids, we drop
the prefix $\mathcalC$ and simply talk of (ordered) profinite identities and (positive) varieties of languages.

Collecting Corollaries~\ref{SW:cor: boolean and quotients} and~\ref{SW:cor:
inverse morphisms}, we have the following characterizations.

\begin{theorem}\label{SW:thm: identities and varieties}
    Let $\mathcalV$ be a class of regular languages and let $\mathcalC$ be a
    composition-closed class of morphisms between free monoids.  Then
    $\mathcalV$ is a positive $\mathcalC$-variety (resp.  a $\mathcalC$-variety)
    if and only if $\mathcalV$ is $\mathcalC$-defined by a set of profinite
    ordered $\mathcalC$-identities (resp.  profinite $\mathcalC$-identities).
\end{theorem}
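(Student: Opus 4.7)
The plan is to deduce Theorem~\ref{SW:thm: identities and varieties} by combining the alphabet-wise characterization from Corollary~\ref{SW:cor: boolean and quotients} with Lemma~\ref{SW:lemma: inverse morphism} and composition-closedness. I treat the positive/ordered case in detail; the $\mathcalC$-variety/identity case is formally parallel, substituting Corollary~\ref{SW:cor: boolean and quotients}(3) for (2) and profinite equalities for profinite inequalities. Throughout I adopt the standard convention that the composition-closed class $\mathcalC$ contains every identity morphism $\operatorname{id}_{A^*}$.

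For the ($\Leftarrow$) direction, suppose $\mathcalV$ is $\mathcalC$-defined by a set $E$ of profinite ordered $\mathcalC$-identities. For each finite alphabet $A$, $A^*\mathcalV$ is precisely the set of regular languages in $A^*$ satisfying the concrete set of profinite inequalities $E_A := \{\hat\phi(u)\le\hat\phi(v) : (u\le v)\in E \text{ over some } X,\ \phi\in\mathcalC\cap\operatorname{Hom}(X^*,A^*)\}$. Corollary~\ref{SW:cor: boolean and quotients}(2) then identifies $A^*\mathcalV$ as a lattice closed under quotients. Closure under inverse images of $\mathcalC$-morphisms follows from composition-closedness: given $\phi\colon A^*\to B^*$ in $\mathcalC$, any $L\in B^*\mathcalV$, any $(u\le v)\in E$ over $X$, and any $\psi\colon X^*\to A^*$ in $\mathcalC$, the composite $\phi\psi$ lies in $\mathcalC$, so $L\vdash\widehat{\phi\psi}(u)\le\widehat{\phi\psi}(v)$; substituting $x' = \hat\phi(x)$, $y' = \hat\phi(y)$ into this and applying Lemma~\ref{SW:lemma: inverse morphism}(2) gives $\phi\inv(L)\vdash\hat\psi(u)\le\hat\psi(v)$.

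For the ($\Rightarrow$) direction, let $\mathcalV$ be a positive $\mathcalC$-variety and let $E$ be the set of \emph{all} profinite ordered $\mathcalC$-identities (over any alphabet $X$) that $\mathcalV$ $\mathcalC$-identically satisfies. The inclusion $A^*\mathcalV\subseteq A^*\mathcalV(E)$ is immediate. For the reverse, pick a regular $L\in A^*\mathcalV(E)$. Since $A^*\mathcalV$ is a lattice closed under quotients, Corollary~\ref{SW:cor: boolean and quotients}(2) gives $A^*\mathcalV=\mathcalL(F_A)$, where $F_A$ is the set of all profinite inequalities $u\le v$ on $A$ satisfied by every element of $A^*\mathcalV$. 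It therefore suffices to show each $(u\le v)\in F_A$, viewed as a $\mathcalC$-identity over $X:=A$, belongs to $E$; once this is done, instantiating the identity at $\phi=\operatorname{id}_{A^*}\in\mathcalC$ yields $L\vdash u\le v$.

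The heart of the proof is this inclusion $F_A\subseteq E$. Fix $\phi\colon A^*\to B^*$ in $\mathcalC$ and $K\in B^*\mathcalV$; I must show $K\vdash x'\hat\phi(u)y'\to x'\hat\phi(v)y'$ for all $x',y'\in\widehat{B^*}$. For each $x_1,y_1\in B^*$, closure of $B^*\mathcalV$ under quotients gives $x_1\inv K y_1\inv\in B^*\mathcalV$, and then closure under $\mathcalC$-inverse-images gives $\phi\inv(x_1\inv K y_1\inv)\in A^*\mathcalV$; since this language satisfies $u\le v$ (as $u\le v\in F_A$), in particular it satisfies $u\to v$, and Lemma~\ref{SW:lemma: inverse morphism}(2) translates this to $K\vdash x_1\hat\phi(u)y_1\to x_1\hat\phi(v)y_1$. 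The main obstacle — and the essential topological input — is the final extension from $B^*\times B^*$ to $\widehat{B^*}\times\widehat{B^*}$: it uses Lemma~\ref{SW:claimEL} (the set $E_K$ of equations satisfied by $K$ is closed in $\widehat{B^*}\times\widehat{B^*}$), Proposition~\ref{SW:elementary profinite}(1) (so the map $(x_1,y_1)\mapsto(x_1\hat\phi(u)y_1,x_1\hat\phi(v)y_1)$ is continuous), and the density of $B^*$ in $\widehat{B^*}$. Together these upgrade the merely combinatorial closure of $B^*\mathcalV$ under quotients into the uniform statement $K\vdash\hat\phi(u)\le\hat\phi(v)$ on all of $\widehat{B^*}$, placing $u\le v$ in $E$ and completing the proof.
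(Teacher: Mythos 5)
Your proposal is correct and takes essentially the same route as the paper, whose entire proof consists of ``collecting'' Corollary~\ref{SW:cor: boolean and quotients} with Lemma~\ref{SW:lemma: inverse morphism} (via Corollary~\ref{SW:cor: inverse morphisms}) --- exactly your decomposition. The extra details you supply (the quotient-plus-density upgrade from single equations to full inequalities on the target alphabet, which repeats the argument inside Lemma~2.8, and the explicit convention that $\mathcalC$ contains the identity morphisms) merely make explicit what the paper's one-line proof and its implicit assumptions on $\mathcalC$ leave unsaid.
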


\begin{remark}
    In Section~\ref{SW:sec: varieties}, we gave a different definition of
    varieties of languages, and Theorem~\ref{SW:thm:eilenberg} stated
    that it was equivalent to the definition given above.  We will
    prove this equivalence in Section~\ref{SW:sec:eilenbergreiterman}
    below, thus formally reconciling the two definitions.
\end{remark}

\subsection{Eilenberg's and Reiterman's theorems}\label{SW:sec:eilenbergreiterman}

We note that (in)equalities  can be
interpreted in the (ordered) syntactic monoid of a language.  Let $L$
be a regular language in $A^*$ and let $u,v\in \widehat{A^*}$.  By
Proposition~\ref{SW:prop: elementary equations}, if $\eta$ is the
syntactic morphism of $L$, then $L \vdash v\le u$ if and only if
$\hat\eta(v) \le_L \hat\eta(u)$.

Thus membership of a regular language $L$ in a lattice of regular languages closed under quotients is characterized by properties of the syntactic morphism of $L$.

We can also interpret identities in abstract finite ordered monoids---that is, finite monoids in which there is a partial order $\leq$ compatible with multiplication:  If $u,v\in\widehat{X^*}$, we say that a finite ordered monoid $M$ satisfies the profinite identity $u\leq v$ if for every morphism $\phi\colon X^*\to M$ we have $\hat\phi(u)=\hat\phi(v)$. Likewise a monoid $M$ satisfies the profinite identity $u=v$ if for each such $\phi$ we have $\hat\phi(u)=\hat\phi(v)$. We extend this notion further to ${\cal C}$-satisfaction of identities.  We call a morphism $\phi\colon A^*\to M$, where $M$ is finite and $\phi$ maps onto $M$, a \emph{stamp}\index{stamp}.  We also define \emph{ordered stamps}\index{stamp!ordered} as morphisms from a free monoid $A^*$ onto an ordered finite monoid.  (Such morphisms are automatically order-preserving if we consider the trivial ordering on $A^*$ in which $w_1\leq w_2$ if and only if $w_1=w_2$.)  Let $\mathcalC$ be a class of morphisms between finitely generated free monoids that is closed under composition and that contains all the length-preserving morphisms.  We say that the ordered stamp $\phi\colon A^*\to (M,\le)$ \emph{$\mathcalC$-satisfies the profinite identity} $u\leq v$ with $u,v\in\widehat{X^*}$ if and only if for all morphisms $\psi\colon X^*\to A^*$ with $\psi\in\mathcalC$, we have $\hat\phi\hat\psi(u)\leq\hat\phi\hat\psi(v)$.  We similarly define ${\cal C}$-satisfaction of identities $u=v$ by (not necessarily ordered) stamps.

We have already defined pseudovarieties of finite monoids in Section 1.  We can extend this definition to define $\mathcalC$-pseudovarieties of stamps\index{pseudovariety!$\mathcalC$-pseudovariety of stamps}.  We call a collection $\psvV$ of stamps a \emph{$\mathcalC$-pseudovariety} if it satisfies the following two conditions:
\begin{enumerate}
\item[(i)] If $\phi\colon A^*\to M$ is in $\psvV$, $\psi\colon B^*\to A^*$ is in $\mathcalC$, and $\eta$ is a morphism from $Im(\phi\psi)$ onto a finite monoid $N$, then
$\eta\phi\psi\colon B^*\to N$ is in $\psvV$.
\item[(ii)] If $\phi_i\colon A^*\to M_i$ are in $\psvV$ for $i=1,2$, then
$\phi_1\times\phi_2\colon A^*\to Im(\phi_1\times\phi_2)\subseteq M_1\times M_2$ is in $\psvV$.
\end{enumerate}
If we restrict the morphisms occurring in these definitions to order-preserving morphisms or ordered monoids, we obtain the definition of \emph{ordered $\mathcalC$-pseudovarieties of stamps}.  Ordinary pseudovarieties coincide with $\mathcalC$-pseudovarieties in the case where $\mathcalC$ contains all morphisms between finitely-generated free monoids.


We say that a class $\psvV$ of finite (ordered) monoids is \emph{defined} by a set $E$ of identities (written $\psvV = \llbrack E\rrbrack$) if $\psvV$ consists of all the finite (ordered)monoids that satisfy all of the identities in $E$.  Similarly, we say that a family $\psvV$ of stamps is $\mathcalC$-defined by $E$ (we write $\psvV = \llbrack E\rrbrack_{\calC}$) if $\psvV$ consists of all the stamps that $\mathcalC$-satisfy these identities.

Further if $\psvV$ is a class of monoids or stamps, ordered or unordered, we define the corresponding class $\mathcalV$ of languages by setting $L\in A^*\mathcalV$ if and only if $\synt(L)\in\psvV$ (if $\psvV$ is a class of monoids) or $\eta_L\in\psvV$ (if $\psvV$ is a class of stamps).  We write $\psvV\mapsto\mathcalV$ to denote this correspondence. 

This leads us to a restatement of Eilenberg's Theorem\index{Eilenberg's
Theorem}, Theorem~\ref{SW:thm:eilenberg} above, as well as its generalization to
${\cal C}$-varieties, and allows us to prove it simultaneously with Reiterman's
Theorem\index{Reiterman's Theorem}.

\begin{theorem}
The following statements hold.
\begin{conditions}
\item \emph{(Eilenberg's Theorem)} If $\psvV$ is a pseudovariety (respectively $\mathcalC$-pseudovariety, ordered pseudovariety) and $\psvV \mapsto\mathcalV$,
then $\mathcalV$ is a variety of languages (respectively $\mathcalC$-variety of languages, positive variety of languages) and in each case this gives a one-to-one correspondence between pseudovarieties and varieties of languages.
\item \emph{(Reiterman's Theorem)} A class $\psvV$ of monoids (stamps, ordered monoids) is a pseudovariety (respectively $\mathcalC$-pseudovariety, ordered pseudovariety) if and only if it is defined ($\mathcalC$-defined) by a set of profinite identities. 
\end{conditions}
\end{theorem}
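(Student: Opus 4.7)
The plan is to prove both theorems in tandem by pivoting on Theorem~\ref{SW:thm: identities and varieties} and exploiting two bridges between languages and their syntactic images. The first bridge, essentially a consequence of Proposition~\ref{SW:prop: elementary equations}, says that if $\eta$ is the syntactic morphism of a regular language $L\subseteq A^*$, then $L$ satisfies the profinite equality $u=v$ (respectively the inequality $v\le u$) if and only if $\hat\eta(u)=\hat\eta(v)$ in $\synt(L)$ (respectively $\hat\eta(v)\le_L\hat\eta(u)$ in the ordered syntactic monoid). The second bridge is the structural decomposition recalled in Section~1: for any surjective $\phi\colon A^*\to M$, one has $M\prec\prod_{m\in M}\synt(\phi\inv(m))$, so $M$ lies in a pseudovariety as soon as every $\synt(\phi\inv(m))$ does.

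I would begin with the Eilenberg direction. Given a pseudovariety (resp. $\mathcalC$-pseudovariety, ordered pseudovariety) $\psvV$ and its image $\mathcalV$, verify each closure axiom for $\mathcalV$ by translating it to a syntactic-monoid calculation: $\synt(L_1\cap L_2)$ and $\synt(L_1\cup L_2)$ divide $\synt(L_1)\times\synt(L_2)$; $\synt(L^c)=\synt(L)$; the quotient $v\inv L u\inv$ has syntactic monoid a quotient of $\synt(L)$; and for $\phi\colon B^*\to A^*$ in $\mathcalC$, the morphism $\eta_L\circ\phi$ recognizes $\phi\inv(L)$, so the syntactic image of $\phi\inv(L)$ lies in $\psvV$ by the $\mathcalC$-pseudovariety axioms. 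The ordered and $\mathcalC$-variants are handled analogously, with the ordered syntactic monoid or syntactic stamp in place of $\synt(L)$.

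With $\mathcalV$ now a variety of languages, Theorem~\ref{SW:thm: identities and varieties} supplies a set $E$ of profinite identities (ordered or $\mathcalC$-) that $\mathcalC$-defines $\mathcalV$, and I claim $\psvV=\llbrack E\rrbrack$. If $M\in\psvV$ and $\phi\colon X^*\to M$ is any morphism with image $N$, then each $\phi\inv(m)$ is recognized by $\phi$, so $\synt(\phi\inv(m))$ divides $N\in\psvV$, placing $\phi\inv(m)$ in $X^*\mathcalV$ and hence making it satisfy every identity in $E$; the first bridge then forces $\hat\phi(u)=\hat\phi(v)$ for each $u=v$ in $E$, so $M\in\llbrack E\rrbrack$. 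Conversely, given $M\in\llbrack E\rrbrack$ and a surjective $\eta\colon A^*\to M$, the first bridge shows that each $\eta\inv(m)$ satisfies $E$, hence lies in $A^*\mathcalV$, whence $\synt(\eta\inv(m))\in\psvV$, and the second bridge delivers $M\in\psvV$. This simultaneously proves Reiterman and, together with the observation that two distinct pseudovarieties must differ on some syntactic monoid (again via the second bridge), supplies the inverse and the bijectivity of the Eilenberg correspondence. The remaining direction of Reiterman---that $\llbrack E\rrbrack$ is always a pseudovariety---is a routine verification, since monoid-level satisfaction of profinite identities is preserved under finite products, submonoids, and quotients.

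The main technical obstacle is the first bridge, particularly keeping the three parallel settings straight. The cleanest derivation is that $L\vdash xuy\to xvy$ for all $x,y\in\widehat{A^*}$ unfolds, via Proposition~\ref{SW:prop: elementary equations} and density of $A^*$ in $\widehat{A^*}$, to the statement that $s\hat\eta(u)t\in\eta(L)\Rightarrow s\hat\eta(v)t\in\eta(L)$ for all $s,t\in\synt(L)$, which is precisely $\hat\eta(v)\le_L\hat\eta(u)$; symmetrizing gives syntactic equivalence, hence literal equality in $\synt(L)$. The $\mathcalC$-variant requires tracking that $\mathcalC$-satisfaction for stamps and $\mathcalC$-identical satisfaction for languages are defined through the same class of morphisms $X^*\to A^*$ in $\mathcalC$, so the composition-closure of $\mathcalC$ (and its containing the length-preserving morphisms) lets the bridge go through unchanged.
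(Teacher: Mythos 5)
Your proposal is correct and follows essentially the same route as the paper: it pivots on Theorem~\ref{SW:thm: identities and varieties}, the syntactic interpretation of profinite (in)equalities via Proposition~\ref{SW:prop: elementary equations}, and the syntactic-generation Lemma~\ref{SW:lemma: syntactic generation} (Corollary~\ref{SW:cor: syntactic generation}) to prove Eilenberg and Reiterman simultaneously. The only differences are organizational---you verify the closure properties of $\mathcalV$ and the equality $\psvV=\llbrack E\rrbrack$ by explicit double inclusion, where the paper obtains the latter from the uniqueness argument---so the content matches the paper's proof.
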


In the argument we sketch below, we confine our\-selves to the case of ord\-inary monoids, but everything generalizes in an entirely straightforward fashion to ordered monoids and stamps. 
The key to the proofs of both parts of the theorem is Theorem~\ref{SW:thm: identities
and varieties} above, along with the following elementary
but very useful lemma, already brought to the reader's attention in
Section~\ref{SW:subsection:j_1}.

\begin{lemma}\label{SW:lemma: syntactic generation}
    Let $\phi\colon A^*\to M$ be a morphism into a finite monoid. 
    Then $M$ divides the direct product of the syntactic monoids
    of the languages $\phi\inv(m)$, $m\in M$.
\end{lemma}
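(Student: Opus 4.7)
The plan is to construct an explicit division witnessing $M \prec P$, where $P := \prod_{m \in M} \synt(\phi^{-1}(m))$. For each $m \in M$, write $L_m = \phi^{-1}(m)$ and let $\eta_m \colon A^* \to \synt(L_m)$ denote its syntactic morphism. Assembling these into a product gives a morphism $\eta \colon A^* \to P$, $\eta(w) = (\eta_m(w))_{m \in M}$, whose image $N = \eta(A^*)$ is a submonoid of $P$. To obtain the required division, it suffices (after replacing $M$ by $\phi(A^*)$ if $\phi$ is not onto) to exhibit a surjective morphism $\pi \colon N \to M$ with $\pi \circ \eta = \phi$.

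The only substantive step is showing that $\eta$ \emph{refines} $\phi$, i.e.\ that $\eta(w_1) = \eta(w_2)$ implies $\phi(w_1) = \phi(w_2)$; given this, the map $\pi$ is well defined, and is automatically a surjective monoid morphism because $\phi$ is one. Suppose then that $\eta(w_1) = \eta(w_2)$, so that $\eta_m(w_1) = \eta_m(w_2)$ for every $m \in M$. Since $\eta_m$ is the syntactic morphism of $L_m$, this equality means that $w_1$ and $w_2$ are syntactically $L_m$-equivalent; in particular, taking the trivial context $x = y = 1$ yields $w_1 \in L_m \iff w_2 \in L_m$. Specializing to $m = \phi(w_1)$ gives $w_2 \in L_{\phi(w_1)}$, whence $\phi(w_2) = \phi(w_1)$, as desired.

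There is essentially no obstacle beyond this observation: the entire content of the lemma is that refining the kernel congruence of $\phi$ to the intersection of the syntactic congruences of the classes $\phi^{-1}(m)$ still separates classes of $\phi$, and this reduces to the trivial-context specialization of the syntactic congruence. The remaining verifications—that $N$ is a submonoid of $P$ and that $\pi$ is a well-defined surjective morphism—are immediate from the constructions.
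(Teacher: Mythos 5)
Your proof is correct and follows essentially the same route as the paper's: both reduce the division to showing that the product map $\eta=(\eta_m)_{m\in M}$ refines $\phi$, which is verified by specializing to $m=\phi(w_1)$ and using that the syntactic morphism of $\phi\inv(m)$ recognizes $\phi\inv(m)$. Your explicit remark about replacing $M$ by $\phi(A^*)$ when $\phi$ is not onto is a sensible precision (the paper tacitly assumes surjectivity, as in its Section 1 statement), but otherwise the two arguments coincide.
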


\begin{proof}
For each $m\in M$, let $\eta_m\colon A^* \to \synt(\phi\inv(m))$ be 
the syntactic morphism of $\phi\inv(m)$. It suffices to show that for 
each $u,v\in A^*$, $\eta_m(u) = \eta_m(v)$ for each $m\in M$ implies 
$\phi(u) = \phi(v)$.

Indeed, let $m = \phi(u)$. Then $u\in \phi\inv(m)$ and since 
$\eta_m(v) = \eta_m(u)$, we have $v\in \phi\inv(m)$, $\phi(v) = m = 
\phi(u)$.
\end{proof}

\begin{corollary}\label{SW:cor: syntactic generation}
    Every pseudovariety of  monoids is generated by the 
    syntactic  monoids it contains.
\end{corollary}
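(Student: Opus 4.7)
The plan is to use Lemma~\ref{SW:lemma: syntactic generation} together with the standard fact that the syntactic monoid of a language recognized by $M$ always divides $M$. Let $\psvV$ be a pseudovariety and let $\psvW$ denote the pseudovariety generated by the syntactic monoids lying in $\psvV$; I want to show $\psvW = \psvV$. The inclusion $\psvW \subseteq \psvV$ is immediate because $\psvV$ is closed under the generating operations, so all the work is in showing $\psvV \subseteq \psvW$.

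Fix $M \in \psvV$. First I would pick a finite alphabet $A$ and a surjective morphism $\phi\colon A^* \to M$ (e.g.\ take $A = M$ and let $\phi$ evaluate a word as a product in $M$). By Lemma~\ref{SW:lemma: syntactic generation}, $M$ divides $\prod_{m\in M}\synt(\phi^{-1}(m))$, a \emph{finite} direct product. So to conclude $M\in\psvW$ it is enough to show that each $\synt(\phi^{-1}(m))$ lies in $\psvV$ (they will then automatically lie in $\psvW$ by definition of $\psvW$, and a finite product of elements of $\psvW$ followed by a divisor is again in $\psvW$).

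To see that each $\synt(\phi^{-1}(m))$ is in $\psvV$, I would invoke the standard property of syntactic monoids: since $M$ recognizes $\phi^{-1}(m) = \phi^{-1}(\{m\})$ via $\phi$, the syntactic monoid $\synt(\phi^{-1}(m))$ divides $M$. Because $\psvV$ is a pseudovariety and hence closed under division, $M\in\psvV$ implies $\synt(\phi^{-1}(m))\in\psvV$, as required.

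I do not expect a real obstacle here: the crucial content was packaged into Lemma~\ref{SW:lemma: syntactic generation}, and the remaining step is just the recognition–division property of syntactic monoids combined with closure under division. The only mild care needed is to remember that the product in the lemma is indexed by the finite set $M$, so that only finitely many generators of $\psvW$ are used, which is what legitimately places $M$ inside the (finitely generated) pseudovariety $\psvW$.
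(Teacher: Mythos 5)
Your argument is correct and matches the paper's own proof: both rest on Lemma~\ref{SW:lemma: syntactic generation} plus the observation that $M$ recognizes each $\phi\inv(m)$, so each $\synt(\phi\inv(m))$ divides $M$ and therefore lies in the pseudovariety, making the finite product a witness that $M$ is generated by syntactic monoids of the pseudovariety. No gaps; the proposal is just a slightly more spelled-out version of the same argument.
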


\begin{proof}
The result follows directly from Lemma~\ref{SW:lemma: syntactic
generation}, since $M$ recognizes each $\phi\inv(M)$ ($m\in M$): thus each
$\synt(\phi\inv(m))$ divides $M$ and hence lies in the pseudovarieties 
containing $M$.
\end{proof}

Now let $\mathcalV$ be a variety of languages and let $E$ be a set of 
profinite identities defining $\mathcalV$. Let also $\psvV$ be  the 
class of finite monoids satisfying the profinite identities in $E$. It 
is easily verified that $\psvV$ is a pseudovariety.

Moreover, if $L$ is a regular language in $A^*$, we have $L\in 
A^*\mathcalV$ if and only if $L \vdash E$, if and only if $\synt(L)$ 
satisfies the profinite identities in $E$, if and only if $\synt(L)\in \psvV$.

Thus $\psvV \mapsto \mathcalV$ in the correspondence described in
Section~\ref{SW:sec: varieties}. If $\psvW$ is another pseudovariety such 
that $\psvW\mapsto \mathcalV$, then $\psvV$ and $\psvW$ contain the same syntactic 
monoids, and Corollary~\ref{SW:cor: syntactic generation} shows that $\psvV 
= \psvW$. This establishes Eilenberg's Theorem.

For Reiterman's Theorem, we start with a pseudovariety $\psvV$ and 
consider the associated variety of languages $\mathcalV$. The above 
reasoning shows that $\psvV$ is defined by any set of profinite identities 
which, seen in the setting of classes of languages, defines $\mathcalV$.


Note that these proofs are different from the classical proofs of
Eilenberg's theorem, in \cite{Eilenberg:1976} or \cite{Pin:1986}, and of
Reiterman's theorem, in \cite{Almeida:1994}, \cite{Pin&Weil:1996} or
\cite{Reiterman:1982}.

\subsection{Examples of varieties}



We now look at some concrete instances of varieties, revisiting our examples from Section 1, among others, in light of the theory presented above.  In doing so, we will work from both sides of the correspondence between pseudovarieties and varieties of languages, at times beginning with a variety of languages, at others with a property of a class of finite monoids.

\subsubsection{Idempotent and commutative monoids}\label{SW:sec: j1 again}

We begin, as before, with the variety of languages corresponding to the pseudovariety $\psvJ_1$. For each finite alphabet $A$, let $A^*{\mathcal J}_1$ be the smallest boolean-closed family of subsets of $A^*$  that
contains all the languages $B^*$, where $B\subseteq A$. Equivalently, it is the smallest boolean-closed set containing all the $A^*aA^*$ ($a\in A$). Putting it again differently, $A^*{\mathcal J}_1$ is precisely the family of languages $L$ in
$A^*$ for which membership of a word $w$ in $L$ depends only  on the set $\alpha(w)$ of letters of $w$.  This is
because
\begin{displaymath}
\{v\in A^* \mid \alpha(v)=\alpha(w)\} = \alpha(w)^*\backslash \bigcup_{B\subsetneq \alpha(w)}B^*.
\end{displaymath}
Observe that for all $a\in A$ and $B\subseteq A$,
\begin{displaymath}
a^{-1}B^* = B^*a^{-1} = \left\{\begin{array}{ll}\emptyset & \mbox{if $a\notin B$}\\
                                                                        B^* & \mbox{if $a\in B$.}
                                                                        \end{array}\right .
\end{displaymath}
Further, if $C$ is another finite alphabet and $\phi\colon C^*\to A^*$ is a morphism,
\begin{displaymath}
\phi^{-1}(B^*) = (C\cap \phi^{-1}(B))^*.
\end{displaymath}
Left and right quotient and inverse image under morphisms all commute with
boolean operations. So these two observations imply, independently of any
algebraic considerations, that ${\mathcal J}_1$ is a variety of languages, and
thus, by Theorem~\ref{SW:thm: identities and varieties}
 is defined by a set of profinite identities. Further, from our proof of Eilenberg's Theorem, the same set of identities defines the corresponding pseudovariety of finite monoids.
 
Of course, we have already exhibited these identities, but let us see what they look like in the context of our equational theory.  Let $X=\{x,y\}$, and let $A$ be any finite alphabet. Every language $L\in A^*{\mathcal J}_1$ satisfies  the identities  $xy=yx$ and $x^2=x$, since for any morphism $\phi\colon X^*\to A^*$ and any $u,v\in A^*, $  $\alpha (u\phi(xy)v)=\alpha(u\phi(yx)v)$, and $\alpha(u\phi(x^2)v)=\alpha(u\phi(x)v)$.  Conversely, suppose $L\subseteq A^*$ satisfies these identities. We will show $L\in A^*{\mathcal J}_1$:  Let $w, w'\in B^*$, with $w\in L$ and  $\alpha(w)=\alpha(w')$.  We claim $w'\in L$.  Since $\alpha(w)=\alpha(w')$, we can transform both $w$ and $w'$ into a common normal form $w''$ by successively interchanging adjacent letters until the word is sorted (with respect to some total ordering on $A$) and then  replacing occurrences of $aa$ by $a$, where $a\in A$.  Interchanging adjacent letters entails replacing $ua_1a_2v$ by $ua_2a_1v$, where $u,v\in A^*$ and $a_1,a_2\in A$.  Since $L$ satisfies the identity $xy=yx$, if $ua_1a_2v\in L$ then $ua_2a_1v\in L$ (using the morphism $\phi\colon X^*\to A^*$ that maps $x, y$  to $a_1,a_2$, respectively.).  Similarly, replacing $aa$ by $a$ preserves membership in $L$, since $L$ satisfies the identity $x^2 = x$.  Thus ${\mathcal J}_1$ is defined by this pair of identities.  It follows that the corresponding pseudovariety $\psvJ_1$ of finite monoids is defined by the same pair of identities, and thus consists of the idempotent and commutative monoids.

\subsubsection{Piecewise-testable languages}

Now let us consider the piecewise-testable langua\-ges\index{language!piecewise-testable} of Section~\ref{SW:subsection:j}. We denote the family of piecewise-testable languages over a finite alphabet $A$ by $A^*{\mathcal J}$.  Let us look at the profinite identities satisfied by these languages.   As observed earlier (Section~\ref{SW:sec: profinite monoid}), if $u\in X^*$ then the sequence $(u^{n!})_n$ is a Cauchy sequence whose limit is written $u^{\omega}$. Moreover, for any morphism $\phi\colon X^*\to A^*$, where $A$ is a finite alphabet, $\hat\phi(u^{\omega})=(\hat\phi(u))^{\omega}$ (the idempotent power of $\hat\phi(u)$).  Now let $X=\{x, y\}$.  We claim that every piecewise-testable language $L$ over $A^*$ satisfies the profinite identities
\begin{displaymath}
(xy)^{\omega}x=(xy)^{\omega}=y(xy)^{\omega}.
\end{displaymath}
This is equivalent to saying that for all $s,t,u,v\in A^*$,
\begin{displaymath}
s(tu)^{\omega}tv\in\overline L\Leftrightarrow s(tu)^{\omega}v\in\overline L\Leftrightarrow su(tu)^{\omega}v\in\overline L.\end{displaymath}
Now fix an integer $k>0$.  For sufficiently large values of $n$, the words 
\begin{displaymath}
s(tu)^{n!}tv, s(tu)^{n!}v, su(tu)^{n!}v
\end{displaymath}
contain the same subwords of length $k$.  Since $L$ is piecewise-testable,  for sufficiently large $n$, all but finitely many of the terms of the three sequences are either all in $L$ or all outside of $L$. Since $\overline L$ is clopen, the three respective  limits are either all in $\overline L$ or all outside $\overline L$.

Thus, as we showed in Section~\ref{SW:sec:eilenbergreiterman}, the syntactic monoid of any piecewise testable language satisfies these same profinite identities. We arrive again at the observation that the syntactic monoid of every piecewise-testable language satisfies the identities $(xy)^{\omega}x=(xy)^{\omega}=y(xy)^{\omega}$. That these identities define the pseudovariety $\psvJ$ of finite ${\mathcal J}$-trivial monoids is simple to establish.  That they completely characterize the variety of piecewise-testable languages is the deep content of Simon's Theorem~\cite{Simon:1975}.  

\subsubsection{Group languages}

Similarly, the pseudovariety $\psvG$ of finite groups is defined by the profinite identity $x^{\omega} = 1$.  As a consequence, the corresponding variety ${\mathcal G}$ of languages is defined by the same profinite identity.  In contrast to the other  examples presented here, we do not possess a simple description of ${\mathcal G}$ in terms of basic operations on words.

\subsubsection{Left-zero semigroups}\index{semigroup!left-zero}

We already appealed to Eilenberg's Theorem in Section 1 to show that the class ${\mathcal K}_1$ is not a variety of languages. But we can show here that it is a ${\mathcal C}$-variety for a slightly restricted class ${\mathcal C}$ of morphisms. Let ${\mathcal C}_{ne}$ denote the class  of \emph{non-erasing} morphisms\index{morphism!non-erasing} between finitely-generated free monoids--those $\phi\colon A^*\to B^*$ such that for all $a\in A$, $\phi(a)\neq 1$. Let $L\in A^*{\mathcal K}_1$.  If $s,t,u,v\in A^*$, and $t,u\neq 1$, then $stuv\in L$ if and only if $stv\in L$.   Moreover, this property of $L$ characterizes membership in $A^*{\mathcal K}_1$.  One way to state this property is that  the variety of languages ${\mathcal K}_1$ is defined by the ${\mathcal C}_{ne}$-identity $xy=x$.  Equivalently, the corresponding ${\mathcal C}_{ne}$-pseudovariety  $\psvK_1$ of stamps is defined by the same ${\mathcal C}_{ne}$-identity.  This means  $(\phi\colon A^*\to M)\in\psvK_1$ if $\phi(uv)=\phi(u)$, for $u,v\in A^+$.  

Alternatively, one may  consider, instead of the ${\mathcal C}_{ne}$-pseudovariety generated by the syntactic morphisms of languages in ${\mathcal K}_1$, the pseudovariety of finite \emph{semigroups} generated by the images of nonempty words under the syntactic morphisms. This was the approach originally taken, but here we prefer to emphasize that all these many different flavors of pseudovarieties can be treated in the same general setting.

\subsubsection{Quasiaperiodic stamps}\index{stamp!quasi-aperiodic}

Whenever we have a morphism $\phi\colon A^*\to M$, the family of sets
\begin{displaymath}
\{\phi(A^s) \mid s>0\}
\end{displaymath}
forms a subsemigroup of the power set semigroup ${\mathcal P}(M)$.  As this is a finite cyclic semigroup, generated by $\phi(A)$, it contains a unique idempotent. Thus there is some $s>0$ such that $\phi(A^s)=\phi(A^{2s})$, so that $\phi(A^s)$ is a subsemigroup of $M$.  We call this the \emph{stable semigroup}\index{semigroup!stable} of $\phi$.  Let ${\bf QA}$ denote the set of morphisms $\phi$ from a free finitely-generated monoid onto a finite monoid such that $\phi$ is surjective, and the stable semigroup of $\phi$ is aperiodic.

We claim ${\bf QA}$ is a ${\mathcal C}_{lm}$-pseudovariety of stamps, where ${\mathcal C}_{lm}$ consists of morphisms $\psi\colon A^*\to B^*$ between finitely generated free monoids such that all $\psi(a)$, where $a\in A$, are nonempty words having the same length.  (The letters \emph{lm} stand for \emph{length-multiplying}, since the lengths of all words in $A^*$ are multiplied by a constant factor when $\psi$ is applied.)  To see this, suppose $(\phi\colon B^*\to M)\in{\bf QA}$, and $(\psi\colon A^*\to B^*)\in {\mathcal C}_{lm}$.  Let $\phi(B^s)$ be the stable semigroup of $\phi$, $\phi\psi(A^t)$ the stable semigroup of $\phi\psi\colon A^*\to Im(\phi\psi)$, and $k$ the length of each $\psi(a)$ for $a\in A$. Then $\phi\psi(A^t)=\phi\psi(A^{st})\subseteq\phi(A^{kst})=\phi(A^s)$, and thus the stable semigroup of $\phi\psi$ is also aperiodic.  Further, if the stable semigroups $\phi_j(A^{s_j})$ of stamps $\phi_j\colon A^*\to M_j$, for $j=1,2$, are aperiodic, then the stable semigroup of $\phi_1\times\phi_2$ is contained in $\phi_1(A^{s_1})\times\phi_2(A^{s_2})$, and is therefore aperiodic.Thus  ${\bf QA}$ is a ${\mathcal C}_{lm}$-pseudovariety, and is accordingly defined by a set of profinite ${\mathcal C}_{lm}$-identities.  What does it mean for a stamp $\phi\colon A^*\to M$ to satisfy a ${\mathcal C}_{lm}$ identity $u=v$?  In such an identity, $u$ and $v$ are elements of $\widehat{X^*}$ for some finite alphabet $X$.  The identity is satisfied if for every morphism $\psi\colon X^*\to A^*$ in ${\mathcal C}_{lm}$, $\hat \phi \hat\psi(u)=\hat\phi\hat\psi(v)$.  Informally, this says that so long as we replace the letters in $u$ and $v$ by elements of $A^+$ that all have the same length, the images in $M$ are identical.  We claim that {\bf QA} is defined by the single profinite ${\mathcal C}_{lm}$-identity
\begin{displaymath}
(x^{\omega-1}y)^{\omega} = (x^{\omega-1}y)^{\omega+1}.
\end{displaymath}

Let us prove this.  First, we show that ${\bf QA}$ satisfies the identity.  Let $(\phi\colon A^*\to M)\in{\bf QA}$, and choose $p>0$ such that for all $m\in M$, $m^p$ is idempotent. We then also have $m^{ps}$ idempotent for all $m\in M$, where $\phi(A^s)$ is the stable semigroup of $\phi$. If the identity is \emph{not} satisfied, then there exist words $u$ and $v$ in $B^*$, both of length $k>0$, such that
\begin{displaymath}
(\phi(u^{ps-1}v))^{ps}\neq (\phi(u^{ps-1}v))^{ps+1}.
\end{displaymath}
Thus $\{(\phi(u^{ps-1}y))^{ps+r} \mid r\geq 0\}$ is a nontrivial group in $\phi((A^s)^+)=\phi(A^s)$, contradicting membership in ${\bf QA}$.  Conversely, suppose a stamp 
$\phi\colon A^*\to M$ satisfies the identity.  Suppose the stable semigroup $\phi(A^s)$ contains a group element $g=\phi(u)$, with $|u|=s$.  Let $e=\phi(v)$, where $|v|=s$ is the identity of this group. Since $\phi$ satisfies the identity,
\begin{displaymath}
e=\phi((u^{\omega-1}v)^{\omega})=\phi((u^{\omega-1}v)^{\omega+1})=g^{-1},
\end{displaymath}
so every group in $\phi(A^s)$ is trivial.

We introduced the ${\mathcal C}_{lm}$-pseudovariety ${\bf QA}$ in Section 1 in quite different terms, by giving a logical description of the corresponding ${\mathcal C}_{lm}$-variety of languages.  We will show in Section~\ref{SW:sec: logic} that they do in fact correspond.

\subsubsection{$\Sigma_1$-languages}\index{language!$\Sigma_1$}

As in Section~\ref{SW:sssec: sigma1}, we denote by $A^*{\mathcal J}^+$ the family of languges over $A$ defined by $\Sigma_1$ sentences.  Languages in this family are precisely the finite unions of the languages $L_v$, where $v\in A^*$.   We claim that ${\mathcal J}^+$ is defined by the profinite ordered identity $x\leq 1$.  A language $L$ satisfies this identity if and only if for all $u,v,w\in A^*$, whenever $uw\in L$, then $uvw\in L$.  Clearly, each $L_v$ satisfies this identity.  We must show, conversely, that any language satisfying this identity is a finite union of $L_v$ for various $v\in A^*$.  Certainly, if $L$ satisfies the identity and $v\in L$, then $L_v\subseteq L$, so that 
\begin{displaymath}
L=\bigcup_{v\in L} L_v.
\end{displaymath}
We need to show that this can be replaced by a finite union.  Let $T$ consist of the \emph{subword-minimal} elements of $L$, that is, those $v\in L$ such that no proper subword of $v$ is in $L$.  Then
\begin{displaymath}
L=\bigcup_{v\in T}L_v.
\end{displaymath}
We now invoke a theorem of G.Higman~\cite{Higman:1952}: The subword ordering in $A^*$ has no infinite antichains: That is, any set $T$ of words in which no element is a strict subword of another element is finite.  

The corresponding ordered pseudovariety $\psvJ^+$ consequently consists of all partially ordered finite monoids for which the identity $1$ is the maximum element, and thus a language belongs to $A^*{\mathcal J}^+$ if and only if its ordered syntactic monoid satisfies this property.

\subsubsection{Languages with zero}  

All of our examples so far have concerned some flavor of varieties of languages, language families that are defined across all finite alphabets and are closed under inverse images of morphisms between free monoids.  Part of the great novelty of the equational theory of Gehrke \emph{et al.} \cite{Gehrke&Grigorieff&Pin:2008} presented here is that it applies to language classes with weaker closure properties.  Here we give a simple example.

We say a regular language $L\subseteq A^*$ is a \emph{language with zero}\index{language!with zero} if $\synt(L)$ has a zero.  This is equivalent to saying that there is a two-sided ideal $J$ in $A^*$ such that either $J\subseteq L$ or $L\cap J=\emptyset$.  This property is easily seen to be closed under boolean operations and quotients.  It is, not, however,  closed under inverse images of any composition-closed class ${\mathcal C}$ of morphisms that contains the length-preserving morphisms.  Indeed, let $L\subseteq A^*$ be any regular language without a zero, and let $b$ be a new letter.  Then, viewed as a subset of $(A\cup\{b\})^*$, $L$ has a zero, so this class is not closed under the inverse image of the length-preserving morphism that embeds $A^*$ in $(A\cup\{b\})^*$.  Nonetheless, by our Corollary~\ref{SW:cor: boolean and quotients}, this class of languages is defined by a set of profinite inequalities.

We now exhibit such a set of inequalities.  We start by defining three sequences of words in $A^*$.  Let
\begin{displaymath}
u_1, u_2,\ldots
\end{displaymath}
be any enumeration of the elements of $A^*$,
let 
\begin{displaymath}
v_n=u_1\cdots u_n,
\end{displaymath}
and 
\begin{displaymath}
w_1=1, w_{n+1}=(w_nv_nw_n)^{n!}.
\end{displaymath}
Look at the image of the $w_i$ under  a surjective morphism $\phi\colon A^*\to M$, where $M$ is finite.  Since every $u\in A^*$ occurs as a factor of all but finitely many $w_i$, almost all $\phi(w_i)$ are in the minimal ideal  $K$ of $M$.  Since for all $m\in M$, $m^{n!}$ is idempotent for sufficiently large $n$, almost all $\phi(w_i)$ are idempotents in the minimal ideal of $M$.  Finally, if $\phi(w_i)$ is such an idempotent $e$, then $\phi(w_{i+1})$ is an idempotent in $eKe$, and so is itself equal to $e$.  Thus for every finite monoid, the sequence $(\phi(w_n))_n$ is convergent, so $(w_n)_n$
 converges to an element $\rho_A$ of $\widehat{A^*}$, such that $\hat\phi(\rho_A)$ is an idempotent in the minimal ideal of $\phi(A^*)$.

Suppose $L\subseteq A^*$ has a zero.  Then the minimal ideal of $\synt(L)$ consists of this 0 alone, so if $\eta$ is the syntactic morphism of $L$ and $a\in A$, $\hat\eta(\rho_A)=\hat\eta(a\rho_A)=\hat\eta(\rho_Aa)$.  Thus $L$ satisfies the equalities
\begin{displaymath}
a\rho_A=\rho_A=\rho_Aa
\end{displaymath}
for all $a\in A$.  Conversely, if $L$ satisfies these equalities, then the minimal ideal of $\eta(A^*)$ contains just one element, so $L$ is a language with zero.  So these equalities define the class of languages with zero.

\subsubsection{Languages defined by density}\label{SW:sec: density}

Say that a language $L \subseteq A^*$ is \emph{dense}\index{language!dense} if every word of $A^*$ occurs as a factor of a word in $L$, that is, $L \cap A^*uA^* \ne \emptyset$ for every $u\in A^*$. The set consisting of $A^*$ and the non-dense languages forms a quotient-closed lattice, which is defined by the profinite inequalities $x\le 0$ ($x\in A^*$)---this is short for $a\rho_A = \rho_Aa = \rho_A$ for every $a\in A$ and $x\le \rho_A$ for every $x\in A^*$ \cite{Gehrke&Grigorieff&Pin:2008}.

Now define the \emph{density of a language} $L$ as the function $d_L(n)$ which counts the number of words of length $n$ in $L$. A language with bounded density (also called \emph{slender}\index{language!slender}) is easily seen to be a finite union of languages of the form $xu^*y$ ($x,u,y\in A^*$). Similarly, a language of polynomial density, also called \emph{sparse}\index{language!sparse}, can be shown to be a finite union of languages of the form $u_0^*v_1u_1^*\cdots v_nu_n^*$ where the $u_i$ and $v_j$ are in $A^*$. Together with $A^*$, the set of slender (resp. sparse) languages in $A^*$ forms a quotient-closed lattice of languages, for which defining profinite inequalities can be found in \cite{Gehrke&Grigorieff&Pin:2008}.

\subsection{Deciding membership in an equationally defined class of 
languages}

We are often interested in decision problems for families of regular languages:  We say that a family ${\mathcal F}$ of regular languages over a finite alphabet $A$ is \emph{decidable} if there is an algorithm that, given a regular language in $L\subseteq A^*$ as input, determines whether $L\in{\mathcal F}$.
Here a regular language $L$  is `given' by specifying a DFA that recognizes $L$, or some other formalism (\emph{e.g.,} regular expression, logical formula) from which a DFA can be effectively computed.  The problem arises, for example, if we are looking for a test of whether a given language is expressible in some logic for defining regular languages.  (See Section~\ref{SW:sec: logic}.)

We can similarly define decidable families of finite monoids: Such a family ${\mathcal F}$ is decidable if there is an algorithm that, given the multiplication table for a finite monoid $M$, determines whether $M\in{\mathcal F}$.   The definition extends in the obvious fashion to families of ordered monoids and stamps.  For ordered monoids the input includes, in addition to the multiplication table of $M$, a representation of the graph of the partial order on $M$.  For stamps $\phi\colon A^*\to M$ we are also given the values $\phi(a)$ for $a\in A$.

We will say that a variety ${\mathcal V}$ of languages is decidable if $A^*{\mathcal V}$ is decidable for every finite alphabet $A$.  In this case the Eilenberg correspondence theorem gives a rather obvious connection between the two kinds of decidable families:

\begin{theorem}
A (positive) variety (respectively, ${\mathcal C}$-variety)  of languages is decidable if and only if the corresponding pseudovariety of (ordered) monoids (respectively, stamps) is decidable.
\end{theorem}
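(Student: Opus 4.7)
The plan is to prove both directions of the equivalence separately, in each case relying on the Eilenberg correspondence $L\in A^*\mathcalV \Leftrightarrow \synt(L)\in\psvV$ (and its analogue $L\in A^*\mathcalV \Leftrightarrow \eta_L\in\psvV$ in the stamp case, with the obvious variant for ordered monoids) established in Section~\ref{SW:sec:eilenbergreiterman}. The three flavors---ordinary monoids, ordered monoids, and stamps---can be treated in parallel, since the arguments are virtually identical.

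For the direction from pseudovariety to variety, I assume $\psvV$ is decidable, and suppose a regular language $L\subseteq A^*$ is presented, say by a DFA. The syntactic monoid $\synt(L)$, the syntactic order, and the syntactic morphism $\eta_L$ are all effectively computable from the DFA by standard procedures, so one simply queries the assumed membership algorithm for $\psvV$ on this data.

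For the converse, suppose $\mathcalV$ is decidable and let a finite monoid $M$ (respectively ordered monoid, or stamp $\phi\colon A^*\to M$) be given. In the stamp case $\phi$ is already provided; in the abstract case I pick any surjective morphism $\phi\colon A^*\to M$, for instance by taking $A=M$ with $\phi$ the canonical map. For each $m\in M$, the language $L_m = \phi\inv(m)\subseteq A^*$ is recognized by $\phi$, hence regular, and a DFA for $L_m$ is read off immediately from $M$ (taking $1$ as initial state and $m$ as the sole accepting state). The key claim is that $M\in\psvV$ if and only if $L_m\in A^*\mathcalV$ for every $m\in M$, which reduces the question to $|M|$ queries to the decision procedure for $\mathcalV$. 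The implication $(\Rightarrow)$ is immediate: since $M$ recognizes each $L_m$, $\synt(L_m)$ divides $M$ and therefore lies in $\psvV$. The implication $(\Leftarrow)$ is exactly Lemma~\ref{SW:lemma: syntactic generation}: $M$ divides $\prod_{m\in M}\synt(L_m)$, which belongs to $\psvV$ by closure under finite direct products.

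The delicate point, and what I would take most care to verify, is that this reduction runs uniformly across all three settings. For ordered monoids, one needs the evident ordered analogue of Lemma~\ref{SW:lemma: syntactic generation}, which follows from the universal property of the ordered syntactic morphism. For $\mathcalC$-pseudovarieties of stamps, one must check that the given stamp $\phi$ and the product stamp $\prod_{m\in M}\eta_{L_m}\colon A^*\to \prod_{m\in M}\synt(L_m)$ have the same kernel, so that $\phi$ is recovered from the product stamp by composition with an isomorphism of its image; this, together with the closure axioms for $\mathcalC$-pseudovarieties applied with the identity morphism of $A^*$ (which lies in $\mathcalC$, since $\mathcalC$ contains all length-preserving morphisms), yields $\phi\in\psvV$ whenever each $\eta_{L_m}\in\psvV$. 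Beyond these routine verifications, no serious obstacle arises.
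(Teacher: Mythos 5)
Your proposal is correct and follows essentially the same route as the paper: compute $\synt(L)$ (or $\eta_L$, with the order) from the DFA for one direction, and for the converse take $A=M$ with a canonical surjection $\phi$, use the DFAs $(M,A,1,\{m\})$ for the languages $\phi\inv(m)$, and conclude via Lemma~\ref{SW:lemma: syntactic generation} together with the fact that each $\synt(\phi\inv(m))$ divides $M$. Your extra verifications for the ordered and stamp variants are exactly the routine adaptations the paper leaves implicit with the remark that ``the argument is essentially the same for all the other variants.''
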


\begin{proof}
We give the proof just for the case of ordinary varieties of languages and pseudovarieties of monoids; the argument is essentially the same for all the other variants. Let ${\mathcal V}$ be a variety of languages and $\psvV$ the corresponding pseudovariety of monoids.  Suppose first that $\psvV$ is decidable. Let ${\mathcal A}=(Q,A,i,F)$ be a DFA recognizing a language $L\subseteq A^*$.  From ${\mathcal A}$ we can effectively construct the multiplication table of $\synt(L)$.  We then apply the algorithm for $\psvV$ to decide whether $\synt(L)\in \psvV$, and thus whether $L\in A^*{\mathcal V}$.  Conversely, suppose ${\mathcal V}$ is decidable.  Let $M$ be a finite monoid and choose a finite alphabet $A$ together with a surjective morphism $\phi\colon A^*\to M$.  (For example, we could choose $A=M$ and $\phi$ the extension to $A^*$ of the identity map on $M$.)  Then by Lemma~\ref{SW:lemma: syntactic generation} and Corollary~\ref{SW:cor: syntactic generation}, $M$ divides the direct product of the monoids $\synt(\phi^{-1}(m))$ for $m\in M$, and each of the $\synt(\phi^{-1}(m))$ in turn divides $M$.  Thus $M\in \psvV$ if and only if each of the languages $\phi^{-1}(m)$ is in $A^*{\mathcal V}$.  Furthermore, from $\phi$ we can construct a DFA $(M,A,1,\{m\})$ recognizing $\phi^{-1}(m)$, and thus decide whether each is in $A^*{\mathcal V}$.  Thus $\psvV$ is decidable.
\end{proof}

Decision problems for varieties of regular languages can have arbitrarily large computational complexity, or indeed be undecidable.  To see this, observe simply that if $P$ is \emph{any} set of primes, then we can form the pseudovariety $\psvG_P$ of finite groups $G$ such that every prime divisor of $|G|$ is in $P$.  Testing membership of a given prime $p$ in $P$ then reduces, in time polynomial in $p$, to testing membership in $\psvG_P$, so $\psvG_P$ is at least as complex as $P$.

On the other hand, Reiterman's theorem, which says varieties are defined by sets of profinite identities, suggests that we could determine membership in varieties simply by verifying whether identities hold in finite monoids.  This is deceptive, since elements of $\widehat{X^*}$ do not generally have simple descriptions that make it possible to evaluate their images in finite monoids, and, further, the equational description of a pseudovariety might require inifinitely many profinite identities.  We can nonetheless say something definitive about the complexity of the decision problems in the case where the equational definition consists of a finite set of profinite identities $\rho=\sigma$, where $\rho$ and $\sigma$ are $\omega$-terms in $\widehat{X^*}$:  This means that $\rho$ and $\sigma$ are formed from elements of $X$ by successive application of concatenation and the operation $\tau\mapsto\tau^{\omega}$.

\begin{theorem} Let ${\mathcal V}$ be a variety of languages defined by a finite set of profinite identities of the form $\rho=\sigma$, where $\rho$ and $\sigma$ are $\omega$-terms, and let $\psvV$ be the corresponding pseudovariety of finite monoids.  Then $\psvV$ is decidable by a logspace algorithm in the size of the input multiplication table, and ${\mathcal V}$ is decidable by a polynomial space algorithm in the size of the input automaton.

\end{theorem}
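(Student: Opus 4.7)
The plan is a direct evaluation algorithm that covers both claims with the same underlying strategy.

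\textbf{Part 1 (logspace in the multiplication table).} Fix one of the identities $\rho = \sigma$ in the defining finite set, with $\rho,\sigma \in \widehat{X^*}$ for some finite alphabet $X = \{x_1,\ldots,x_n\}$. To decide whether $M$ satisfies $\rho = \sigma$, I iterate over all $|M|^n$ substitutions $\phi\colon X \to M$, using an $n$-tuple of pointers into $M$ as the counter; since $n$ is a constant depending only on the fixed identity, this counter occupies $O(\log|M|)$ bits. For each substitution, I evaluate $\hat\phi(\rho)$ and $\hat\phi(\sigma)$ bottom up on the (fixed) syntax tree of the $\omega$-term. A product is a single table lookup. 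An $\omega$-power $m^\omega$ is computed by iterating $k = 1,2,\ldots$, updating the current power, and stopping at the first $k$ where $m^k \cdot m^k = m^k$; this loop terminates within at most $|M|$ steps, uses a counter of $O(\log|M|)$ bits, and stores only a constant number of elements of $M$. Because the $\omega$-term has bounded depth, only a constant number of intermediate $M$-elements ever need to be remembered. The entire check for one identity thus runs in workspace $O(\log|M|)$, and iterating over the finite list of identities preserves this bound.

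\textbf{Part 2 (polynomial space in the input automaton).} Given a DFA for $L$ with state set $Q$, I first minimize it in polynomial space so that $\synt(L)$ coincides with its transition monoid. Elements of $\synt(L)$ are then represented implicitly as transformations $Q \to Q$ occupying $O(|Q|\log|Q|)$ bits each; multiplication is function composition in polynomial space. I now simulate the Part~1 algorithm on this representation: enumerate all tuples of transformations of $Q$ (a counter of polynomial bit-length suffices), test that each component lies in $\synt(L)$, and for qualifying tuples evaluate $\hat\phi(\rho)$ and $\hat\phi(\sigma)$ using the $\omega$-power routine, now with a counter of length $O(|Q|\log|Q|)$ to bound the search for the idempotent. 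Testing $f \in \synt(L)$ reduces to reachability of $f$ from the identity in the Cayley graph with generators $\eta_L(a)$, $a\in A$; this graph has at most $|Q|^{|Q|}$ vertices, each representable in $O(|Q|\log|Q|)$ bits, so reachability is in nondeterministic polynomial space, hence by Savitch's theorem in deterministic polynomial space in $|Q|$. All pieces fit in polynomial space.

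\textbf{Main obstacle.} The delicate point is Part~2: $\synt(L)$ can be exponentially larger than the input DFA, so we must reason about it purely implicitly. The crucial observation is that a single transformation (and a bounded-length tuple of transformations) costs only polynomial space, that both composition and the iterative $\omega$-power can be computed within the same budget, and that the membership test ``is $f \in \synt(L)$?'' is a Cayley-graph reachability problem which sits inside PSPACE. Once these pieces are assembled, the algorithm of Part~1 transports verbatim, with ``element of $M$'' replaced by ``transformation of $Q$'' and ``table lookup'' replaced by ``function composition''.
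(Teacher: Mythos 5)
Your proof is correct, and your Part 1 is essentially the paper's own argument: loop over all substitutions of monoid elements for the constantly many variables, evaluate each $\omega$-term bottom-up, and compute each $\omega$-power by iterating powers until an idempotent appears, all in $O(\log|M|)$ workspace. For Part 2 you take a genuinely different route. The paper never manipulates abstract transformations: it gives a \emph{nondeterministic} polynomial-space algorithm for \emph{non}-membership which guesses the substituted words letter by letter (storing only the state vectors $(q_1x,\ldots,q_nx)$, never the words themselves), computes the vectors of $\hat\phi(\rho)$ and $\hat\phi(\sigma)$ by the same power-iteration trick, and accepts on a mismatch; Savitch's theorem and closure of deterministic polynomial space under complement then finish the job. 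Because every guessed substitution is by construction the image of a word under the transition morphism, the paper needs no test for membership in the monoid, whereas you enumerate \emph{all} tuples of transformations of $Q$ and must filter them by the Cayley-graph reachability test ``is $f\in\synt(L)$?'', itself put in polynomial space via Savitch. Your version uses a little more machinery but is deterministic from the outset, and your explicit minimization step is a genuine plus: comparing transformations of the automaton is equivalent to comparing syntactic images only when the DFA is minimal (the transition monoid of a non-minimal DFA may fail an identity that $\synt(L)$ satisfies), a point the paper's proof leaves implicit.
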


\begin{proof} We first consider testing membership of a monoid $M$ in $\psvV$.
Let $|M|=n$.  The multiplication table of $M$ can be represented in $O(n^2\log n)$ bits and each element of $M$ by $O(\log n)$ bits.  We will show how to determine membership of $M$ in $\psvV$ using $k\cdot \log_2 n$ additional bits of workspace, where the constant $k$ is determined by the length of the longest $\omega$-term occurring in the defining profinite identities for $\psvV$.  To make the proof easier to follow, let us suppose we have an identity $((x^{\omega} y)^{\omega} z)^{\omega}= (xz)^{\omega}$.  The algorithm loops through all triples $(x,y,z)$ of elements of $M$ and writes them in the workspace.  It then uses $\log_2 n$ bits of additional workspace to compute $x^{\omega}$.  This is done by repeatedly consulting the multiplication table, writing $x^2, x^3, \ldots $ in the same workspace, and after each write, consulting the multiplication table to check if the element is idempotent.  We similarly compute $(x^{\omega}y)^{\omega}$,  $((x^{\omega} y)^{\omega} z)^{\omega}$, and $(xyz)^{\omega}$. All in all, we used $7\cdot \log_2 n$ bits of workspace.  After all the values are computed, we compare the last two.  The algorithm rejects if it finds a mismatch.  If it finds none, it goes on to the next identity, and accepts if all the identities are tested with no mismatch.

We now turn to testing membership in ${\mathcal V}$. The algorithm we give is actually a nondeterministic polynomial space algorithm for nonmembership of a regular language in $A^*{\mathcal V}$.  Since, by Savitch's Theorem (~\cite{Savitch:1970}, see, also Sipser~\cite{Sipser:2006}) nondeterministic polynomial space is equivalent to deterministic polynomial space, and the latter is closed under complement, this will be enough. Let us work with the same example identity we used in the first part of the proof.  The algorithm begins by guessing words $x,y,z$ and computing the vectors 
\begin{align*}
&(q_1x,\ldots, q_nx),\\
&(q_1y,\ldots, q_ny),\\
&(q_1z,\ldots, q_nz),
\end{align*}
where $\{q_1,\ldots,q_n\}$ is the set of states of the input DFA.  Observe that the words $x, y, z$ themselves are not stored.  Instead they are guessed letter by letter, and only the vectors of states are written in the workspace. This requires $O(n\log n)$ bits, where $n$ is the number of states of the DFA. Observe as well that once we have the vector $(q_1u,\ldots, q_nu)$ we can, with an additional $n\log_2 n$ bits, compute the vector $(q_1u^{\omega},\ldots, q_nu^{\omega})$, since we can write the vectors of the successive powers $(q_1u^k,\ldots, q_nu^k)$ reusing the same workspace, and then check after each write whether $qu^k=qu^{2k}$ for each state $q$.  As a result we obtain the vectors $(q_1\hat\phi(\rho),\ldots,q_n\hat\phi(\rho))$, $(q_1\hat\phi(\sigma),\ldots,q_n\hat\phi(\sigma))$ for some morphism $\phi\colon X^*\to A^*$.  If these vectors turn out to be different, we accept.  Thus this algorithm nondeterministically recognizes the complement of $A^*{\mathcal V}$, using $O(n\log n)$ space.
 \end{proof}
 
 The foregoing theorem illustrates a potentially large gap in complexity between testing membership in ${\mathcal V}$ from an input DFA and testing membership in the corresponding pseudovariety $\psvV$ from the multiplication table of a monoid.  This is to be expected, since an automaton is in general exponentially more succinct than the multiplication table of its transition monoid.  In some instances, however, it is possible to give efficient algorithms that begin with automata, using so-called `forbidden pattern'\index{forbidden pattern} characterizations of varieties.  We illustrate this with a very simple example, using the ordered variety ${\mathcal J}^+$.  Consider the following figure:
 
\begin{center}
\begin{picture}(25,22)(0,-22)
\node[NLangle=0.0,Nw=5.0,Nh=5.0,Nmr=2.5](n0)(0.0,-4.0){$p$}

\node[NLangle=0.0,Nw=5.0,Nh=5.0,Nmr=2.5](n1)(20.0,-4.0){$q$}

\node[NLangle=0.0,Nw=5.0,Nh=5.0,Nmr=2.5](n2)(0.0,-19.0){}

\node[NLangle=0.0,Nmarks=f,Nw=5.0,Nh=5.0,Nmr=2.5](n3)(20.0,-19.0){}

\drawedge(n0,n1){$v$}

\drawedge[ELside=r](n0,n2){$w$}

\drawedge(n1,n3){$w$}

\end{picture}
\end{center}
 
 We say that a DFA $(Q,A,i,F)$ contains this pattern if there are states $q_1, q_2$ and words $u,v,w\in A^*$ such that $iu=q_1$, $q_2=q_1v$, $q_1w\in F$, $q_2w\notin F$.  We say the DFA avoids the pattern if it does not contain it.  It is easy to see that a DFA recognizing a language $L$ avoids this pattern if and only if whenever $uw\in L$, $uvw\in L$.  Thus the languages in $A^*$ avoiding the pattern are exactly those that satisfy the inequality $x\leq 1$; that is, the language family $A^*{\mathcal J}^+$. We use this to prove the following:
 
 \begin{theorem} There is an algorithm determining membership in ${\mathcal J}^+$ that runs in nondeterministic logspace in the size of an accepting DFA. (In particular, membership can be determined in polynomial time.)
 
 \end{theorem}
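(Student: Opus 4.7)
The plan is to exploit the forbidden pattern characterization established just before the theorem: a DFA $\mathcal{A} = (Q,A,i,F)$ recognizes a language in $A^*\mathcal{J}^+$ if and only if it avoids the pattern consisting of states $q_1,q_2$ and words $u,v,w$ with $iu=q_1$, $q_1v=q_2$, $q_1w\in F$, and $q_2w\notin F$. Since nondeterministic logspace is closed under complement (by the Immerman--Szelepcs\'enyi theorem), it suffices to give an NL algorithm that detects the presence of the forbidden pattern; the desired NL algorithm for membership in $\mathcal{J}^+$ is then its complement.

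To detect the pattern I would proceed as follows. The algorithm nondeterministically guesses two states $q_1, q_2 \in Q$, written on the work tape using $O(\log |Q|)$ bits. It then performs three reachability checks in sequence. First, verify that $q_1$ is reachable from $i$ in $\mathcal{A}$ (standard NL graph reachability, guessing the letters of $u$ one at a time and only keeping the current state). Second, verify that $q_2$ is reachable from $q_1$ in $\mathcal{A}$ (same technique, for the word $v$). Third, verify that from the pair $(q_1,q_2)$ one can reach, in the product graph where both components read the same letter simultaneously, some pair $(p,r)$ with $p\in F$ and $r \notin F$; this again requires only nondeterministic guessing of the letters of $w$ and storing two current states, costing $O(\log |Q|)$ bits.

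Each phase is a nondeterministic reachability problem on a graph of size $|Q|$ or $|Q|^2$, solvable in logarithmic space in $|Q|$ (hence in the input size). Combining the three phases sequentially preserves the NL bound, so detecting the pattern is in NL; by closure under complement, so is membership in $A^*\mathcal{J}^+$. The ``in particular'' claim follows from the standard inclusion $\mathrm{NL}\subseteq\mathrm{P}$.

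There is essentially no hard step here once the forbidden pattern characterization is available: the only thing to be careful about is that the words $u,v,w$ are not stored---they are produced letter by letter on the fly, and only the O(1) currently-tracked states are kept on the work tape---so that the total space remains logarithmic. The main conceptual point is recognizing that the third phase is a reachability problem in the synchronized product $\mathcal{A}\times\mathcal{A}$, whose vertex set has size $|Q|^2$ but whose vertices are still describable in $O(\log|Q|)$ bits, keeping the entire procedure within NL.
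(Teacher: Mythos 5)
Your proposal is correct and follows essentially the same route as the paper: detect the forbidden pattern in NL by guessing the letters of $u,v,w$ on the fly while tracking only a constant number of states (the third phase being reachability in the synchronized product), then invoke the Immerman--Szelepcs\'enyi theorem to complement. The only cosmetic difference is that you guess $q_1,q_2$ up front and verify reachability, whereas the paper simply lands on them while guessing letters; the algorithms are the same.
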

 
 \begin{proof} 
 We nondeterministically guess letters to obtain an accessible state $q_1$, using $\log_2 n$ bits, where $n$ is the number of states in the automaton.  We then further guess letters to obtain another state $q_2=q_1v$, written on another $\log_2 n$-bit field in the work space.  Finally, we guess more letters, applying them to both components of the pair $(q_1,q_2)$ and arrive at at a state $(q_1w,q_2w)$.  We accept if the first member of this pair of states is an accepting state of the DFA and the second is not.  Thus we have a nondeterministic logspace algorithm for the regular languages outside of ${\mathcal J}^+$.  But by the theorem of Immerman and Szelepcsenyi (see \cite{Immerman:1988}, \cite{Szelepcsenyi:1988}, also \cite{Sipser:2006}), nondeterministic logspace is closed under complement, so we have the desired result. 
 
 \end{proof}
 
 The same reasoning is used in many proofs showing that varieties of languages are decidable in nondeterministic logspace:  find a forbidden pattern characterization of the variety using a fixed number of states. (For instance, Pin and Weil \cite{Pin&Weil:1997}, Glasser and Schmitz~\cite{Glasser&Schmitz:2008}.) While such results appear to bridge the complexity gap between poly\-nomial-time algorithms that begin with a multiplication table and exponential-time algorithms that begin with an automaton, forbidden pattern arguments are not always available.  In particular, we have the following result, which we cite without proof, from Cho and Huynh ~\cite{Cho&Huynh:1991}:
 
 \begin{theorem}
 Testing whether a regular language given by a DFA is aperiodic is PSPACE-complete.
 \end{theorem}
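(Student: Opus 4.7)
The plan is to prove matching PSPACE membership and PSPACE-hardness separately.

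\textbf{Upper bound.} The pseudovariety $\psvA$ of aperiodic monoids is defined by the single profinite $\omega$-term identity $x^{\omega+1}=x^{\omega}$, so the preceding theorem applies verbatim and yields a polynomial-space algorithm starting from an input DFA. Concretely, I would instantiate that algorithm: nondeterministically guess a word $x$ letter by letter, maintain on the workspace only the state vector $(q_1x^k,\ldots,q_nx^k)$ for successive $k$ (using $O(n\log n)$ bits), repeatedly square and probe to detect the idempotent power, and compare the vectors $\hat\phi(x^{\omega})$ and $\hat\phi(xx^{\omega})$. Savitch's theorem then converts this nondeterministic PSPACE procedure to a deterministic one.

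\textbf{Lower bound.} For PSPACE-hardness, the plan is to reduce from the word problem for a polynomially-space-bounded deterministic Turing machine $M$ on input $x$, producing in polynomial time a DFA $B_{M,x}$ whose transition monoid fails to be aperiodic if and only if $M$ accepts $x$. The DFA $B_{M,x}$ is designed with two disjoint state components: a \emph{simulation component} that reads strings which purport to encode a trajectory of successive configurations of $M$ on $x$, performs local consistency checks between consecutive configurations, and sends any inconsistent or non-accepting trace to an absorbing sink; and a small \emph{flip-flop component} consisting of two states $\{r_0,r_1\}$, on which a designated letter acts as the non-trivial transposition only after a valid accepting trace has been read, and as the identity otherwise. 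By construction a word realizing a non-trivial permutation of $\{r_0,r_1\}$ exists in the transition monoid exactly when $M$ accepts $x$, giving a non-trivial cyclic group of order two in that case.

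The step I expect to be the main obstacle is establishing the converse direction of correctness: that when $M$ rejects $x$, the transition monoid of $B_{M,x}$ is genuinely aperiodic, i.e., no unintended non-trivial group is introduced by the simulation component. The plan to handle this is to arrange the simulation states in a strict well-founded order under which every letter either strictly advances the current state, fixes it, or sends it to the global sink; such an action on a finite ordered set is necessarily $\mathcal{J}$-trivial, hence aperiodic, and is preserved under all finite products of such letters. A careful case analysis then shows that no combination of letters can induce a non-trivial permutation outside the flip-flop gadget, so the only source of a non-trivial group is the accepting trajectory, completing the reduction and hence the PSPACE-completeness proof.
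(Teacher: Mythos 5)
The paper itself does not prove this theorem: it is quoted without proof from Cho and Huynh, so there is no internal argument to compare against. Your upper bound is fine and is exactly what the surrounding text intends: $\psvA$ is defined by the single $\omega$-term identity $x^{\omega}=xx^{\omega}$, so the preceding theorem gives a PSPACE test from a DFA (guess a word letter by letter, maintain the state vectors of its powers, apply Savitch and closure of PSPACE under complement).

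The lower bound, however, has a genuine gap, and it sits precisely where the real content of Cho--Huynh lies. First, the flip-flop gadget as you describe it is not implementable in a DFA: a letter induces a \emph{fixed} transformation of the state set, so ``acts as the transposition of $\{r_0,r_1\}$ only after a valid accepting trace has been read'' is meaningless unless the history is encoded in the states, which destroys the two-state gadget. If instead you mean that the composite word (trace followed by the designated letter) should transpose $r_0$ and $r_1$, then the runs from $r_0$ and $r_1$ must traverse your simulation component, and you hit the second, more fundamental obstacle: a polynomial-size deterministic automaton cannot ``perform local consistency checks between consecutive configurations'' of a polynomially-space-bounded machine along a single run, because verifying that $c_{i+1}$ is the successor of $c_i$ requires remembering an entire configuration, i.e., exponentially many states. (This is the same reason PSPACE-hardness proofs for automata problems never use a single DFA reading a computation history naively; Kozen's intersection-nonemptiness proof, for instance, splits the checks over polynomially many DFAs.)

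The actual reduction circumvents this by exploiting the algebra you are trying to detect: the consistency checks are distributed over polynomially many start states (indexed, say, by tape positions), and one uses the fact that a word witnessing a nontrivial group element must act as a permutation on a set of states, hence must ``pass the check'' from all of these start states \emph{simultaneously}; the existence of a nontrivial group in the transition monoid then encodes the existence of a globally consistent accepting computation history, while rejection forces every word to act extensively/aperiodically. Your well-founded-order argument for aperiodicity of the simulation part is sound as a lemma, but without the idea of distributing verification across start states (or some equivalent device), the reduction as sketched cannot be carried out, so the hardness half of the proposal is incomplete.
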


\section{Connections with logic}\label{SW:sec: logic}

In Section 1 we outlined, in an informal way, some of the logical apparatus for expressing properties of words over a finite alphabet.  Here we give a more precise and general description.  As before, variable symbols $x, y, x_1, x_2$, \emph{etc.,} denote positions in a word.  For each $a\in A$ our logics have a unary predicate symbol $Q_a$, where $Q_ax$ is interpreted to mean `the symbol in position $x$ is $a$.'  We also have a binary predicate symbol $s$,  where $s(x,y)$ is interpreted to mean `position $y$ is the successor of position $x$'.  We will usually use the alternative notation $y=x+1$ for this.  

We now consider \emph{monadic second-order}\index{formula!monadic second-order} formulas over this base of predicates.  These are formulas built not merely by quantifying over individual positions, but also by quantifying over \emph{sets} of positions, denoted by upper-case variable letters, and employing an additional relation symbol $x\in X$ between positions (first-order variables) and sets of positions (second-order variables).

For example, consider the monadic second order formula $\phi$:
\begin{displaymath}
\exists x\exists y\exists X(Q_ax \wedge Q_by \wedge x\in X \wedge y\in X\wedge\phi_1\wedge\phi_2),
\end{displaymath}
where $\phi_1$ is
\begin{displaymath}
\neg\exists z(x=z+1\wedge z\in X)\wedge\neg\exists z(z=y+1\wedge z\in X),
\end{displaymath}
and $\phi_2$ is 
\begin{displaymath}
\forall z(z\in X\rightarrow (y=z\vee \exists u(u\in X\wedge u=z+1)).
\end{displaymath}
The formula $\phi$ is a \emph{sentence}\index{sentence}; that is, it has no free variables. Thus $\phi$ defines a language $L_{\phi}$ over $A=\{a,b\}$, namely the set of all words in which the formula is true. The sentence asserts the existence of positions $x$ and $y$ with letters $a$ and $b$ respectively, and of a set $X$ of positions that contains both $x$ and $y$, that contains the successor of each of its elements with the exception of $y$, and that contains no elements less than $x$. Thus $L_{\phi}$ is the regular language $A^*aA^*bA^*$.

This example is an instance of the following important theorem, due to J. R. B\"uchi~\cite{Buchi:1960} (see \cite{Libkin:2004,Straubing:1994}).

\begin{theorem}
A language $L\subseteq A^*$ is regular if and only if $L=L_{\phi}$ for some sentence $\phi$ of monadic second-order logic.
\end{theorem}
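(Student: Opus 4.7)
The plan is to prove both directions by independent arguments. For the implication that every regular language is MSO-definable, I would start from a DFA $\mathcal{A}=(Q,A,q_0,\delta,F)$ recognizing $L$, with states enumerated $Q=\{p_1,\dots,p_n\}$. The idea is to encode an accepting run of $\mathcal{A}$ using second-order variables $X_1,\dots,X_n$, where $X_j$ is intended to collect those positions $p$ at which $\mathcal{A}$ is in state $p_j$ after reading the letter in position $p$. The sentence $\phi$ takes the form $\exists X_1\cdots\exists X_n\,\psi$, where $\psi$ is a first-order formula asserting: (i) the $X_j$'s partition the set of positions; (ii) if $1$ is the leftmost position and $Q_a 1$ holds, then $1\in X_j$ where $p_j=\delta(q_0,a)$; (iii) for every $x$ and $y=x+1$, the labels of $x$ (say $p_j$) and $y$ (say $p_k$) and the letter $a$ at $y$ satisfy $p_k=\delta(p_j,a)$; (iv) the rightmost position, characterized by the non-existence of a successor, lies in $\bigcup_{p_j\in F} X_j$. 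Each condition is easily written as a first-order formula over $Q_a$, $s$, and $\in$, so $\phi$ is MSO and $L_\phi = L$.

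For the converse, I would proceed by structural induction on MSO formulas. The technical setup is to treat every formula $\phi(x_1,\dots,x_k,X_1,\dots,X_m)$ with free variables uniformly as defining a language over the extended alphabet $A_{k,m} = A\times\{0,1\}^{k+m}$. Here the additional coordinates of each letter encode, respectively, which free first-order variable is assigned to this position and which free second-order variables contain this position, with the convention that each first-order coordinate must take the value $1$ at exactly one position (a trivially regular constraint imposed once and for all). The inductive claim is that the set of extended words satisfying $\phi$ is regular over $A_{k,m}$.

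The base cases ($Q_a x$, $y=x+1$, $x\in X$) are all straightforwardly regular, since each condition constrains only a bounded number of positions in an easily recognized way. The Boolean cases reduce to closure of regular languages under union and complement. For existential quantification $\exists x\,\phi$ (respectively $\exists X\,\phi$), one applies the projection morphism $A_{k,m}\to A_{k-1,m}$ (respectively $A_{k,m-1}$) that erases the corresponding coordinate; this is an alphabetic morphism between finitely generated free monoids, and the image of a regular language under such a morphism is regular. Specializing to sentences ($k=m=0$) yields regularity of $L_\phi$.

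The main obstacle is less any single technical point than the careful bookkeeping of the extended-alphabet encoding, in particular verifying that the validity constraints on first-order components are regular and stable under the operations, and being explicit that the projection step uses forward morphic images (not inverse images) to eliminate a free variable. A secondary subtlety lies in the negation case: since we construct regular languages and not just NFAs, closure under complement must be invoked (via determinization), and the status of the one-occurrence constraint on free first-order coordinates must be maintained when complementing relative to the smaller ambient class of ``well-formed'' extended words rather than all of $A_{k,m}^*$.
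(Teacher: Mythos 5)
The paper does not actually prove this theorem: it states it as B\"uchi's theorem and refers to the literature (B\"uchi's original paper and the books of Libkin and of Straubing), so there is no in-text argument to compare yours against. Your proposal is precisely the classical B\"uchi--Elgot--Trakhtenbrot proof that those references give, and it is essentially correct. The forward direction (encoding an accepting run of a DFA by set variables $X_1,\dots,X_n$, with the partition, initialization, transition-consistency and acceptance conditions expressed in first order over $Q_a$, successor and $\in$) is sound; note only that with the paper's signature (successor and $Q_a$ alone) the empty word needs a one-line special case, since for $w=1$ all your conditions are vacuous and the sentence must be adjusted according to whether $q_0\in F$. The converse by structural induction over the extended alphabets $A\times\{0,1\}^{k+m}$ is the right mechanism, and you correctly identify the two places where care is needed: the well-formedness constraint on first-order tracks is itself regular and must be carried along (in particular, negation is complementation \emph{within} the well-formed words, which stays regular since regular languages are closed under complement and intersection), and existential quantification is handled by a \emph{forward} morphic image under the track-erasing projection, under which regular languages are closed. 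With those points made explicit, the induction goes through and specializing to sentences gives regularity of $L_\phi$, so your argument establishes the theorem by the same route the paper's cited sources use.
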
 

We obtain subclasses of regular languages by restricting these second-order formulas in various ways.  One obvious such restriction is to study first-order formulas\index{formula!first-order}: those formulas that use no second-order quantification.  We denote this logic, as well as the family of regular languages that can be defined in it, by $\FO[+1]$.  More generally, consider any $k$-ary relation $\alpha$ on the set of positions in a word that does not depend on the letters that appear in the word.  Suppose further that $\alpha(x_1,\ldots, x_k)$ is definable by a formula of monadic second-order logic.  Then we obtain a subclass of the regular languages by considering those languages definable by first-order sentences in which $\alpha$ is allowed as an atomic formula. We denote this class $\FO[\alpha]$,  and similarly write $F[\alpha_1,\alpha_2,\ldots]$ when there are several such predicates. For example, the relation $x<y$ is definable in monadic second-order logic, by a formula much like the one used above to define the language $L=A^*aA^*bA^*$. Thus we obtain the logic and the language class $\FO[<]$.  Of course, $L$ is definable in this logic, by the very simple sentence 
\begin{displaymath}
\exists x\exists y(Q_ax\wedge Q_by\wedge x<y).
\end{displaymath}
We can extend the expressive power further, by adjoining, for  $k>1$,  a binary predicate $\equiv_k$ that says two  positions are equivalent modulo $k$. These predicates, too, are definable in monadic second-order logic, and thus we obtain language classes $\FO[<,\equiv_k]$.  We can further restrict these families by bounding the quantifier depth, or the alternation of existential and universal quantifiers, or the number of distinct variable symbols.

We are interested in understanding the expressive power of these logics, and determining exactly what languages can be defined in them.  The critical insight is that, essentially, \emph{(nearly) all these language classes are varieties.}  In some instances we obtain ordered varieties, in others ${\mathcal C}$-varieties for a class ${\mathcal C}$ of morphisms, but in all cases we obtain families that, at least in principle, admit  a characterizations in terms of the syntactic monoids and morphisms of the languages they contain.

\subsection{Model-theoretic games}

To see why this is so, we first describe an important tool for studying the expressive power of logics for words.  Consider a first-order logic $\FO[\alpha_1,\ldots,\alpha_m]$.
Look at a pair of words $w,w'\in A^*$ and suppose that on each word we have placed $k$ `pebbles' labeled $x_1,\ldots,x_k$ for $w$, and $x_1',\ldots, x_k'$ for $w'$.  Each pebble is placed on a single position in its word, but two different pebbles can be on the same position.  We denote the resulting pebbled words by $u=(w,x_1,\ldots,x_k)$ and $u'=(w,x_1',\ldots, x_k')$.

We will now describe a game ${\mathcal G}_r(u,u',\alpha_1,\ldots,\alpha_k)$ played on these two pebbled words. (This is called an \emph{Ehrenfeucht Fra\"{\i}ss\'e game}\index{Ehrenfeucht-Fra\"\i ss\'e game}.) The subscript $r$ denotes the number of rounds of the game.  There are two players, traditionally called \emph{Spoiler,} who plays first, and \emph{Duplicator} who plays second.  We define the rules of the game by induction on the number of rounds.  In the 0-round game, the winner is already determined:  If there is a relation $\alpha=\alpha_i$ of arity $p$, and pebbles $x_{i_1},\ldots, x_{i_p}$,
$x'_{i_1},\ldots, x'_{i_p}$, such that 
\begin{displaymath}
\alpha(x_{i_1},\ldots,x_{i_p})
\end{displaymath}
holds, and
\begin{displaymath}
\alpha(x'_{i_1},\ldots,x'_{i_p})
\end{displaymath}
does not, or vice-versa, then Spoiler wins the game.  If there are pebbles $x_i$ and $x'_i$ such that the letter in position $x_i$ of $w$ is different from the letter in position $x'_i$ of $w'$, then Spoiler also wins the game.  Otherwise, Duplicator wins.  The idea is that Spoiler wins if the two pebbled words are different, and the difference must be witnessed by the atomic formulas applied to the pebbled positions.

Now let $r>0$.  In the $r$-round game ${\mathcal G}_{r}(u,u',\alpha_1,\ldots,\alpha_m)$, Spoiler makes a play by placing a new pebble $x_{k+1}$ in $u$ or $x'_{k+1}$ in $u'$.  If Spoiler played in $u$ then Duplicator must respond with $x'_{k+1}$ in $u'$.  Otherwise Duplicator responds with $x_{k+1}$ in $u$.  The result is two new pebbled words $v,v'$.  Spoiler and Duplicator proceed to play the game ${\mathcal G}_{r-1}(v,v',\alpha_1,\ldots,\alpha_m)$.  Whoever wins this $(r-1)$-round game is the winner of the $r$-round game.

Ordinary words  may be considered as special instances of pebbled words and thus we can consider the games ${\mathcal G}_r(w,w',\alpha_1,\ldots,\alpha_m)$, where $w,w'\in A^*$. The fundamental property of such games is given by the following theorem.

\begin{theorem}~\label{SW:thm: efgames}
Let $w,w'\in A^*$, $r\geq 0$. The words $w$ and $w'$ satisfy the same sentences in $\FO[\alpha_1,\ldots,\alpha_m]$ of quantifier depth $r$ or less if and only if  Duplicator has a winning strategy in ${\mathcal G}_r(w,w',\alpha_1,\ldots,\alpha_m)$.
\end{theorem}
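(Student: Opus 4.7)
The plan is to prove the theorem by induction on $r$, after strengthening the statement to apply to pebbled words. Specifically, I would show that for any pair of pebbled words $u=(w,x_1,\ldots,x_k)$ and $u'=(w',x'_1,\ldots,x'_k)$ and any $r\geq 0$, Duplicator has a winning strategy in ${\mathcal G}_r(u,u',\alpha_1,\ldots,\alpha_m)$ if and only if $u$ and $u'$ satisfy exactly the same formulas of $\FO[\alpha_1,\ldots,\alpha_m]$ of quantifier depth at most $r$ with free variables among $x_1,\ldots,x_k$. The theorem of interest is the special case $k=0$.

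The base case $r=0$ is essentially definitional. Formulas of quantifier depth $0$ are boolean combinations of the atomic formulas $Q_ax_i$ and $\alpha_j(x_{i_1},\ldots,x_{i_p})$, and the $0$-round game is declared to be won by Spoiler precisely when some such atomic predicate distinguishes $u$ from $u'$. Two pebbled words agree on all boolean combinations of a finite set of atomic statements if and only if they agree on the atomic statements themselves, and this is exactly the condition for Duplicator to win the $0$-round game.

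For the inductive step I would rely on the crucial finiteness observation: because $A$, the predicate signature $\alpha_1,\ldots,\alpha_m$, and the set of free variables are all finite, there are, up to logical equivalence, only finitely many formulas of quantifier depth at most $r-1$ with free variables among $x_1,\ldots,x_{k+1}$. Consequently, for every pebbled word $v$ with $k+1$ pebbles one can form a \emph{type formula} $\tau_v$ of depth $\leq r-1$ that is equivalent to the conjunction of all depth-$\leq r-1$ formulas satisfied by $v$; and then $\exists x_{k+1}\tau_v$ is a formula of depth $\leq r$ on the original $k$ pebbles. For the $(\Leftarrow)$ direction, suppose Spoiler places a new pebble at position $p$ in $w$, yielding $v$. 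Since $u$ satisfies $\exists x_{k+1}\tau_v$, so does $u'$ by hypothesis, giving Duplicator a response $p'$ such that the resulting $v'$ satisfies $\tau_v$ and hence agrees with $v$ on all depth-$\leq r-1$ formulas; the induction hypothesis then furnishes Duplicator's winning strategy in ${\mathcal G}_{r-1}(v,v')$. Spoiler's moves in $w'$ are handled symmetrically. For the $(\Rightarrow)$ direction, given a depth-$\leq r$ formula distinguishing $u$ from $u'$, push negations inward to reduce to the existential case $\exists x_{k+1}\psi$ with $\psi$ of depth $\leq r-1$ satisfied by, say, $u$ but not $u'$; let Spoiler play a witnessing position in $w$, let Duplicator respond by hypothesis, and invoke the induction hypothesis to deduce that the responding position satisfies $\psi$ in $w'$, a contradiction.

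The main obstacle is the type-formula step: replacing the (a priori infinite) conjunction of all depth-$\leq r-1$ properties of a pebbled word by a single formula of the same depth, which hinges on the finite-up-to-logical-equivalence count of such formulas. A secondary, bookkeeping obstacle in the $(\Rightarrow)$ direction is reducing an arbitrary distinguishing formula to an existential via negation normal form so that Duplicator's symmetric winning strategy can be invoked uniformly, regardless of which of $u$, $u'$ originally satisfies the formula.
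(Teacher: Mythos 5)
The paper does not prove this theorem at all---it simply points to the standard references \cite{Libkin:2004,Straubing:1994}---and your argument is precisely the standard proof found there: strengthen to pebbled words, induct on $r$, and use the fact that over a finite signature there are only finitely many formulas of a given quantifier depth and free-variable set up to logical equivalence, so that Hintikka-style type formulas $\tau_v$ exist and $\exists x_{k+1}\tau_v$ drives the back-and-forth. Your proposal is correct (the only bookkeeping you leave implicit is the case where the distinguishing constituent is atomic rather than existential, which is immediate since an atomic mismatch on already-placed pebbles persists to the end of play), so it matches the intended proof.
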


See, for example, \cite{Libkin:2004,Straubing:1994}.

Here is an example:  Consider the two words $w=aab$ and $w'=aaab$. Spoiler has a winning strategy if ${\mathcal G}_2(w,w',<)$:  First play pebble $x_1$ on the second $a$ of $w'$.  If Duplicator replies on the first $a$ of $w$, Spoiler will play $x_2$ on the first $a$ of $w'$.  If Duplicator instead replies on the second $a$ of $w$, then Spoiler plays $x_2$ on the third $a$ of $w'$.  In either case, Duplicator has nowhere to play $x_2'$ in $w$ and win the game.  By Theorem~\ref{SW:thm: efgames}, there must be some sentence of quantifer depth 2 that distinguishes the two words.  Indeed, $w'$ satisfies
\begin{displaymath}
\exists x(Q_ax \wedge \exists y(Q_a y \wedge x<y)\wedge\exists y(Q_ay \wedge y<x)),
\end{displaymath}
while $w$ does not.  On the other hand, Duplicator has a winning strategy in the two-round game in $aaaab, aaab$.

What does this have to do with varieties?  We will use games to show that logically-defined language classes satisfy the closure properties that define varieties. Look, for example, at the family of languages defined by $\FO[<]$ sentences of quantifier depth no more than $d$, where $d\geq 0$. We will denote both this language family and the underlying logic by $\FO_d[<]$.  

\begin{theorem}\label{SW:thm: fo_variety}
$\FO_d[<]$ is a variety of languages.
\end{theorem}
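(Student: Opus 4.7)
The plan is to verify directly the three Eilenberg closure conditions: each $A^*\FO_d[<]$ is closed under boolean operations and quotients, and the family $\FO_d[<]$ is closed under inverse images of morphisms between free monoids. Boolean closure is immediate from the syntax of logic, since $\phi\vee\psi$, $\phi\wedge\psi$ and $\neg\phi$ have the same quantifier depth as $\phi,\psi$, and the constant sentences $\top$, $\bot$ define $A^*$ and $\emptyset$. The remaining two conditions will be obtained from a single game-theoretic principle.

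Write $w \equiv_d^A w'$ to mean Duplicator wins the $d$-round game $\mathcal{G}_d(w,w',<)$ on $A^*$. By Theorem~\ref{SW:thm: efgames}, each $L \in A^*\FO_d[<]$ is a union of $\equiv_d^A$-classes; conversely, since up to logical equivalence there are only finitely many $\FO_d[<]$-sentences (for a fixed alphabet), $\equiv_d^A$ has finite index, and every union of its classes lies in $A^*\FO_d[<]$. It therefore suffices to show that $\equiv_d^A$ is a left and right monoid congruence, and that it is preserved by morphisms: if $\phi \colon B^* \to A^*$ and $w\equiv_d^B w'$, then $\phi(w)\equiv_d^A \phi(w')$.

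For the congruence part, suppose $w \equiv_d^A w'$ and let $u,v \in A^*$. The strategy for Duplicator in $\mathcal{G}_d(uwv, uw'v, <)$ is a copycat: whenever Spoiler plays a position in the prefix $u$ (resp.\ suffix $v$) of one of the two words, Duplicator plays the identical position in the prefix $u$ (resp.\ suffix $v$) of the other; whenever Spoiler plays in the central $w$- or $w'$-region, Duplicator consults the assumed winning strategy for $\mathcal{G}_d(w,w',<)$, translating positions by the fixed offset $|u|$. The order $<$ between a copied pebble and a translated one is preserved because the three regions occur in the same linear order in both composite words, and the letter labels agree by construction. This yields $u^{-1}L,Lv^{-1} \in A^*\FO_d[<]$ for every $u,v \in A^*$.

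The hard part is the inverse-morphism closure. Given $\phi \colon B^* \to A^*$ and $w,w' \in B^*$ with $w \equiv_d^B w'$, I would describe Duplicator's strategy in $\mathcal{G}_d(\phi(w),\phi(w'),<)$ as follows. Each position $p$ in $\phi(w)$ lies inside the image of a unique letter of $w$ at some index $i$, at some internal offset $j$ inside the block $\phi(b_i)$. When Spoiler plays such a $p$, Duplicator simulates, in the background $B^*$-game on $(w,w')$, a move at index $i$; the winning $B^*$-strategy supplies a matching index $i'$ in $w'$. Since the $B^*$-strategy is winning, the labels must agree, $b_i = b_{i'}$, so the blocks $\phi(b_i)$ and $\phi(b_{i'})$ are identical words of $A^*$, and Duplicator can reply at the same internal offset $j$ inside the $i'$-th block of $\phi(w')$. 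If Spoiler plays a new position inside a block $\phi(b_i)$ that has already been visited, Duplicator simply reuses the previously chosen $i'$, without consuming an extra round in the $B^*$-simulation; this guarantees that at most $d$ simulated rounds are ever played in $B^*$, which is what we have. The invariants maintained are that matched pebbles carry equal letters (because the blocks are identical) and that the order between two matched pebbles in $\phi(w)$ and $\phi(w')$ is the same, which follows either from same-block offset equality or from the $B^*$-strategy's preservation of the order on block indices.

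The main obstacle is precisely this bookkeeping across blocks of different lengths: making Duplicator's responses in $A^*$ consistent with a single underlying $B^*$-play, and ensuring the round budget is not exceeded even when Spoiler probes many positions within the same block. Once this is done, we conclude $\phi(w)\equiv_d^A \phi(w')$, hence $\phi^{-1}(L)$ is a union of $\equiv_d^B$-classes whenever $L$ is a union of $\equiv_d^A$-classes, and $\FO_d[<]$ satisfies all three conditions of Theorem~\ref{SW:thm:eilenberg}.
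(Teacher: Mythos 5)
Your proof is correct and follows essentially the same route as the paper's: boolean closure from syntax, quotient closure via a copycat strategy around the assumed winning strategy, and inverse-morphism closure by pulling Spoiler's moves back to a simulated game on $(w,w')$ and pushing Duplicator's answers forward to identical blocks, all resting on Theorem~\ref{SW:thm: efgames}. The bookkeeping you single out as the main obstacle is in fact automatic: each real round triggers at most one simulated round, and since agreement of $<$ forces any winning Duplicator strategy to answer a repeated index with the same index, responses inside an already-visited block are consistent whether or not you explicitly reuse the earlier choice.
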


\begin{proof}
Since we have to discuss languages over different alphabets, let us denote by $A^*\FO_d[<]$ the languages over $A^*$ that belong to this family.  Obviously $A^*\FO_d[<]$ is closed under boolean operations, so we must verify closure under quotients and inverse images of morphisms.  Let us write $w\sim_{d,A} w'$ to mean that $w,w'\in A^*$ satisfy all the same sentences of $\FO_d[<]$.  Then $\sim_d$ is an equivalence relation of finite index on $A^*$, and every language of $A^*\FO_d[<]$ is a union of $\sim_{d,A}$-classes. We claim that if $w\sim_{d,A}w'$ and $a\in A$, then both $aw\sim_{d,A}aw'$, and $wa\sim_{d,A}w'a$, and that further, if $\phi\colon A^*\to B^*$ is a morphism, then $\phi(w)\sim_{d,B}\phi(w')$.

To see that this claim implies the result, suppose $L\in A^*\FO_d[<]$ but $a^{-1}L\notin A^*\FO_d[<]$.  Then there exist $w,w'\in A^*$ with $w\in a^{-1}L$, $w'\notin a^{-1}L$, and $w\sim_{d,A} w'$.  But then $wa \in L$, $w'a\notin L$, and $wa\sim_{d,A}w'a$, contradicting $L\in A^*\FO_d[<]$.  By the same reasoning we deduce closure under right quotients and under inverse images of morphisms.

To prove the claim, note that by Theorem~\ref{SW:thm: efgames}, $w\sim_{d,A}w'$ if and only if Duplicator has a winning strategy in ${\mathcal G}_d(w,w',<)$.  So we must show that such a winning strategy implies the existence of winning strategies for Duplicator in 
${\mathcal G}_d(aw,aw',<)$, ${\mathcal G}_d(wa,w'a,<)$, and ${\mathcal G}_d(\phi(w),\phi(w'),<)$.
For ${\mathcal G}_d(wa,w'a,<)$, the strategy is this: Whenever Spoiler plays on the last letter of either $wa$ or $w'a$, Duplicator responds by playing on the last letter of the other word; otherwise Duplicator responds according to the winning strategy in $(w,w')$. The reasoning is identical for ${\mathcal G}_d(aw,aw',<)$. For ${\mathcal G}_d(\phi(w),\phi(w'),<)$, suppose $w=a_1\cdots a_r$, $w'=a_1'\cdots a_s'$, and let $v_i=\phi(a_i)$, $v_i'=\phi(a_i')$.  Duplicator's strategy is to keep track of a separate game in $w, w'$ to calculate the responses in $\phi(w), \phi(w')$. If Spoiler plays on the $j^{th}$ symbol of $v_i$, then Duplicator calculates the response, according to the original strategy, to a move by Spoiler on $a_i$.  Let us say this response is on $a_k'$.  Observe that $a_i=a_k'$, and thus $v_i=v_k'$, so Duplicator can reply on the $j^{th}$ symbol of $v_k'$. In other words, Duplicator pulls the Spoiler's plays back to $(w,w')$, applies the original winning strategy, and pushes the result forward to $(\phi(w),\phi(w'))$. It is easy to see that this strategy wins for Duplicator.
\end{proof}

This same reasoning can be adapted to a large number of different situations.  Consider, for example, the logics $\FO_d[+1]$.  The strategy-copying argument no longer works to give Duplicator a winning strategy in ${\mathcal G}_d(\phi(w),\phi(w'),+1)$, because $\phi$ may map a letter to the empty word, and thus we might end up with two pebbles on adjacent positions in $\phi(w)$, but find the corresponding pebbles on non-adjacent positions of $\phi(w')$.  But the argument \emph{does} work for non-erasing morphisms, and thus each $\FO_d[+1]$, as well as the union $\FO[+1]$, is a ${\mathcal C}_{ne}$-variety. Similarly, suppose we augment the logic $\FO[<]$ by adjoining the predicate  $x\equiv_q y$ for equivalence modulo $q$.  We now find that the strategy-copying argument works as long as all $\phi(a)$ for $a\in A$ have the same length $m$, as
$i\equiv_q j$ implies $mi\equiv_q mj$. Thus each $\FO_d[<,\equiv_q]$ is a ${\mathcal C}_{lm}$-variety of languages.

This reasoning is amenable to further adaptations, by altering the rules of the games:  We obtain a game characterization of languages defined by formulas that use no more than $p$ distinct variables by allowing only $p$ pebbles, regardless of the number of rounds.  Once all the pebbles have been placed, the Spoiler may pick up a pebble and move it to a new position; the Duplicator must pick up the corresponding pebble and move it in the same direction. We obtain a game characterization of the languages defined by boolean combinations of $\Sigma_k$ sentences\footnote{Formulas in prenex normal form with at most $k-1$ alternating blocks of quantifiers, or with exactly $k$ blocks where the first block is existential.}, with quantifier block size bounded by $d$, by considering $k$-round games in which each player is permitted to place $d$ pebbles at a time.  We can turn this into a  game characterization of the languages defined by $\Sigma_k$-sentences themselves by requiring Spoiler to play in $w$ in the first round, in $w'$ in the second round, \emph{etc.} Duplicator then has a winning strategy  in the game in $w,w'$ if and only if every $\Sigma_k$-sentence, with quantifier block size no more than $d$, that $w$ satisfies is also satisfied by $w'$. We can use this to conclude that $\Sigma_k[<]$ is an ordered variety of languages.  In all instances, we find that some variant of Eilenberg's Theorem applies, and extract the same conclusion:  A logical characterization of the language class implies the existence of an algebraic characterization.

Care must be taken not to extrapolate this \emph{too} far.  For example, the strategy-copying argument fails in the case of $\Sigma_1[+1]$:  Let $w=abab$, $w'=baba$.  Then $w,w'$ satisfy the same $\Sigma_1[+1]$-sentences of block size 2, but $wa$ and $w'a$ do not, since $w'a$ contains two consecutive occurrences of $a$. 
   
\subsection{Explicit characterization of logically defined classes}

While the foregoing arguments tell us that logically defined language classes form varieties, they do not provide explicit algebraic characterizations.  There are, in fact, a number of different methods for connecting the structure of defining sentences to algebraic properties, and many results giving explicit characterizations of the language varieties defined by various logics. (See, for instance Straubing~\cite{Straubing:1994}.)  Here we give just a taste of these techniques and results with what is perhaps the most famous, and certainly the first, result in this area, the theorem of McNaughton and Papert\index{McNaughton's and Papert's Theorem}~\cite{McNaughton&Papert:1971} giving the equivalence of first-order logic and aperiodic monoids:

 \begin{theorem}\label{SW:thm:fo_aperiodic}
 A language $L$ belongs to $\FO[<]$ if and only if $\synt(L)$ is aperiodic.
 \end{theorem}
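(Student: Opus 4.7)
The plan is to establish both implications separately, exploiting the equational description of $\psvA$ developed earlier. For the forward direction, I would show that every language $L\in\FO[<]$ has $\synt(L)\in\psvA$ by proving that $L$ satisfies the profinite identity $x^\omega = x x^\omega$ that defines aperiodic monoids. By Proposition~\ref{SW:prop: elementary equations} and the closure properties already granted to a lattice closed under quotients (Corollary~\ref{SW:cor: boolean and quotients}), it suffices to prove that for every morphism $\phi\colon \{x\}^*\to A^*$ and all $s,t\in A^*$, $L$ satisfies $s\phi(x)^\omega t \leftrightarrow s\phi(x)^\omega \phi(x) t$.

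The core step is an Ehrenfeucht--Fra\"\i ss\'e argument: for $L\in A^*\FO_d[<]$, fix $u=\phi(x)$ and I would show that for all sufficiently large $n$, the words $s u^n t$ and $s u^{n+1} t$ satisfy the same $\FO_d[<]$ sentences. Duplicator's strategy in $\mathcal{G}_d(su^n t,su^{n+1}t,<)$ is the classical one: partition each word into the fixed prefix $s$, a long central block of copies of $u$, and the fixed suffix $t$; when Spoiler plays in the central block of one word, Duplicator replies in the corresponding copy of $u$ in the central block of the other word, preserving order and the number of copies of $u$ strictly to the left and to the right of each pebble up to a threshold depending on $d$. Taking $n = k!$ and passing to the limit in the compact space $\widehat{A^*}$ gives the desired profinite equation, hence $\synt(L)$ satisfies $x^\omega = x x^\omega$ and is aperiodic.

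For the converse, the cleanest route is through Sch\"utzenberger's theorem, which asserts that the languages with aperiodic syntactic monoid coincide with the \emph{star-free} languages (those built from the letters and $\emptyset$ using boolean operations and concatenation). Given this, an easy structural induction produces, from a star-free expression for $L$, a defining sentence of $\FO[<]$: singletons $\{a\}$ are defined by $\exists x\,(Q_a x \wedge \forall y\,(y=x))$, boolean operations translate directly, and a concatenation $K\cdot L$ is expressed by a sentence $\exists x\,(\psi_K^{\le x} \wedge \psi_L^{> x})$ using relativization of quantifiers to positions $\le x$ or $> x$ (one also has to handle the empty prefix/suffix case). Chaining with the induction gives $L\in\FO[<]$.

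The main obstacle is Sch\"utzenberger's theorem itself, which I would cite rather than reprove in full. Its proof proceeds by induction on the number of $\mathcalJ$-classes of $\synt(L)$: pick a $\mathcalJ$-minimal element $m=\eta_L(w)$ relevant to $L$, and use the fact that in an aperiodic monoid each $\mathcalH$-class is a singleton to rewrite $\eta_L^{-1}(m)$ as a boolean combination of concatenations of languages whose syntactic images lie strictly above $m$ in the $\mathcalJ$-order, to which the induction hypothesis applies. Absence of nontrivial subgroups is precisely what is needed to force the stability relations that make this rewriting possible, and this is the subtle step where the equational characterization of $\psvA$ does the real work.
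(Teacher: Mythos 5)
Your proposal is correct and follows essentially the same route as the paper: the forward direction is the same Ehrenfeucht--Fra\"\i ss\'e pumping argument (Duplicator wins on $su^nt$ versus $su^{n+1}t$ for $n$ large relative to the quantifier depth, giving the identity $x^\omega=xx^\omega$ for $\synt(L)$), and like the paper you defer the converse to Sch\"utzenberger's theorem, which the paper also cites rather than proves, together with the star-free-to-$\FO[<]$ translation that the paper carries out in Theorem~\ref{SW:thm: SF vs FO}.
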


We will only prove one direction of this theorem, namely that first-order definability implies aperiodicity.  We claim that if $u\in A^*$, then $u^{2^d-1}\sim_{d,A} u^{2^d}$.  This is proved by induction on $d$.  For $d=0$, there is nothing to prove, since all words are equivalent modulo $\sim_{0,A}$.  Suppose then that $d>0$.  We will show that Duplicator has a winning strategy in ${\mathcal G}_d(u^{2^d-1},u^{2^d},<)$.  Suppose Spoiler plays $x_1$ in $u^{2^d-1}$. 
\begin{displaymath}
u^{2^d-1}=u^rvav'u^s,
\end{displaymath}
where the pebble is played on the position indicated by the letter $a$, $u=vav'$, and $r+s=2^d-2$.  It follows that either $r\geq 2^{d-1}-1$ or $s\geq 2^{d-1}-1$.  Suppose the former (the proof is the same in either case).  Then we can write
\begin{displaymath}
u^{2^d}=u^{r+1}vav'u^s.
\end{displaymath}
Duplicator places the pebble $x_1'$ on the indicated $a$.  Now play proceeds as follows:  By the inductive hypothesis, Duplicator has a winning strategy in ${\mathcal G}_{d-1}(u^{2^{d-1}-1},u^{2^{d-1}})$. Thus, by the argument given in the proof of Theorem~\ref{SW:thm: fo_variety}, Duplicator has a winning strategy in ${\mathcal G}_{d-1}(u^{r}v,u^{r+1}v)$.  Duplicator will follow this strategy whenever Spoiler plays to the left of $x_1$ or $x_1'$, and simply copy Spoiler's move in $av'u^s$ whenever the play is at or to the right of $x_1$ or $x_1'$.  This proves the claim.  It follows that if $L$ is first-order definable, then $\synt(L)$ satisfies the $x^m=x^{m+1}$ for sufficiently large $m$, and is thus aperiodic.

We omit the proof of the converse, that if $\synt(L)$ is aperiodic, then $L$ is in $\FO[<]$.  Most of the published proofs of this theorem rely on some decomposition theory for finite semigroups, either the Krohn-Rhodes decomposition, or the ideal structure of semigroups. Most proofs also show first that every language recognized by an aperiodic monoid is a star-free language.  We will define star-free languages in Section~\ref{SW:sec: malcev}, and show that they are equivalent to first-order definable languages. Pin \cite{Pin:1986} gives a relatively streamlined proof using the ideal decomposition theory. Straubing \cite{Straubing:1994} uses the Krohn-Rhodes decomposition to obtain a first-order sentence directly.  Wilke \cite{Wilke:1999} gives a proof that is remarkable for its absence of hard semigroup theory, and that produces a formula of temporal logic directly from an automaton with an aperiodic transition monoid.
\cqfd

We can use Theorem~\ref{SW:thm:fo_aperiodic} to deduce a claim we made earlier, giving an explicit characterization of the ${\mathcal C}_{lm}$-pseudovariety {\bf QA}:

\begin{theorem}\label{SW:thm: quasiaperiodic}
$L$ belongs to $\FO[<,\equiv_m]$ for some $m>1$ if and only if the syntactic morphism of $L$ is in ${\bf QA}$. 
\end{theorem}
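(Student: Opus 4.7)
The plan is to prove both directions by combining the McNaughton--Papert theorem (Theorem~\ref{SW:thm:fo_aperiodic}) with a block decomposition that exploits the equal-length condition built into $\mathcalC_{lm}$-identities. In one direction I will use an Ehrenfeucht--Fra\"iss\'e argument adapted from the proof of McNaughton--Papert; in the other, I will pass through the alphabet of blocks $B = A^m$ and invoke McNaughton--Papert itself.

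\medskip
\noindent\emph{Forward direction.} Assume $L$ is defined by a sentence of $\FO[<,\equiv_m]$ of quantifier depth $d$. To show $\eta_L \in \textbf{QA}$, I verify its defining $\mathcalC_{lm}$-identity $(x^{\omega-1}y)^{\omega} = (x^{\omega-1}y)^{\omega+1}$. Fix $u,v \in A^+$ with $|u|=|v|=k$, set $w_n = u^{n!-1}v$ (of length $kn!$), and fix any $s,t\in A^*$. It suffices to show that for all large $n$ the words $sw_n^{n!}t$ and $sw_n^{n!+1}t$ satisfy the same $\FO[<,\equiv_m]$-sentences of depth $d$; by Theorem~\ref{SW:thm: efgames} this reduces to a winning strategy for Duplicator in ${\mathcal G}_d(sw_n^{N}t, sw_n^{N+1}t, <, \equiv_m)$ with $N = n!$. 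I would mimic the induction from the proof of Theorem~\ref{SW:thm:fo_aperiodic}: when Spoiler places a pebble inside the $i$-th copy of $w_n$, Duplicator matches the same relative offset inside the copy numbered either $i$ or $i+1$ in the other word, absorbing the ``extra'' copy into whichever half still has at least $2^{d-1}$ surrounding copies of $w_n$ (possible because $N \geq 2^d$ for large $n$); on the other half, Duplicator plays by translation (shift $0$ or $|w_n|$). All shifts produced by this strategy are multiples of $|w_n| = kn!$, which is divisible by $m$ as soon as $n \geq m$, so both $<$ and $\equiv_m$ are preserved between matched pebbles, as are the letter predicates $Q_a$.

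\medskip
\noindent\emph{Reverse direction.} Suppose $\phi = \eta_L \in \textbf{QA}$. Choose $m$ so that $\phi(A^m) = \phi(A^{2m})$ is the stable semigroup, hence aperiodic, and let $B = A^m$. The map $b \mapsto \phi(b)$ extends to a morphism $\phi'\colon B^* \to \phi(A^m)\cup\{1\}$ into an aperiodic monoid. For each $q \in A^{<m}$, the language $L_q = \{v \in B^* : vq \in L\}$ is recognized by $\phi'$, so $\synt(L_q)$ divides an aperiodic monoid and is itself aperiodic; by Theorem~\ref{SW:thm:fo_aperiodic}, $L_q$ is defined by some $\FO[<]$-sentence $\psi_q$ over $B$. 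One then writes
\[
L \;=\; F \;\cup\; \bigcup_{q \in A^{<m}} \{vq : v \in L_q\},
\]
where $F = \{w \in L : |w| < m\}$ is finite. To conclude, $\psi_q$ is translated into an $\FO[<,\equiv_m]$-sentence over $A$ by simulating each block-position variable $z$ by an $A$-variable $\tilde z$ restricted to the residue class $1 \pmod m$ and to the prefix of length $|w|-|q|$, and by replacing each atom $Q_b \tilde z$ with $b = a_1\cdots a_m$ by $\bigwedge_{i=1}^{m} Q_{a_i}(\tilde z + i - 1)$, successors being $\FO[<]$-definable. The side conditions ``the suffix of $w$ of length $|q|$ is $q$'' and ``$|w| \equiv |q| \pmod m$'' are likewise expressible in $\FO[<,\equiv_m]$, so $L \in \FO[<,\equiv_m]$.

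\medskip
\noindent\emph{Main obstacle.} The delicate point is the forward direction: one must verify that Duplicator's shift-by-$|w_n|$ strategy simultaneously respects the letter predicates, the order $<$, and the congruence $\equiv_m$. The hypothesis $|u|=|v|$ built into the $\mathcalC_{lm}$-identity is exactly what forces $|w_n|$ to be a multiple of $m$; without it the strategy breaks, which is the structural reason $\textbf{QA}$ corresponds to a $\mathcalC_{lm}$-variety of languages rather than to an ordinary one.
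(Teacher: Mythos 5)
Your proposal is correct and takes essentially the same route as the paper's (deliberately brief) sketch: the forward direction adapts the Ehrenfeucht--Fra\"iss\'e argument from the proof of Theorem~\ref{SW:thm:fo_aperiodic}, using that all of Duplicator's shifts are multiples of $m$, and the converse passes to the block alphabet $B=A^m$, applies Theorem~\ref{SW:thm:fo_aperiodic} to the aperiodic stable semigroup, and translates the resulting $\FO[<]$ sentence back over $A$ via $\equiv_m$. You are in fact somewhat more explicit than the paper --- building the contexts $s,t$ into the game words and verifying the defining ${\mathcal C}_{lm}$-identity of ${\bf QA}$ directly, and spelling out the suffix decomposition $L=F\cup\bigcup_q\{vq : v\in L_q\}$ --- but the underlying ideas coincide.
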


We merely sketch the argument: Suppose $u\in A^+$ with $|u|$ divisible by $m$.  Let $d>0$. Then by precisely the same argument as we gave in the proof of Theorem~\ref{SW:thm:fo_aperiodic}, Duplicator has a winning strategy in ${\mathcal G}_d(u^r,u^{r+1},<,\equiv_m)$ as long as $r$ is sufficiently large compared to $d$. This is enough to show that if $L$ is definable by a sentence of $\FO[<,\equiv_m]$, then the stable semigroup of $\eta_L$ is aperiodic.  For the converse, we consider a language $L$ with $\eta_L$ in {\bf QA}.  Let $\eta_L(A^t)$ be the stable semigroup.  If we treat $B=A^t$ as a finite alphabet, we can use Theorem~\ref{SW:thm:fo_aperiodic} to obtain a first-order sentence, with respect to $B$, defining the sets of words of length divisible by $t$ that are recognized by $\eta_L$, and then translate this to a first-order sentence over $A$ by means of the predicate $\equiv_t$.
\cqfd

\subsubsection*{Other logical formalisms}

By and large, we have confined our discussion of logic to the use of first-order quantification.  But there are other formalisms studied in the literature, which also give rise to varieties. We mention in passing two of these: Formulas with modular quantifiers, which were introduced by Straubing, Th\'erien and Thomas~\cite{Straubing&Therien&Thomas:1995} and studied extensively in~\cite{Straubing:1994}, and \emph{temporal} formulas, which play an important role in computer-aided verification.  An algebraic treatment of temporal logic, and its connection to varieties of languages, is due to Th\'erien and Wilke~\cite{Therien&Wilke:1998,Therien&Wilke:2001} and Wilke~\cite{Wilke:1999,Wilke:2001}.

Considerable effort has been devoted to the effective characterization of particular fragments of $\FO[<]$.  One approach is based on restricting the number of bound variables appearing in a formula.  Consider, for instance the sentence
$$\exists x(Q_ax\wedge \exists y(Q_by\wedge y=x+1 \wedge \exists x(Q_bx\wedge x=y+1))),$$
which defines the set of strings that contain $abb$ as a factor.  While the quantifiers in this sentence are nested three levels deep, only two variable symbols are used, because we were able to re-use the symbol $x$.  Immerman and Kozen~\cite{Immerman&Kozen:1989} showed that {\it any} sentence of $\FO[<]$ can be rewritten as an equivalent sentence that uses only three variables.  We write this result as 
$$\FO[<]=\FO^3[<].$$
The question that naturally arises is what one can do with {\it two} variables---that is, what is the expressive power of $\FO^2[<]$?  It is known that the inclusion of $\FO^2[<]$ in $\FO[<]$ is strict; for example, it is not hard to show that the language $(ab)^*$ cannot be defined by a formula with fewer than three variables.  The exact answer turned out to be quite interesting:  A language $L$ is definable in $\FO^2[<]$ if and only if $\synt(L)$ belongs to the pseudovariety $\psvDA$ defined by the equations
$$\llbrack (xyz)^\omega z (xyz)^\omega = (xyz)^\omega\rrbrack.$$
This variety had been discovered much earlier by Sch\"utzenberger~\cite{Schutzenberger:1976} and arises in many different contexts (see, for example, Tesson and Th\'erien~\cite{Tesson&Therien:2002} and the discussion in Section~\ref{SW:sec: malcev}).
This opened a rich vein of related research on varieties defined by two-variable logics (\textit{e.g.} Kufleitner and Weil~\cite{Kufleitner&Weil:2012}, Krebs and Straubing~\cite{Krebs&Straubing:2012,Krebs&Straubing:2017}, Kufleitner and Lauser \cite{Kufleitner&Lauser:2013}, Fleischer, Kufleitner and Lauser \cite{Fleischer&Kufleitner&Lauser:2017}, Krebs {\it et al.}~\cite{Krebs&Lodaya&Pandya&Straubing:2016}).

We have already alluded to the fragments $\Sigma_k[<]$ and $\mathcal{B}\Sigma_k[<]$ (boolean combinations of $\Sigma_k$ sentences). These all give varieties of languages (ordered varieties in the case of $\Sigma_k[<]$), but effective characterization of these varieties for all but the lowest levels ($\Sigma_1[<]$, $\mathcal{B}\Sigma_1[<]$, $\Sigma_2[<]$) has been an outstanding open problem.  Recently, Place and Zeitoun  ~\cite{Place:2015,Place&Zeitoun:2014,Place&Zeitoun:2015-siglog,Place&Zeitoun:2017,Place&Zeitoun:2017-csr}, made a critical breakthrough, developing a number of novel and difficult techniques for attacking this problem.  As a result, we now possess effective characterizations for the varieties of languages  $\mathcal{B}\Sigma_2[<]$, $\Sigma_3[<]$ and $\Sigma_4[<]$.

\subsubsection*{Beyond membership}

Much of what we have written concerning decision problems has focused on the membership problem for a pseudovariety $\psvV$:  Given a monoid $M$, determine whether it belongs to $\psvV$, or, equivalently, given a regular language $L$, determine whether $L$ belongs to the corresponding variety of languages $\mathcalV$.  In a series of papers (see the surveys \cite{Place&Zeitoun:2015-siglog,Place&Zeitoun:2017-csr}), Place and Zeitoun have embarked on a deep study of the {\it separation} problem for a variety of languages $\mathcalV$.  If $L_1,L_2\subseteq A^*$ are disjoint regular languages, we say that $L_1$ is {\it $\mathcalV$-separable} from $L_2$ if there exists a language $K\in A^*\mathcalV$ such that $L_1\subseteq K$ and $L_2\cap K=\emptyset$. \footnote{If $\mathcalV$ is a variety of languages corresponding to a pseudovariety of finite monoids, then this relation is symmetric: That is, $L_1$ is $\mathcalV$-separable from $L_2$ if and only if $L_2$ is $\mathcalV$-separable from $L_1$.  However, this is not the case for ordered varieties.  Observe that if we can decide $\mathcalV$-separability for pairs of languages, then we can decide the membership problem for $\mathcalV$, since this is just the question of separating $L$ from its complement.} This problem is equivalent to another one, expressed in algebraic and topological terms, namely the computation of \emph{$\psvV$-pointlike pairs} of a given monoid $M$. This was observed by Almeida in 1999 \cite{Almeida:1999}, but the problem of computing $\psvV$-pointlikes had been considered even earlier, notably in the difficult case where $\psvV = \psvA$ (Henckell \cite{Henckell:1988}, see also \cite{Henckell&Rhodes&Steinberg-a:2010,Henckell&Rhodes&Steinberg-b:2010} for a simpler proof, and \cite{vanGool&Steinberg:2018} for a generalization).
Place and Zeitoun's breakthrough results concerning the varieties $\Sigma_k[<]$ and $\mathcal{B}\Sigma_k[<]$ depend critically on the separation problem for varieties.

Further results on separation (or the computation of $\psvV$-pointlike sets) include transfer theorems such as those of Steinberg \cite{Steinberg-2:2001} on the computation of $\psvV\ast\psvD$ pointlikes, of Place and Zeitoun \cite{Place&Zeitoun-b:2017} on the preservation of separation by logical fragments when enriched with so-called local predicates (successor, min, max), or of Place, Ramanathan and Weil \cite{Place&Ramanathan&Weil:2018} on the enrichment of logical fragments with modular predicates.

\section{Operations on classes of languages}

The idea developed in this section is that certain operations on
classes of languages translate to operations on the corresponding sets
of profinite identities, or on the corresponding classes of syntactic
objects (syntactic monoids or semigroups, ordered or not, etc).  This
translation, when it can be made explicit, may provide decomposition
results, or membership decision results for complex classes of
languages.

\subsection{Boolean operations}

 If for each $i\in I$, $\calV_i$ is a class of regular languages, the
 intersection $\calW = \bigcap_{i\in I}\calV_i$ is the class given by
 $A^*\calW = \bigcap_{i\in I} A^*\calV_i$ for each alphabet $A$.  The
 different classes of families of languages considered so far (lattices
 or boolean algebras of languages of some fixed $A^*$, positive
 $\calC$-varieties) are easily seen to be closed under (arbitrary)
 intersection.  

 The following statement essentially follows from
 the definition of the satisfaction of profinite equations.

 \begin{proposition}
     Let $I$ be a set and for each $i\in I$, let $E_i$ be a set of
     profinite equations on an alphabet $A$.  Then $\bigcap_{i\in
     I}\calL(E_i) = \calL(\bigcup_{i\in I} E_i)$.
     
     In particular, if for each $i\in I$ $\calV_i$ is a class of
     regular languages that is $\calC$-defined by a set of profinite
     (ordered) $\calC$-identities $E_i$, then $\bigcap_{i\in I}
     \calV_i$ is $\calC$-defined by $\bigcup_{i\in I} E_i$.
 \end{proposition}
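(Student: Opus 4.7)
The plan is to unwind the definitions on both sides and observe that the two conditions are word-for-word the same.

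First I would establish the first statement by pure set-theoretic manipulation. By the definition of $\calL(\cdot)$, a regular language $L \subseteq A^*$ lies in $\bigcap_{i\in I}\calL(E_i)$ if and only if, for every $i\in I$, the language $L$ satisfies every equation $u\to v$ in $E_i$. This is logically equivalent to saying that $L$ satisfies every equation $u\to v$ that lies in some $E_i$, i.e., every equation in $\bigcup_{i\in I} E_i$, which is the definition of $L \in \calL(\bigcup_{i\in I} E_i)$. No compactness or topology is required here; the quantifier swap $(\forall i)(\forall (u\to v)\in E_i) \Leftrightarrow (\forall (u\to v) \in \bigcup_i E_i)$ is immediate.

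For the ``in particular'' clause, I would simply apply the first part alphabet by alphabet. Fix a finite alphabet $A$. By the definition of $\calC$-identical satisfaction, a language $L \subseteq A^*$ lies in $A^*\calV_i$ if and only if $L$ satisfies the explicit profinite equation $\hat\phi(u) \to \hat\phi(v)$ for every $(u\to v)\in E_i$ and every morphism $\phi\colon X^*\to A^*$ in $\calC$. Let $\bar E_i^A = \{\hat\phi(u)\to\hat\phi(v) \mid (u\to v)\in E_i,\ \phi\colon X^*\to A^*\text{ in }\calC\}$, a set of profinite equations on $A$. Then $A^*\calV_i = \calL(\bar E_i^A)$, and clearly $\bigcup_{i\in I} \bar E_i^A = \overline{\bigcup_{i\in I} E_i}^A$. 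Applying the first part of the proposition, $\bigcap_{i\in I} A^*\calV_i = \calL(\bigcup_i \bar E_i^A)$, which is exactly $A^*\calW$ where $\calW$ is $\calC$-defined by $\bigcup_{i\in I} E_i$. Since $A$ was arbitrary, $\bigcap_{i\in I}\calV_i$ is $\calC$-defined by $\bigcup_{i\in I} E_i$.

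I do not expect any real obstacle: the statement is essentially a bookkeeping observation about swapping two universal quantifiers, and the only thing to be careful about is that ``$\calC$-identical satisfaction of $E_i$'' is itself a universal statement over $\calC$-morphisms, so one needs to keep straight that the union happens on the outside (over $i$) while the implicit universal over $\phi\in\calC$ stays fixed. The proof writes itself in a few lines.
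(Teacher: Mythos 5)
Your proof is correct and is exactly what the paper intends: the paper offers no argument beyond remarking that the proposition ``essentially follows from the definition of the satisfaction of profinite equations,'' and your quantifier swap together with the alphabet-by-alphabet unwinding of $\calC$-identical satisfaction is precisely that argument spelled out (and it transfers verbatim to the ordered case, since an ordered identity is itself just a family of equations). One terminological quibble: $\hat\phi(u)\to\hat\phi(v)$ is in general not an \emph{explicit} equation in the paper's sense, since $\hat\phi(u)$ and $\hat\phi(v)$ need not lie in $A^*$; this does not affect the argument.
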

     
 The fact that an arbitrary intersection of lattices of regular
 languages (resp.  (positive) $\calC$-varieties) is again a lattice of
 regular languages (resp.  a (positive) $\calC$-variety) has the
 following consequence: for each set $V$ of regular languages in $A^*$
 (resp.  every class $\calV$ of regular languages) there exists a least
 lattice (resp.  a least (positive) $\calC$-variety) containing it,
 which is said to be \emph{generated by} $V$\index{lattice!generated by a set} (resp.  $\calV$\index{variety!generated by a set}).

 The union of two lattices of languages in $A^*$ is not a lattice in
 general.  The relevant operation is the \emph{join}: the join\index{join} of two
 lattices of regular languages in $A^*$ (resp.  classes of regular
 languages) is defined to be the lattice generated by their union.

 Describing the profinite equations or identities defining a join is
 difficult.  In fact, Albert, Baldinger and Rhodes exhibit
 \cite{Albert&Baldinger&Rhodes:1992} a finite set $\Sigma$ of computable profinite
 identities, such that the join of the pseudovariety
 $\llbrack\Sigma\rrbrack$ with the pseudovariety $\Com = \llbrack xy =
 yx\rrbrack$ of commutative monoids, is not decidable (see also \cite{Auinger&Steinberg:2003}).

 Some joins were computed early, based on the structural theory of
 monoids.  This is the case for instance of $\psvJ_1 \vee \psvG$, which is
 characterized as the class of finite monoids which are unions of groups and
 in which idempotents commute (see \cite{Howie:1976}).   This translates as
 \begin{displaymath}
 \psvJ_1\vee\psvG = \llbrack x^{\omega+1} = x,\ x^\omega y^\omega = 
 y^\omega x^\omega\rrbrack.
 \end{displaymath}
 Other joins resisted computation until the advent of profinite
 methods, such as the joins $\RR \vee \LL$ (Almeida and Azevedo
 \cite{Almeida&Azevedo:1989}) and $\psvG\vee \Com$ (Almeida
 \cite{Almeida:1988}).  The case of $\psvJ\vee\psvG$ is interesting, since
 this join is decidable but is not defined by a finite set of profinite
 identities (Almeida, Azevedo and Zeitoun \cite{Almeida&Azevedo&Zeitoun:1999},
 Steinberg \cite{Steinberg:1998,Steinberg:2001}, Trotter and
 Volkov \cite{Trotter&Volkov:1996}).

 \begin{example}
     The following simple examples will be useful in the sequel.  Let
     $\psvI = \llbrack x=y\rrbrack$ be the trivial pseudovariety of
     monoids (which consists only of the 1-element monoid).  Let
     $\psvK$ and $\psvD$ be, respectively, the pseudovarieties of semigroups
     $\psvK = \llbrack x^\omega y = x^\omega\rrbrack$ and $\psvD = \llbrack
     yx^\omega = x^\omega\rrbrack$.  The elements of $\psvK$ are the finite semigroups in which 
     idempotents act as zeroes on the left.  Dually, in the semigroups of 
     $\psvD$, idempotents act like zeroes on the right.  If $\psvV$ is any
     pseudovariety of monoids, we let $L\psvV$ be the class of finite
     semigroups $S$ such that $eSe \in \psvV$ for each idempotent $e$ of
     $S$.  It is easily verified that $L\psvV$ is a pseudovariety of
     semigroups, and that it is decidable if and only if $L\psvV$ is.
     
     It is also easy to verify that the semigroups that are both in $\psvK$
     and in $\psvD$ are exactly the semigroups with a single idempotent,
     which is a zero (these semigroups are
     called \emph{nilpotent}).
     Interestingly, the join $\psvK\vee\psvD$ is equal to $L\psvI = \llbrack
     x^\omega yx^\omega = x^\omega\rrbrack$.
%
 \end{example}

\subsection{Closure operations and Mal'cev products}\label{SW:sec: malcev}

An early closure result is  Sch\"utzenberger's theorem on star-free languages. The set of \emph{star-free languages}\index{language!star-free}\index{star-free languages} over an alphabet $A$ is the least boolean algebra containing the letters (and the empty set), which is closed under concatenation.  For instance, $aA^*$ is star-free, since it is equal to $a\emptyset^c$. A non-trivial question is that of decidability: given a regular language $L$, can we decide whether it is star-free? As it turns out, $(ab)^*$ is star-free (its complement is the set of all words with two consecutive $a$'s or two consecutive $b$'s, or that start with $b$ or end with $a$) but $(aa)^*$ is not.

The solution to this problem was given by Sch\"utzenberger \cite{Schutzenberger:1965} with the following theorem\index{Sch\"utzenberger's Theorem}.

\begin{theorem}\label{SW:thm: Schutz}
The class of star-free languages forms a variety of languages, corresponding to the pseudovariety $\psvA$ of aperiodic monoids. In particular, this class is decidable.
\end{theorem}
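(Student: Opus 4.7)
The plan is to establish the equivalence $L$ star-free iff $\synt(L)\in\psvA$, from which both the variety property and decidability follow for free: $\psvA$ is a pseudovariety (defined by the profinite identity $x^\omega = xx^\omega$), so by Theorem~\ref{SW:thm: identities and varieties} the Eilenberg correspondence gives that star-free languages form a variety. Decidability then reduces to testing aperiodicity of $\synt(L)$ from its multiplication table, which is polynomial time (check $x^{|M|}=x^{|M|+1}$ for each $x$).

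For the forward direction, I would induct on the construction of a star-free expression for $L\subseteq A^*$. The atomic languages $\emptyset$, $\{1\}$, and $\{a\}$ for $a\in A$ have trivially aperiodic syntactic monoids. Boolean closure is immediate: every monoid recognizing $K\cup L$ or $A^*\setminus L$ divides a direct product of $\synt(K)$ and $\synt(L)$, and aperiodic monoids are closed under finite products and division. The non-trivial case is concatenation: assuming $\synt(K)$ and $\synt(L)$ are aperiodic, one must show $\synt(KL)$ is aperiodic. The natural tool is the Sch\"utzenberger product $\synt(K)\diamond\synt(L)$, a monoid of upper-triangular $2\times 2$ matrices that recognizes $KL$ and is verified by direct calculation to be aperiodic when its two factors are.

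The converse---every $L$ with $\synt(L)$ aperiodic is star-free---is the main obstacle. I would argue by induction on $|M|$ where $M=\synt(L)$. Writing $L$ as the disjoint union of the languages $\eta_L^{-1}(m)$ for $m\in\eta_L(L)$, it suffices to show each $\eta_L^{-1}(m)$ is star-free. If $m$ does not lie in the minimum $\mathcal{J}$-class of $M$, then $\eta_L^{-1}(m)$ admits a description in terms of languages recognized by the Rees quotient $M/I$, where $I$ is the ideal of elements strictly $\mathcal{J}$-below $m$; this quotient is aperiodic and strictly smaller, and the induction hypothesis supplies star-free descriptions. When $m$ lies in the minimum $\mathcal{J}$-class, one exploits that every $\mathcal{H}$-class of that $\mathcal{J}$-class is a singleton (by aperiodicity), and analyzes a word $w$ with $\eta_L(w)=m$ by locating the first and last positions at which the running image of the prefix enters the minimum $\mathcal{J}$-class; one then expresses $\eta_L^{-1}(m)$ through marked concatenation of prefix and suffix conditions, each recognized by a smaller aperiodic quotient and hence star-free by induction.

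The genuinely delicate point throughout the converse is guaranteeing that the resulting star-free expression remains \emph{finite}: this is precisely where aperiodicity, as opposed to mere finiteness of $M$, is essential, since a nontrivial subgroup of $M$ would force unbounded modular counting that no fixed star-free expression can capture. Two alternative routes to the same conclusion are worth mentioning. One can invoke the Krohn--Rhodes decomposition---every aperiodic monoid divides an iterated wreath product of copies of $U_1$---and translate the cascade directly into a star-free construction. Alternatively, one may establish the converse of Theorem~\ref{SW:thm:fo_aperiodic} separately and then translate each first-order formula into a star-free expression by induction on formula structure, with existential quantifiers becoming unions of marked concatenations indexed by the atomic letter predicates.
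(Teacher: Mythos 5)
Your route is genuinely different from the paper's: the paper deduces Theorem~\ref{SW:thm: Schutz} from the equivalence of star-freeness with $\FO[<]$-definability (Theorem~\ref{SW:thm: SF vs FO}), proved by an Ehrenfeucht--Fra\"{\i}ss\'e game argument, combined with Theorem~\ref{SW:thm:fo_aperiodic}, whose hard direction (aperiodic implies $\FO[<]$-definable) the paper cites rather than proves; you instead attempt the classical algebraic proof. Your forward direction is correct and standard (boolean operations via division and products, concatenation via the Sch\"utzenberger product, which is indeed aperiodic when its factors are), and deriving the variety statement and decidability from the equivalence, using that $\psvA$ is a pseudovariety and the Eilenberg correspondence, is exactly in the spirit of the paper's framework.

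The gap is in the converse, and it is concrete: the induction on $|M|$ does not advance. If $I$ is the ideal of elements strictly ${\mathcal J}$-below $m$, the Rees quotient $M/I$ does recognize $\eta_L^{-1}(m)$, but $|M/I|=|M|-|I|+1$, so it is strictly smaller only when $|I|\ge 2$; whenever that set is a singleton --- e.g.\ for any $m$ lying just above a zero, a typical situation for syntactic monoids of star-free languages --- $M/I\cong M$ and the induction stalls. The minimal-${\mathcal J}$-class case has the same defect: the prefix and suffix events you use (``the image of the prefix $u$ has just entered the minimal ${\mathcal J}$-class'') are conditions of the form $\eta_L(u)=n$ with $n$ outside the minimal ${\mathcal J}$-class, and these are recognized by $M$ itself, not by a smaller quotient, so the induction hypothesis cannot be invoked. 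The classical proof inducts instead on the position of the target element $m$ in the $\le_{\mathcal J}$-order: one expresses $\eta_L^{-1}(m)$ as a boolean combination of marked products built from the languages $\eta_L^{-1}(n)$ with $n$ strictly ${\mathcal J}$-above $m$ (recognized by the same monoid), and the hard inclusion of that formula is precisely where ${\mathcal H}$-triviality enters. Relatedly, the delicate point is not finiteness of the resulting expression --- all unions are indexed by subsets of the finite monoid, so finiteness is automatic --- but the correctness of this decomposition. Your fallback routes (Krohn--Rhodes, or passing through $\FO[<]$) are viable but only named; the second would in addition require the aperiodic-implies-$\FO[<]$ direction, which the paper also leaves to the literature.
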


In view of Theorem~\ref{SW:thm:fo_aperiodic}, this is equivalent to the following statement.

\begin{theorem}\label{SW:thm: SF vs FO}
A language is star-free if and only if it is $\FO[<]$-definable.
\end{theorem}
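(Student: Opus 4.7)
The plan is to derive this theorem as a direct corollary of the two equivalences already in hand, rather than attempting an independent proof. Specifically, Schützenberger's Theorem (Theorem~\ref{SW:thm: Schutz}) asserts that the class of star-free languages is the variety corresponding to the pseudovariety $\psvA$ of aperiodic monoids, which by the definition of the Eilenberg correspondence means that $L \subseteq A^*$ is star-free if and only if $\synt(L) \in \psvA$. Meanwhile, the McNaughton--Papert Theorem (Theorem~\ref{SW:thm:fo_aperiodic}) states that $L \in \FO[<]$ if and only if $\synt(L)$ is aperiodic. Chaining these two biconditionals through the common intermediate condition ``$\synt(L) \in \psvA$'' immediately yields the desired equivalence between star-freeness and $\FO[<]$-definability.

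Concretely, I would write: suppose first that $L$ is star-free; then by Theorem~\ref{SW:thm: Schutz}, $\synt(L)$ is aperiodic, and hence by Theorem~\ref{SW:thm:fo_aperiodic}, $L$ is $\FO[<]$-definable. Conversely, if $L$ is $\FO[<]$-definable, then Theorem~\ref{SW:thm:fo_aperiodic} gives that $\synt(L)$ is aperiodic, and Theorem~\ref{SW:thm: Schutz} then gives that $L$ is star-free. This two-line argument is all that the present theorem requires at this stage of the paper.

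The genuine difficulty, of course, is pushed entirely into the two results being invoked. The hard directions are (a) showing that every language recognized by an aperiodic monoid is star-free (the harder half of Schützenberger's theorem, traditionally proved by induction on the $\mathcal{J}$-depth of the syntactic monoid, or alternatively via a Krohn--Rhodes or ideal decomposition argument), and (b) showing that every language whose syntactic monoid is aperiodic is $\FO[<]$-definable (the unproved converse of Theorem~\ref{SW:thm:fo_aperiodic}, which the excerpt explicitly defers to references such as Pin, Straubing, and Wilke). Since both of these lie outside what Theorem~\ref{SW:thm: SF vs FO} itself is asking us to prove, the main obstacle is purely bookkeeping: one must confirm that the Eilenberg correspondence in Theorem~\ref{SW:thm: Schutz} is being used in precisely the ``$L \in A^*\mathcal{V}$ iff $\synt(L) \in \psvV$'' form, so that the two aperiodicity conditions coming from the two theorems are literally the same statement about $\synt(L)$ and can be composed without any further verification.
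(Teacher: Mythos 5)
Your chaining argument is circular relative to the way this paper is organized, so it cannot stand as the proof of Theorem~\ref{SW:thm: SF vs FO} here. In the paper, Theorem~\ref{SW:thm: Schutz} is \emph{not} given an independent proof: immediately after stating it, the authors say that, in view of Theorem~\ref{SW:thm:fo_aperiodic}, it is \emph{equivalent} to Theorem~\ref{SW:thm: SF vs FO}, and then they prove the latter --- i.e., the game-theoretic proof of the star-free/$\FO[<]$ equivalence is precisely the paper's route to Sch\"utzenberger's theorem, not the other way around. Moreover, the half of Theorem~\ref{SW:thm:fo_aperiodic} you need (aperiodic $\Rightarrow$ $\FO[<]$-definable) is explicitly left unproved in the paper, with the remark that the standard proofs first show that aperiodic monoids recognize only star-free languages and then invoke the equivalence of star-free and first-order definable languages ``which we will show in Section~\ref{SW:sec: malcev}'' --- that is, the very statement you are being asked to prove. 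So both premises of your two-line argument are, within this text, supported only by the theorem at hand; as a self-contained contribution your proposal proves nothing new, and your closing claim that the hard directions ``lie outside what the theorem is asking'' gets the logical dependencies of the paper backwards. (If one imports Sch\"utzenberger's 1965 theorem and the full McNaughton--Papert theorem as external black boxes, the chaining is of course valid, but then Theorem~\ref{SW:thm: SF vs FO} would be pointless to state and prove.)

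What the paper actually does is a direct, self-contained argument in both directions. For $\FO[<]\Rightarrow$ star-free, it shows by induction on the quantifier depth $k$ that every $\sim_k$-class $[w]_k$ is star-free, using the identity
\begin{displaymath}
[w]_{k+1}=\bigcap [x]_k a[x']_k\setminus \bigcup [y]_k b[y']_k ,
\end{displaymath}
where the intersection ranges over the factorizations $w=xax'$ and the union over the triples with $w\notin [y]_kb[y']_k$; the nontrivial inclusion is checked with an Ehrenfeucht--Fra\"{\i}ss\'e strategy. For star-free $\Rightarrow\FO[<]$, it inducts on the star-free expression; the only delicate case is the marked product $KaL$, handled by a $(k+1)$-round game argument in which Spoiler's first pebble on the marker letter $a$ splits the game into two $k$-round games on the factors, showing that $KaL$ is a union of $\sim_{k+1}$-classes. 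If you want a correct write-up, you should either reproduce an argument of this kind or give an independent proof of one of the two hard implications you are leaning on.
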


\begin{proof}
We prove Theorem~\ref{SW:thm: SF vs FO} using game-theoretic methods, as in Section~\ref{SW:sec: logic}. Let us first show that a $\FO[<]$-definable language is star-free. It is sufficient to show, by induction on $k$, that for all $w\in A^*$ and $k\geq 0$, $[w]_k$ is star-free.  The case $k=0$ is trivial, since $[w]_0=A^*$ for all $w\in A^*$.  To prove the general case, we will establish the equality
\begin{displaymath}
[w]_{k+1}=\bigcap [x]_ka[x']_k\setminus \bigcup [y]_kb[y']_k,
\end{displaymath}
where the intersection is over all factorizations $w=xax'$ with $x,x'\in A^*$ and $a\in A$, and the union is over all triples $([y]_k,b,[y']_k)$, where $b\in A$ and $w\not \in [y]_kb[y']_k$.  By induction, the $\sim_k$-classes are star-free languages, so the equality above implies that the $\sim_{k+1}$-classes are star-free as well.

To prove the equality, note that the inclusion from left to right is trivial, so we need only show that if $w'\in A^*$ is in the set on the right-hand side, then $w\sim_{k+1} w'$.  So we will show that Duplicator has a winning strategy in the $(k+1)$-round game in the two words. Observe that inclusion of $w'$ in the right-hand side means that $w,w'$ have precisely the same set of factorizations with respect to $\sim_k$, in the sense that for every factorization $xax'$ of one word, with $a\in A$, there exists a corresponding factorization $yay'$ of the other word with $x\sim_kx'$, $y\sim_ky'$.  Thus if Spoiler plays on a position in one of the words, inducing a factorization $xax'$ of the word, Duplicator can play on the corresponding position of the other.  Duplicator can now correctly reply in the remaining $k$ rounds of the game by using her winning strategy in the games in $(x,y)$ and $(x',y')$.

Conversely, let us show that every star-free language is $\FO[<]$-definable. In view of the definition of star-free languages, we need to show, first, that $A^*$ and every language of the form $\{a\}$ ($a\in A$) is $\FO[<]$-definable; and second that if $K$ and $L$ are $\FO[<]$-definable, then so are the boolean combinations of $K$ and $L$, and so is $KL$. The only non-trivial point concerns the concatenation product, and the problem easily reduces to showing that $KaL$ ($a\in A$) is $\FO[<]$-definable.

Let us assume that $K$ and $L$ are defined by formulas of quantifier-depth $k$. Let $w \in KaL$, say, $w = uav$ with $u\in K$ and $v\in L$. We want to show that if $w \sim_{k+1} w'$ --- that is, Duplicator has a winning strategy for $\mathcal{G}_{k+1}(w,w')$ ---, then $w'\in KaL$. Let Spoiler put a pebble on the letter $a$ in $w$ witnessing the factorization $w = uav$, then Duplicator's strategy has her put a pebble on a letter $a$ in $w'$, determining a factorization $w' = u'av'$. We claim that Duplicator wins the $k$-round game in $u$ and $u'$: indeed, such a game can be seen as the 2nd, \dots, ($k+1$)-st moves in a game in $w = uav$ and $w' = u'av'$. Therefore $u\sim_k u'$ and hence $u'\in K$. Similarly $v'\in L$: thus $w' \in KaL$.
\end{proof}

A natural extension of the question answered by Sch\"utzenberger's theorem is the following:
 can we characterize the varieties of languages which are closed under
 concatenation product?  and if $\calV$ is a variety of languages, can
 we describe the least variety containing $\calV$ and closed under
 concatenation product? Both problems were solved by Straubing  \cite{Straubing:1979}. In order to state his 
 result, we need to introduce an operation on pseudovarieties.

Let $\psvV$ be a pseudovariety of monoids and let $\psvW$ be a pseudovariety
of semigroups (resp.  ordered semigroups).  We consider the class of
all finite monoids (resp.  ordered monoids) $M$ for which there exists
a morphism (un-ordered) $\phi\colon M \to N$ such that $N\in \psvV$ and
$\phi\inv(e)\in \psvW$ for each idempotent element $e$ of $N$.  This
class is not a pseudovariety in general, but it is elementary to
verify that the quotients (resp.  ordered quotients) of its elements
form a pseudovariety of monoids (resp.  ordered monoids), called the
\emph{Mal'cev product}\index{Mal'cev product} of $\psvV$ by $\psvW$, and denoted $\psvW\malcev\psvV$.

\begin{theorem}\label{SW:thm: concat closure}
    Let $\calV$ be a variety of languages and let $\psvV$ be the
    corresponding pseudovariety of monoids.  If $\calW$ is the least
    variety of languages containing $\calV$ and closed under
    concatenation product, then the corresponding pseudovariety of
    monoids is $\psvA\malcev\psvV$.
\end{theorem}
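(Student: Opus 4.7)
My plan is to establish, via Eilenberg's correspondence, that $\calW$ corresponds to the pseudovariety $\psvA \malcev \psvV$. Writing $\calW'$ for the variety of languages corresponding to $\psvA \malcev \psvV$, the theorem amounts to two inclusions: first, that $\calW \subseteq \calW'$, proved by showing $\calW'$ contains $\calV$ and is closed under concatenation product (so that the minimal such variety $\calW$ is contained in it); second, that $\calW' \subseteq \calW$, which requires showing that every language recognized by a monoid in $\psvA \malcev \psvV$ can be built from languages in $\calV$ using boolean operations and concatenation.

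For the first direction, $\calV \subseteq \calW'$ is immediate, since every $M\in\psvV$ belongs to $\psvA\malcev\psvV$ via the identity morphism $M\to M$, whose idempotent fibers are singletons and thus trivially aperiodic. To show $\calW'$ is closed under concatenation, it suffices to treat $L = L_0 a L_1$ with $L_0, L_1 \in A^*\calV$ and $a \in A$, since boolean closure is automatic. The classical tool here is the \emph{Schützenberger product} $\diamondsuit(M_0,M_1)$, realized as a monoid of upper-triangular matrices $\bigl(\begin{smallmatrix} m_0 & P \\ 0 & m_1\end{smallmatrix}\bigr)$ with $m_i\in M_i$ and $P\subseteq M_0\times M_1$. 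One checks that $L_0 a L_1$ is recognized by $\diamondsuit(\synt(L_0),\synt(L_1))$, and that the canonical projection $\pi\colon \diamondsuit(M_0,M_1)\to M_0\times M_1$ has idempotent fibers which are aperiodic: a direct computation with the matrix multiplication shows that the $P$-component stabilizes after finitely many iterations of any idempotent lift. This places $\diamondsuit(M_0,M_1)$ in $\psvA\malcev\psvV$, and hence $L\in\calW'$.

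For the converse direction, fix $M \in \psvA\malcev\psvV$ witnessed by $\phi\colon M\twoheadrightarrow N$ with $N\in\psvV$ and each $\phi\inv(e)$ aperiodic, and let $\eta\colon A^*\to M$ be a morphism. Set $\psi=\phi\eta\colon A^*\to N$, so each $\psi\inv(n)\in A^*\calV$. The goal is to express every $\eta\inv(m)$ as a boolean combination of ``marked'' concatenations of the form
\begin{displaymath}
\psi\inv(n_0)\,a_1\,\psi\inv(n_1)\,a_2\,\cdots\,a_k\,\psi\inv(n_k),
\end{displaymath}
with $a_i\in A$, $n_i\in N$, and $k$ uniformly bounded. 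The strategy is to associate to each word $w\in\eta\inv(m)$ a canonical factorization $w=w_0 a_1 w_1\cdots a_k w_k$ satisfying a minimality condition (for example, at each step extracting the leftmost letter whose addition strictly enlarges the image of the current prefix under $\phi$-restricted tracking). Aperiodicity of the fibers $\phi\inv(e)$ is used precisely to argue that once such a factorization has reached length $k$ exceeding $|N|\cdot c$ (for a constant $c$ tied to the aperiodic index), further refinements become trivial in $M$; equivalently, the ``marker'' process stabilizes, bounding $k$. Two words with the same sequence of markers $(n_0,a_1,n_1,\ldots,a_k,n_k)$ then yield the same value $m$ under $\eta$, and $\eta\inv(m)$ is exhibited as the disjoint union over the finitely many such sequences of the corresponding concatenations — each of which lies in $\calW$ by construction.

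The main obstacle is the second direction: producing the bounded canonical factorization and verifying that its marker sequence determines $\eta(w)$. This is the combinatorial heart of Straubing's theorem and is where aperiodicity of the Mal'cev fibers must be leveraged in a nontrivial way; routine approaches get tangled because the markers depend on $\phi$, not just on $\eta$, so the argument must simultaneously track data in $M$ and in $N$. I would organize this step by induction on a complexity invariant of $M$ relative to $\phi$ — for example the aperiodic index of the fibers times $|N|$ — reducing each inductive step to the Schützenberger-product analysis already carried out in the first direction, thereby closing the circle.
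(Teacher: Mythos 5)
The chapter itself does not prove this theorem (it is attributed to Straubing \cite{Straubing:1979}), so your proposal must be judged against the known proofs. Your first inclusion is essentially sound and is the standard argument: $\calV\subseteq\calW'$ via the identity morphism, the Sch\"utzenberger product $\diamondsuit(M_0,M_1)$ recognizes $L_0aL_1$, its projection onto $M_0\times M_1$ has aperiodic fibers over idempotents (your stabilization computation is the right one), and the reduction of the plain product $KL$ to marked products $Ka(a^{-1}L)$ (plus $K$ when $1\in L$) is routine for a variety. Hence the variety corresponding to $\psvA\malcev\psvV$ contains $\calV$ and is closed under concatenation, so it contains $\calW$.

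The converse inclusion is where the entire content of the theorem lies, and there your sketch has a genuine gap; indeed its central claim is false as stated. You assert that each $\eta\inv(m)$ is a finite (disjoint) union of marked products $\psi\inv(n_0)a_1\psi\inv(n_1)\cdots a_k\psi\inv(n_k)$ with bounded $k$. Specialize to the base case $\psvV=\psvI$: then $N$ is trivial, every $\psi\inv(n_i)$ is $A^*$, and the claim says that every language with aperiodic syntactic monoid is a finite union of languages $A^*a_1A^*\cdots a_kA^*$. Such unions are closed under inserting arbitrary letters, whereas $(ab)^*$ is star-free (aperiodic) and is not; in variety terms you would be proving that $\psvA\malcev\psvV$ corresponds to $\Pol\calV$, contradicting Theorem~\ref{SW:thm: polV}, which identifies $\Pol\calV$ with the strictly weaker ordered class $\llbrack x^\omega yx^\omega\le x^\omega\rrbrack\malcev\psvV$. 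So no single bounded ``canonical factorization whose marker sequence determines $\eta(w)$'' can exist: complementation and genuinely nested products are unavoidable. The auxiliary steps are also unsupported: aperiodicity of the fibers $\phi\inv(e)$ only controls products of elements lying over a single idempotent of $N$, and tying arbitrary factorizations of $w$ to such products needs a separate argument (Ramsey-type iteration or an induction on the $\mathcal{J}$-structure of $M$, or local divisors, carried out relative to $\phi$); and the proposed induction that ``reduces to the Sch\"utzenberger-product analysis already carried out'' is circular, since that analysis proves the inclusion you already have and gives no mechanism for decomposing an arbitrary $M\in\psvA\malcev\psvV$. What is actually needed is either the decomposition theorem that every monoid of $\psvA\malcev\psvV$ divides an iterated Sch\"utzenberger product of members of $\psvV$, together with the description of the languages such products recognize, or a relativized version of Sch\"utzenberger's aperiodicity argument along $\phi$. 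Note that your missing step already contains, as its special case $\psvV=\psvI$, the hard half of Sch\"utzenberger's theorem, which the chapter itself explicitly declines to prove.
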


Sch\"utzenberger's
theorem above is the particular case of Theorem~\ref{SW:thm: concat closure}
when $\calV$ is the trivial variety of languages.

Interestingly, the Mal'cev product is also useful for characterizing the
closure of a variety of languages under other types of products.  For
technical reasons, the definition of these products involves
intermediate, marker letters: If $K$ and $L$ are languages in $A^*$,
and if $a\in A$, we say that the product $KaL$ is
\emph{deterministic}\index{product!deterministic} if each word $u\in KaL$ has a unique prefix in
$Ka$.  Co-deterministic products are defined dually: the product $KaL$
is \emph{co-deterministic} if each word $u\in KaL$ has a unique
suffix in $aL$.  Another important modality of product is the
following: a product $L_0a_1L_1\cdots a_kL_k$ is \emph{unambiguous}\index{product!unambiguous}
if every word $u$ in this language admits a unique decomposition in
the form $u = u_0a_1u_1 \cdots a_ku_k$ with each $u_i \in L_i$.
Deterministic and co-deterministic products are particular cases of
unambiguous products.

It is natural to extend these operations to classes of languages.
Given a class of languages $\calV$, we denote by $\Det\calV$ the class
of languages such that, for each alphabet $A$, $A^*\Det\calV$ is
the set of all boolean combinations of languages of $A^*\calV$ and of
deterministic products of these languages.  $\Det\calV$ is called the
\emph{deterministic closure}\index{closure!deterministic} of $\calV$.  The
\emph{co-deterministic closure} $\coDet\calV$ and the
\emph{unambiguous closure}\index{closure!unambiguous} $\UPol\calV$ are defined similarly.
Sch\"utzenberger \cite{Schutzenberger:1976,Pin:1986} characterized
algebraically these operations for varieties of languages.

\begin{theorem}\label{SW:thm: deterministic and unambiguous products}
    Let $\calV$ be a variety of languages and let $\psvV$ be the
    corresponding pseudovariety of monoids.  Then $\Det\calV$,
    $\coDet\calV$ and $\UPol\calV$ are varieties of languages,
    and the the corresponding pseudovarieties of monoids are
    $\psvK\malcev\psvV$, $\psvD\malcev\psvV$ and $L\psvI\malcev\psvV$, respectively.
\end{theorem}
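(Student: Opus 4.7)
The plan is to invoke Eilenberg's Theorem (Theorem~\ref{SW:thm:eilenberg}), which reduces each of the three statements to two tasks: verifying that $\Det\calV$, $\coDet\calV$ and $\UPol\calV$ are varieties of languages, and identifying each associated pseudovariety of monoids with the claimed Mal'cev product. The three cases are parallel, so I would treat the deterministic case in detail and indicate the variations.

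To check that $\Det\calV$ is a variety of languages, closure under boolean operations is built into the definition. For closure under quotients, given a deterministic product $KaL$ with $K,L \in A^*\calV$ and $u\in A^*$, I would expand $u\inv(KaL)$ by case-splitting on the unique position (if any) where $u$ aligns with the marker letter $a$; uniqueness of the $Ka$-prefix reduces this to a finite boolean combination of deterministic products of quotients of $K$ and $L$. Closure under inverse morphisms follows similarly: for $\phi\colon B^*\to A^*$, the language $\phi\inv(KaL)$ unfolds as a finite boolean combination, over the letters of $\phi\inv(a)$, of deterministic products built from $\phi\inv(K)$ and $\phi\inv(L)$. The same reasoning applies \emph{mutatis mutandis} to $\coDet\calV$ and $\UPol\calV$.

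For the forward direction, I would show that a deterministic product $KaL$, with $K$ and $L$ recognized by monoids in $\psvV$, is itself recognized by a monoid in $\psvK\malcev\psvV$. The recognizer combines those of $K$ and $L$ together with a flag marking whether the unique $a$-marker has been crossed; this monoid $M$ carries a natural morphism $\phi\colon M\to N\in\psvV$ obtained by forgetting the ``post-marker'' component. The determinism of the product, i.e.\ the uniqueness of the prefix in $Ka$, translates exactly into the statement that inside every idempotent fiber $\phi\inv(e)$ further letters are absorbed on the left, so $\phi\inv(e)\in\psvK$. Co-deterministic products give fibers in $\psvD$ by symmetry, while unambiguous products give fibers in $L\psvI$ because both left- and right-absorption hold simultaneously within each fiber.

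The main obstacle is the converse: every language $L$ with $\synt(L)\in\psvK\malcev\psvV$ must lie in $A^*\Det\calV$. Given $M=\synt(L)$ and a morphism $\phi\colon M\to N\in\psvV$ whose idempotent fibers lie in $\psvK$, I would decompose $L$ as a boolean combination of sets of the form $(\phi\circ\eta_L)\inv(n)$ with $n\in N$. The $\psvK$-structure of each idempotent fiber reflects, on the language side, the existence of a unique ``first visit'' along the syntactic morphism to the corresponding idempotent $\mathcal{J}$-class, which is exactly the combinatorics of a deterministic concatenation. The hard technical point, and the crux of the proof, is arranging the decomposition so that each intermediate factor is recognized by a monoid in $\psvV$, not merely by a quotient of $M$; this is carried out by induction on $|N|$ or on the $\mathcal{J}$-class structure of $N$, using at each step the recognizer in $\psvV$ supplied by $N$ itself for the relevant prefix and suffix pieces. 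The $\coDet$ case is dual, and the $\UPol$ case combines both arguments via the two-sided absorption that characterizes $L\psvI$.
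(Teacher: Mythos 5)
First, note that the chapter you are working from does not prove this theorem: it is stated with a citation to Sch\"utzenberger and to Pin's book, so the only fair comparison is with those classical proofs. Measured against them, your outline has the right architecture (two language--monoid inclusions, with the variety property of $\Det\calV$, $\coDet\calV$, $\UPol\calV$ then coming along), and the closure computations you sketch can indeed be carried out --- but even there you are too quick: when you expand $\phi\inv(KaL)$ over letters $b$ and factorizations $\phi(b)=xay$ into products $\phi\inv(Kx\inv)\,b\,\phi\inv(y\inv L)$, you must \emph{prove} that these new marked products are again deterministic (it is true, but it uses determinism of $KaL$ together with the fact that the marker letter cannot be erased; ``follows similarly'' hides exactly the point where erasing morphisms could have caused trouble).

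The genuine gaps are in the two algebraic halves. For the forward inclusion, the ``flag marking whether the unique $a$-marker has been crossed'' is not a well-defined monoid recognizer: on an arbitrary word (in particular one outside $KaL$) there may be several candidate prefixes in $Ka$, and a morphism cannot know which, if any, is ``the'' one. The standard fix is to recognize $KaL$ by a Sch\"utzenberger-product-type monoid built from $\psvV$-recognizers $M_1,M_2$ of $K,L$ (equivalently, to work with the relational morphism from $\synt(KaL)$ to $M_1\times M_2$), and then the assertion you call a ``translation'' --- that idempotent fibers satisfy $x^\omega y=x^\omega$ --- is precisely the nontrivial syntactic computation in which determinism is used; it is the key lemma, not a reformulation, and you would also have to reconcile the relational-morphism picture with the paper's definition of $\psvW\malcev\psvV$ via genuine morphisms followed by quotients. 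For the converse inclusion you only gesture: ``unique first visit to the idempotent $\mathcal{J}$-class, induction on $|N|$'' is not an argument, and this is the deep half of Sch\"utzenberger's theorem, hardest in the unambiguous case. Finally, your justification there is actually wrong as stated: requiring \emph{both} left- and right-absorption in each fiber defines $\psvK\cap\psvD$ (the nilpotent semigroups), which is strictly smaller than $L\psvI=\llbrack x^\omega yx^\omega=x^\omega\rrbrack$; the condition unambiguity buys you is only local triviality $eye=e$, so the $\UPol$ case of your plan is aimed at the wrong fiber condition.
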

 
\begin{example}\label{SW:example: R, L, DA}
    Consider the variety of languages $\calJ_1$, described in
    Sections~\ref{SW:subsection:j_1} and~\ref{SW:sec: j1 again}: for each alphabet $A$, $A^*\calJ_1$ is the
    boolean algebra generated by the languages of the form $B^*$, with
    $B \subseteq A$.  It is elementary to verify that $A^*\Det\calJ_1$
    is the boolean algebra generated by the products of the form
    $A_0^*a_1A_1^*\cdots a_kA_k^*$, such that for each $0 < i \le k$,
    $a_i\not\in A_{i-1}$.  Theorem~\ref{SW:thm: deterministic and
    unambiguous products} tells us that $\Det\calJ_1$ forms a variety 
    of languages, and that the corresponding pseudovariety of monoids 
    is $\psvK\malcev\psvJ_1$.
    
    Semigroup theory helps us characterize this pseudovariety.
    $\psvK\malcev\psvJ_1$ is the class $\RR$ of all so-called $\calR$-trivial
    finite monoids, that is, the monoids $M$ in which principal right
    ideals have a single generator: $sM = tM$ implies $s = t$.  In
    addition, one can show that $\RR = \llbrack (xy)^\omega x =
    (xy)^\omega\rrbrack$.  This immediately implies the decidability
    of $\Det\calJ_1$.
    
    A dual result characterizes $\psvD\malcev\psvJ_1$, the pseudovariety
    associated with $\coDet\calJ_1$, as the class $\LL$ of
    $\calL$-trivial finite monoids.  It is interesting to note that
    $\RR \cap \LL = \psvJ$. The variety of piecewise testable languages
    discussed in Section~\ref{SW:subsection:j} is therefore the class of
    languages that can be described simultaneously as boolean
    combinations of deterministic and of co-deterministic products of
    the form $A_0^*a_1A_1^*\cdots a_kA_k^*$ with each $A_i$ a subset
    of $A$.

    Similarly, Theorem~\ref{SW:thm: deterministic and unambiguous
    products} shows that the pseudovariety of monoids corresponding to
    $\UPol\calJ_1$ is is $L\psvI\malcev\psvJ_1$.  Again, one can show that
    this pseudovariety is the class of finite monoids in which every
    regular element is idempotent, usually denoted by $\psvDA$, and equal
    to $\llbrack (xyz)^\omega z (xyz)^\omega = (xyz)^\omega\rrbrack$.
    It follows, here too, that $\UPol\calJ_1$ is decidable.  Let us
    note in addition that it coincides with the class of languages
    that can be defined by $\FO[<]$ sentences that use at most two
    variable symbols. (See \cite{Tesson&Therien:2002}.)
\end{example}

The following result is of the same nature as Theorems~\ref{SW:thm:
concat closure} and~\ref{SW:thm: deterministic and unambiguous products}
but it involves a positive variety of languages, and the corresponding
pseudovariety of ordered monoids.
If $\calL$ is a set of regular languages in $A^*$, we denote by
$\Pol\calL$ (the \emph{polynomial closure}\index{closure!polynomial} of $\calL$), the lattice
generated by the languages of the form $L_0a_1L_1\cdots a_kL_k$, with
$L_i \in \calL$ and $a_i\in A$ for each $i$.  If $\calV$ is a class of
regular languages, then $\Pol\calV$ is the class such that, for each
alphabet $A$, $A^*\Pol\calV = \Pol(A^*\calV)$.  Then the following
result holds, see \cite{Pin&Weil:1996}.

\begin{theorem}\label{SW:thm: polV}
    Let $\calV$ be a variety of languages.  Then $\Pol\calV$ is a
    positive variety of languages, and the the corresponding
    pseudovariety of ordered monoids is $\llbrack x^\omega y
    x^\omega \le x^\omega\rrbrack \malcev\psvV$.
\end{theorem}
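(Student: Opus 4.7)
The plan is to prove the two assertions in turn. First I would establish that $\Pol\calV$ is a positive variety of languages, and then I would identify its corresponding pseudovariety of ordered monoids with $\llbrack x^\omega y x^\omega \le x^\omega\rrbrack\malcev\psvV$.

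For the positive-variety status, I would verify closure under finite union, finite intersection, quotients, and inverse morphisms. Closure under finite union is immediate, since $A^*\Pol\calV$ is defined as a lattice. Closure under finite intersection is the subtler point: one rewrites the intersection of two marked products $L_0 a_1 L_1 \cdots a_k L_k$ and $K_0 b_1 K_1 \cdots b_m K_m$ as a finite union of marked products over $\calV$ by running through all ways the two marker sequences can be interleaved in a common factorisation, and using that each $A^*\calV$ is itself a lattice. Closure under quotients follows by computing $v\inv(L_0 a_1 L_1 \cdots a_k L_k)$ and $(L_0 a_1 L_1 \cdots a_k L_k)v\inv$ explicitly: both are finite unions of marked products whose factors are quotients of the $L_i$, hence still in $A^*\calV$. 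Finally, for $\phi\colon B^*\to A^*$ a morphism, $\phi\inv$ commutes with union and transforms each marked product into a finite union of marked products over $B^*\calV$, giving closure under inverse morphisms. By Theorem~\ref{SW:thm: identities and varieties}, $\Pol\calV$ therefore corresponds to a pseudovariety of ordered monoids, which we must now identify.

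For the forward direction of the algebraic characterisation, I would show that every $L\in A^*\Pol\calV$ has $\synt(L)$ in $\llbrack x^\omega y x^\omega \le x^\omega\rrbrack\malcev\psvV$. It suffices to treat a single marked product $L = L_0 a_1 L_1 \cdots a_k L_k$ with $L_i \in A^*\calV$. The key observation is that $L$ is recognised by the product morphism $\alpha\colon A^*\to N \subseteq \prod_i \synt(L_i)$, and $N$ lies in $\psvV$ since each $\synt(L_i)$ does. The induced map $\pi\colon \synt(L)\to N$ is the witness for the Mal'cev product: one checks that for each idempotent $e\in N$, the subsemigroup $\pi\inv(e)$ of the ordered syntactic monoid of $L$ satisfies $x^\omega y x^\omega \le x^\omega$, which reflects the fact that within a single $\alpha$-stable block of a word in $L$ one may insert further factors without falling out of $L$. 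This places $\synt(L)$ in the announced Mal'cev product.

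The reverse direction is the main obstacle. Starting from $L$ with ordered syntactic morphism $\eta_L\colon A^*\to \synt(L)$ and a morphism $\pi\colon\synt(L)\to N\in\psvV$ whose idempotent preimages satisfy $x^\omega y x^\omega\le x^\omega$, one must exhibit $L$ explicitly as a finite polynomial over $\calV$. The strategy is to use the preimages $(\pi\circ\eta_L)\inv(n)$ for $n\in N$ as building blocks: each lies in $A^*\calV$, since it is recognised by $N\in\psvV$. One then shows that membership of a word $w$ in $L$ depends only on a bounded amount of information about the successive values of $\pi\eta_L$ along $w$, together with the letters witnessing transitions between idempotent-stable segments. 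The inequality $x^\omega yx^\omega\le x^\omega$ is precisely what guarantees that within such a stable segment the internal structure of $w$ is irrelevant, so that finitely many marked products over the blocks $(\pi\eta_L)\inv(n)$ suffice to describe $L$. The combinatorial heart is a stabilisation/pumping argument: raising subwords to suitable powers makes their $\pi$-image idempotent, and the profinite inequality then absorbs any further internal content. The detailed version of this decomposition is carried out by Pin and Weil~\cite{Pin&Weil:1996}. Once the equivalence is obtained at the level of ordered syntactic monoids, the positive-variety version of Eilenberg's Theorem from Section~\ref{SW:sec:eilenbergreiterman} yields the asserted correspondence between $\Pol\calV$ and $\llbrack x^\omega y x^\omega\le x^\omega\rrbrack\malcev\psvV$.
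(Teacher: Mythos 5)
The paper itself offers no proof of this theorem (it is stated with a pointer to \cite{Pin&Weil:1996}), so your decision to defer the hard converse direction to that reference is consistent with the text; there is no internal argument to compare it against. Your positive-variety part is essentially right in outline, with two small remarks: closure under finite intersection is automatic here, because $\Pol\calV$ is \emph{defined} as the lattice generated by the marked products (rewriting an intersection of marked products as a union of marked products is a genuinely harder fact that you do not need); and the inverse-morphism step silently uses closure of $\calV$ under quotients, since a marker $a_i$ of the product in $A^*$ may fall strictly inside the image $\phi(b)$ of a letter $b\in B$, so the factors of the resulting products over $B^*$ are inverse images of quotients of the $L_i$.

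The genuine gap is in your forward direction. The claim that $L=L_0a_1L_1\cdots a_kL_k$ is recognised by the product morphism $\alpha\colon A^*\to N\subseteq\prod_i\synt(L_i)$ is false: take $L=A^*aA^*$ with $L_0=L_1=A^*$, where $N$ is trivial but $L$ is not trivial. Consequently there is in general no induced morphism $\pi\colon\synt(L)\to N$ at all (if there were, $\synt(L)$ would divide $N$ and hence lie in $\psvV$, which would trivialise the statement); the two congruences, the syntactic congruence of $L$ and the kernel of $\alpha$, are incomparable. The correct witness is the relational morphism $\alpha\circ\eta_L\inv$, or, to match the paper's definition of $\psvW\malcev\psvV$ via quotients: let $M$ be the image of $\eta_L\times\alpha$ in $\synt(L)\times N$; then $\synt(L)$ is an ordered quotient of $M$, the projection $M\to N$ is an honest morphism onto a member of $\psvV$, and one verifies that the preimage of each idempotent $e\in N$ satisfies $x^\omega yx^\omega\le x^\omega$. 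That verification is where the content lies: one must show that if $\alpha(u)=\alpha(v)=e$ is idempotent, then for all $s,t$, $su^{\omega}t\in\overline L$ implies $su^{\omega}vu^{\omega}t\in\overline L$, using that a long block $u^m$ sits inside a single factor $L_i$ of some witnessing factorisation and that replacing it by $u^mvu^m$ does not change its image in $\synt(L_i)$. Your phrase that ``within a single $\alpha$-stable block one may insert further factors without falling out of $L$'' is not true for arbitrary insertions in arbitrary words of $L$; it is precisely the idempotency and equality of the $\alpha$-images of $u$ and $v$ that makes the insertion harmless, and this needs to be said and proved. With that repair the forward direction is sound, and the converse remains, as in the paper, an appeal to \cite{Pin&Weil:1996}.
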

 


In general, the results reported above do not provide explicit
decision algorithms, even if $\psvV$ is decidable (see \cite{Auinger&Steinberg:2003}).  However, the
structural theory of semigroups yields some such results.  In
particular, we can use a result by Krohn, Rhodes and Tilson
\cite{Krohn&Rhodes&Tilson:1965} to show that if $\calV$ is decidable, then so
are $\Det\calV$, $\coDet\calV$ and $\UPol\calV$ (generalizing the 
specific instances discussed in Example~\ref{SW:example: R, L, DA}).

It is not known whether $\psvA\malcev\psvV$ is decidable whenever $\calV$ is. A positive
solution to this problem would imply a positive solution to an open instance of the complexity problem, which we discuss below in Section \ref{SW:subsec:products}.

Topological methods also \cite{Pin&Weil:1996} provide sets of
profinite identities describing Mal'cev products.  In the cases of
interest for us, it yields the following statement.

\begin{proposition}\label{SW:prop: equations for polV}
    Let $\calV$ be a variety of languages.  Then the least variety
    containing $\calV$ and closed under concatenation is defined by
    the set of profinite identities of the form $x^{\omega+1} =
    x^\omega$, where $x\in \widehat{X^*}$ and $\psvV$ satisfies $x =
    x^2$.
    
    Similar statements hold for $\Det\calV$ (respectively, $\coDet\calV$,
    $\UPol\calV$ and $\Pol\calV$), replacing the
    profinite identity $x^{\omega+1} = x^\omega$ by $x^\omega y =
    x^\omega$ (respectively, $yx^\omega = x^\omega$, $x^\omega yx^\omega = x^\omega$
    and $x^\omega y x^\omega \le x^\omega$),    
    where $x, y \in \widehat{X^*}$ and $\psvV$ satisfies $x =
    x^2 = y$.
\end{proposition}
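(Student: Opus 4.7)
The plan is to deduce the proposition from Theorems~\ref{SW:thm: concat closure}, \ref{SW:thm: deterministic and unambiguous products} and~\ref{SW:thm: polV}, which identify the five language closures in the statement with pseudovarieties of the form $\psvW\malcev\psvV$, combined with the familiar Reiterman descriptions of the base pseudovarieties: $\psvA = \llbrack x^{\omega+1}=x^\omega\rrbrack$, $\psvK = \llbrack x^\omega y=x^\omega\rrbrack$, $\psvD = \llbrack yx^\omega=x^\omega\rrbrack$, $L\psvI = \llbrack x^\omega yx^\omega=x^\omega\rrbrack$, and the ordered analog $\llbrack x^\omega yx^\omega\le x^\omega\rrbrack$ appearing in Theorem~\ref{SW:thm: polV}. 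The five statements of the proposition then all reduce to a single uniform claim: if $\psvW$ is defined by a single identity $u\diamond v$ (with $\diamond\in\{=,\le\}$) in variables $x_1,\dots,x_n$, then $\psvW\malcev\psvV$ is defined by the set of lifted identities $u(\rho_1,\dots,\rho_n)\diamond v(\rho_1,\dots,\rho_n)$, where $\rho_1,\dots,\rho_n$ ranges over tuples of profinite words in $\widehat{Y^*}$ (for any finite alphabet $Y$) such that $\psvV$ satisfies $\rho_i=\rho_j=\rho_i^2$ for all $i,j$.

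I would first handle the easy inclusion, showing that every $M\in\psvW\malcev\psvV$ satisfies all lifted identities. Take $M$ with a witnessing morphism $\phi\colon M\to N\in\psvV$ and each $\phi\inv(e)\in\psvW$ for $e\in N$ idempotent. Given any lifted identity $u(\rho_1,\dots,\rho_n)\diamond v(\rho_1,\dots,\rho_n)$ and any morphism $\eta\colon Y^*\to M$, the composition $\phi\eta\colon Y^*\to N$ is a $\psvV$-morphism, and by the side conditions its continuous extension to $\widehat{Y^*}$ sends every $\rho_i$ to a common idempotent $e\in N$. Hence $\hat\eta(\rho_1),\dots,\hat\eta(\rho_n)$ all lie in $\phi\inv(e)\in\psvW$, and the defining identity of $\psvW$ applied to this substitution gives the lifted identity in $M$.

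The converse inclusion is harder. Given $M$ satisfying all the lifted identities, the plan is to take the canonical projection $\phi\colon M\to N$ onto the largest quotient of $M$ lying in $\psvV$, and verify that for each idempotent $e\in N$ the subsemigroup $\phi\inv(e)$ lies in $\psvW$, i.e., satisfies the defining identity $u\diamond v$ of $\psvW$ under every substitution $s_1,\dots,s_n\in\phi\inv(e)$. To do this, I would take $Y=M$ with $\eta\colon Y^*\to M$ the natural surjection, and construct profinite words $\rho_1,\dots,\rho_n\in\widehat{Y^*}$ which jointly (i) satisfy $\hat\eta(\rho_i)=s_i$, and (ii) project to a common idempotent $\bar\rho\in\hat F_\psvV(Y)$ sitting over $e\in N$. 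Applying the lifted identity at these $\rho_i$ would then yield $u(s_1,\dots,s_n)\diamond v(s_1,\dots,s_n)$ in $M$, as required.

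The main obstacle lies in the construction of the $\rho_i$ in the converse direction, since two a priori unrelated continuous projections must be controlled simultaneously: one onto the arbitrary finite monoid $M$ (which in general is not in $\psvV$) and one onto the relatively free pro-$\psvV$ monoid $\hat F_\psvV(Y)$. The plan is to work inside the compact semigroup $\widehat{Y^*}$ (Proposition~\ref{SW:elementary profinite}), use the existence of idempotents in every closed subsemigroup of a compact semigroup to obtain an idempotent in the $e$-fiber of $\hat F_\psvV(Y)\to N$, and then exploit its absorption properties together with the flexibility provided by the kernel of $\eta$ to sandwich word-lifts of each $s_i$ into profinite words with the desired twin properties. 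This is the technology developed for Mal'cev products by Pin and Weil~\cite{Pin&Weil:1996}. Once the uniform translation is in hand, substituting the five base pseudovarieties yields all five clauses of the proposition.
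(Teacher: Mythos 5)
First, a point of comparison: the paper does not prove this proposition at all --- it is quoted from Pin and Weil \cite{Pin&Weil:1996} (``Topological methods \dots\ yield the following statement''). Your reduction of the five clauses to a single identity theorem for Mal'cev products $\psvW\malcev\psvV$, via Theorems~\ref{SW:thm: concat closure}, \ref{SW:thm: deterministic and unambiguous products} and~\ref{SW:thm: polV} and the Eilenberg--Reiterman correspondence of Section 2, is sound, and your easy direction is essentially correct (with one omission: membership in $\psvW\malcev\psvV$ only means $M$ is a \emph{quotient} of a monoid admitting the required morphism, so you must add that satisfaction of profinite identities is inherited by quotients).

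The genuine gap is in the converse, and it is not merely a deferred technicality: the canonical object you choose is the wrong one. You propose to witness membership by the projection $\phi\colon M\to N$ onto the largest quotient of $M$ in $\psvV$ and to lift arbitrary $s_1,\dots,s_n\in\phi\inv(e)$ to profinite words with a common idempotent image in the free pro-$\psvV$ monoid. Both the intermediate goal and the construction fail. Take $\psvW=\psvA$, $\psvV=\psvG$, $M=(\Z/2)^0=\{1,g,0\}$. The submonoid $M'=\{(1,0),(g,1),(0,0),(0,1)\}$ of $M\times\Z/2$ projects onto $M$ and maps onto $\Z/2$ with fiber $\{(1,0),(0,0)\}\cong U_1$ over the identity, so $M\in\psvA\malcev\psvG$, and (consistently with the proposition) $M$ satisfies every lifted identity $x^{\omega+1}=x^\omega$ with $\psvG\models x=x^2$, since such an $x$ can never evaluate to $g$. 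But the largest group quotient of $M$ is trivial (the zero forces it), so the fiber over its unique idempotent is all of $M$, which contains the nontrivial group $\{1,g\}$ and is not in $\psvA$; accordingly, no lift $\rho$ of $g$ with trivial image in every finite group exists (map each letter to the parity of its $\eta$-image: every word in $\hat\eta\inv(g)$, hence $\rho$, goes to $g$ in $\Z/2$). The actual Pin--Weil argument therefore works with the canonical \emph{relational} morphism: one shows that the sets $\hat\eta\bigl(\pi_\psvV\inv(\epsilon)\bigr)$, for $\epsilon$ idempotent in the free pro-$\psvV$ monoid (in general strictly smaller than the fibers of $M\to N$), satisfy the identities of $\psvW$, descends by a compactness argument to a finite quotient in $\psvV$, and then uses the graph of the resulting relational morphism --- i.e.\ exactly the quotient-closure built into the definition of the Mal'cev product --- to conclude. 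Since the step you defer to \cite{Pin&Weil:1996} is aimed at this incorrect canonical object, the proposal as written does not go through.
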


These results were extended to $\calC$-varieties, and in the case of
$\Pol\psvV$, to lattices of regular languages closed under quotients
\cite{Pin&Straubing:2005,Branco&Pin:2009}.
In practice, the resulting sets of profinite identities are infinite 
and sometimes even uncomputable. However, in a number of situations, one can extract 
from these sets more manageable, yet sufficient subsets, yielding decision algorithms.

\begin{example}
Branco and Pin \cite{Branco&Pin:2009} use Proposition~\ref{SW:prop: equations for polV}---applied to the lattice of slender languages (see Section~\ref{SW:sec: density})--- to prove the decidability of the lattice  generated by the languages of the form $L_0a_1L_1 \cdots a_kL_k$ where the $L_i$ are either $A^*$ or of the form $u^*$ for some $u\in A^*$.
\end{example}

\subsection{Product operations and semidirect products}\label{SW:subsec:products}

We now consider products of the form $LaA^*$, where $L$ is a language and $a\in A$: $LaA^*$ is the language of all words with a prefix in $La$. Given a monoid $M$ accepting $L$, one can construct a monoid accepting $LaA^*$ using the operation of semidirect product.

In general, let $S$ and $T$ be monoids. A \emph{left action} of $T$ on $S$ is a mapping $\lambda\colon T\times S \to S$, written $(t,s) \mapsto t\cdot s$, such that for each $t$, the map $\lambda_t\colon s \mapsto t\cdot s$ is an endomorphism of $S$, and such that the map $t \mapsto \lambda_t$ is a morphism from $T$ to the monoid of endomorphisms of $S$. Once such an action $\lambda$ is given, the \emph{semidirect product}\index{product!semidirect} $S \ast_\lambda T$ (we usually write $S\ast T$) is the monoid of all pairs $(s,t) \in S \times T$, with product
\begin{displaymath}
(s,t)(s',t') = (s\ \lambda(t,s'),\ tt').
\end{displaymath}

\begin{lemma}\label{SW:claim: LaA*}
	If $\phi\colon A^* \to T$ accepts the language $L$, then $U_1^T \ast T$ accepts $LaA^*$.
\end{lemma}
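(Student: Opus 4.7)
The plan is to realize $LaA^*$ as $\psi^{-1}(Y)$ for a suitable morphism $\psi\colon A^*\to U_1^T\ast T$ and subset $Y$, once the semidirect product structure is made precise. First I would fix the action of $T$ on $U_1^T$ by $(t\cdot f)(s) = f(st)$; one checks immediately that $\lambda_t\colon f\mapsto t\cdot f$ is an endomorphism of $U_1^T$ and that $t\mapsto\lambda_t$ is a monoid morphism, so the corresponding semidirect product is well defined, with multiplication $(f,t)(f',t') = (h,tt')$ where $h(s) = f(s)\,f'(st)$.

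Next I would set $X = \phi(L)$, noting that $L = \phi^{-1}(X)$ since $\phi$ accepts $L$, and define $\psi$ on letters by $\psi(b) = (f_b,\phi(b))$, with $f_a(s) = 0$ if $s\in X$ and $f_a(s) = 1$ otherwise, while $f_b$ is the constant function $1$ for $b\neq a$. This data extends uniquely to a monoid morphism $\psi\colon A^*\to U_1^T\ast T$.

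The heart of the argument is a short induction on the length of $w = w_1\cdots w_n$, showing that $\psi(w) = (g_w,\phi(w))$ with
\begin{displaymath}
g_w(s) = \prod_{k=1}^n f_{w_k}\bigl(s\,\phi(w_1\cdots w_{k-1})\bigr).
\end{displaymath}
Evaluating at $s = 1_T$, the factor indexed by $k$ equals $0$ exactly when $w_k = a$ and $\phi(w_1\cdots w_{k-1})\in X$, i.e.\ when $w_1\cdots w_{k-1}\in L$. Hence $g_w(1_T) = 0$ if and only if $w\in LaA^*$, so taking $Y = \{(g,t)\in U_1^T\ast T \mid g(1_T) = 0\}$ yields $\psi^{-1}(Y) = LaA^*$, which proves that $U_1^T\ast T$ accepts $LaA^*$.

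The only real subtlety is picking the direction of the action so that the induction step introduces precisely the shift $s\mapsto s\,\phi(w_1\cdots w_{k-1})$; this is what makes the distinguished coordinate $s = 1_T$ faithfully record the existence of a prefix of $w$ that lies in $L$ and is followed by the letter $a$. The opposite convention would work just as well, but the answer would then have to be read off at a different coordinate, or the functions $f_b$ would have to be reindexed accordingly.
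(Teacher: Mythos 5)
Your proposal is correct and follows essentially the same route as the paper's proof: the same action $(t\cdot f)(s)=f(st)$, the same morphism $\psi$ determined by the indicator functions attached to the letter $a$, and the same product computation, read off at the coordinate $s=1_T$. The only cosmetic difference is that you make the accepting set $Y=\{(g,t)\mid g(1_T)=0\}$ explicit, whereas the paper simply observes that $w\in LaA^*$ if and only if the coordinate $r_1$ equals $0$.
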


\begin{proof}
We consider the action $\lambda$ of $T$ on $U_1^T$ given by $\lambda(t,(s_x)_{x\in T}) = (s'_x)_{x\in T}$, with $s'_x = s_{xt}$. Let then $\psi\colon A^* \to U_1^T\ast T$ be given by , for each $b\in A$,
\begin{align*}
	\psi(b) &= \left( (s_x^{(b)})_{x\in T},\ \phi(b)\right)\qquad\textrm{with} \\
	s_x^{(b)} &= \begin{cases} 0 & \textrm{if $x\in \phi(L)$ and $b = a$,}\\
	                                           1 & \textrm{otherwise.} \end{cases}
\end{align*}
Using the definition of the product in $U_1^T \ast T$, we find that
\begin{align*}
	\psi(a_1\cdots a_n) &= \left( (r_x)_{x\in T},\ \phi(a_1\cdots a_n)\right)\qquad\textrm{with}  \\
	 r_x &= s_x^{(a_1)}\ s_{x\phi(a_1)}^{(a_2)}\ \cdots\ s_{x\phi(a_1\cdots a_{n-1})}^{(a_n)} \\
	 &= \begin{cases} 0 & \textrm{if for some $1\le i\le n$, $x\phi(a_1\cdots a_{i-1}) \in \phi(L)$ and $a_i = a$,} \\
	                              1 & \textrm{otherwise.}\end{cases}
\end{align*}
In particular, we observe that $a_1\cdots a_n \in LaA^*$ if and only if $r_1 = 0$.
\end{proof}

\begin{remark}
Observe that the construction of the semidirect product $U_1^T\ast T$ given above does not use anything special about $U_1$, and thus can be applied to any pair of monoids $U$ and $T$.  This is called the \emph{wreath product}\index{product!wreath} $U \circ T$. The wreath product is closely related to the semidirect product, in the sense that first, it is, of course,  a semidirect product with $T$ of a member of the pseudovariety generated by $U$, and, second, every semidirect product $U\ast T$ embeds in $U \circ T$.  The wreath product, in a sense that can be made precise, captures the notion of series composition of automata \cite{Eilenberg:1976}.  As a consequence it is frequently used, exactly as in the proof of Lemma~\ref{SW:claim: LaA*} above to prove decomposition results.
\end{remark}

The operation of semidirect product is naturally extended to pseudovarieties: if $\psvV$ and $\psvW$ are pseudovarieties, we let $\psvV\ast\psvW$ be the pseudovariety generated by the semidirect products $S\ast T$ with $S\in \psvV$ and $T\in \psvW$. Then we have the following theorem.

\begin{theorem}\label{SW:thm: LaA*}
	Let $\calV$ be a variety of languages, and for each alphabet $A$, let $A^*\calW$ be the boolean algebra generated by the languages of $A^*\calV$ and the languages of the form $LaA^*$ with $L\in A^*\calV$. Then the class of languages $\calW$ is a variety and the corresponding pseudovariety of monoids is $\psvJ_1\ast\psvV$.
\end{theorem}

\begin{proof}
Since $U_1 \in \psvJ_1$, Lemma~\ref{SW:claim: LaA*} shows that every language in $A^*\calW$ is accepted by a monoid in $\psvJ_1\ast\psvV$. The proof of the converse is a particular case of the more general \emph{wreath product principle}\index{wreath product principle} (Straubing \cite{Straubing:1979a}). Let $\phi$ be a morphism $\phi\colon A^* \to S \ast T$ and for each $a\in A$, let $\phi(a) = (s_a,t_a)$. Let $\psi\colon A^* \to T$ be the morphism given by $\psi(a) = t_a$. Let also $B = T\times A$ and let $\sigma\colon A^* \to B^*$ be the map
\begin{displaymath}
\sigma(a_1\cdots a_n) = (1,a_1)\ (\psi(a_1),a_2)\ \cdots \ (\psi(a_1\cdots a_{n-1}),a_n).
\end{displaymath}
Note that $\sigma$ is a so-called sequential function \cite{Berstel:1979,Sakarovitch:2009}, not a morphism. We observe however that, if $\chi\colon B^* \to S$ is the morphism given by $\chi(t,a) = t\cdot s_a$, then
\begin{displaymath}
\phi(a_1\cdots a_n) = \left(\chi\sigma(a_1\cdots a_n), \ \psi(a_1\cdots a_n) \right).
\end{displaymath}
It follows that if $(s,t) \in S\ast T$, then $\phi\inv(s,t) =  \psi\inv(t) \cap \sigma\inv(\chi\inv(s))$. If $T\in \psvV$, then $\psi\inv(t) \in A^*\calV$. And if $S\in \psvJ_1$, then $\chi\inv(s)$ is a language in $B^*\calJ_1$, and hence a boolean combination of languages of the form $B^*(t,a)B^*$ ($(t,a) \in B$). Then $\sigma\inv(\chi\inv(s))$ is a boolean combination of languages of the form $\sigma\inv(B^*(t,a)B^*)$. Now $\sigma(a_1\cdots a_n) \in B^*(t,a)B^*$ if and only if, for some $1\le i \le n$, we have $(t,a) = (\psi(a_1\cdots a_{i-1}), a_i)$, that is, if and only if $a_1\cdots a_n \in \psi\inv(t)aA^*$. In particular, $\chi\inv(s)$ and $\phi\inv(s,t)$ are in $A^*\calW$, and so is any language accepted by $\phi$.
\end{proof}

\begin{remark}\label{SW:remark: complexity}
The semidirect product  is a powerful tool for decomposing pseudovarieties. The operation $\psvV\ast\psvW$ is associative on pseudovarieties and Krohn and Rhodes \cite{Krohn&Rhodes:1965} established\index{Krohn-Rhodes Theorem} that every finite monoid $M$ sits in an iterated product $\psvX_1\ast\cdots\ast \psvX_k$  where each $\psvX_i$ is either $\psvG$ or $\psvA$ (and the $\psvG$ and $\psvA$ factors alternate since $\psvG\ast\psvG = \psvG$ and $\psvA \ast\psvA = \psvA$). This gives rise to a famous open problem, the so-called complexity problem: given $M$, can we compute the minimum number of $\psvG$ factors in a  product of $\psvA$ and $\psvG$ containing $M$? 
\end{remark}

An analogous operation, the 2-sided semidirect product, can be used to handle the products of the form $KaL$ ( $K, L \subseteq A^*$). This time, we need to consider not only a left action of $T$ on $S$ (as for the semidirect product), but also a right action of $T$ on $S$, a map $\rho\colon S\times T \to S$, written $(s,t) \mapsto s \cdot t$, with the dual properties of a left action ($\rho_t\colon s \mapsto s\cdot t$ is an endomorphism of $S$ and $t \mapsto \rho_t$ is a morphism), and such that, for all $t,t'\in T$, $\lambda_t$ and $\rho_{t'}$ commute: $t\cdot (s\cdot t') = (t\cdot s)\cdot t'$. Then the \emph{2-sided semidirect product}\index{product!2-sided semidirect} $S \dast_{\lambda,\rho} T$ (written $S\dast T$) is the monoid of all pairs $(s,t) \in S \times T$, with product
\begin{displaymath}
(s,t)(s',t') = (\rho(s,t')\ \lambda(t,s'),\ tt').
\end{displaymath}
Again, the operation is extended to pseudovarieties, by letting $\psvV\dast\psvW$ be the pseudovariety generated by the products $S \dast T$ with $S\in \psvV$ and $T\in \psvW$. Then the following analogue of Theorem~\ref{SW:thm: LaA*} holds.

\begin{theorem}\label{SW:thm: KaL}
	Let $\calV$ be a variety of languages and for each alphabet $A$, let $A^*\calW$ is the boolean algebra generated by the languages of $A^*\calV$ and the languages of the form $KaL$ with $K,L\in A^*\calV$. Then the class $\calW$ is a variety and the corresponding pseudovariety of monoids is $\psvJ_1\dast\psvV$.
\end{theorem}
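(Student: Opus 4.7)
My plan is to follow the blueprint of Theorem~\ref{SW:thm: LaA*}, replacing the left semidirect product by the 2-sided semidirect product, and proving both containments between $A^*\calW$ and the languages recognized by monoids of $\psvJ_1\dast\psvV$. The rest then follows from Eilenberg's theorem: once $\calW$ is shown to correspond to the pseudovariety $\psvJ_1\dast\psvV$, it is automatically a variety of languages.

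For the direction $A^*\calW \subseteq$ (languages recognized by $\psvJ_1\dast\psvV$), I would first produce the analogue of Lemma~\ref{SW:claim: LaA*} for products $KaL$. Given $\phi_K\colon A^*\to T_K$ and $\phi_L\colon A^*\to T_L$ accepting $K$ and $L$ with $T_K, T_L\in\psvV$, set $T = T_K\times T_L$ and $S = U_1^{T_K\times T_L}$. Define a left action $\lambda$ of $T$ on $S$ that translates the first index on the right, and a right action $\rho$ that translates the second index on the left; these commute. With a carefully chosen $\phi(b) = (s^{(b)},(\phi_K(b),\phi_L(b)))$ where $s^{(b)}(x,y)=0$ iff $b=a$, $x\in\phi_K(K)$ and $y\in\phi_L(L)$, the formula for the first coordinate of $\phi(a_1\cdots a_n)$ in $S\dast T$ is $\prod_i \lambda_{t_1\cdots t_{i-1}}\rho_{t_{i+1}\cdots t_n}(s^{(a_i)})$, so evaluating at $(1,1)$ yields $0$ exactly when $a_1\cdots a_n\in KaL$. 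Since $S\in\psvJ_1$ and $T\in\psvV$, this places $KaL$ in the class of languages recognized by $\psvJ_1\dast\psvV$. Closure of the latter under boolean operations, together with $\calV\subseteq\calW$, takes care of all of $A^*\calW$.

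For the converse, I would prove the 2-sided wreath product principle, which is the real heart of the argument. Given $\phi\colon A^*\to S\dast T$ with $S\in\psvJ_1$, $T\in\psvV$ and $\phi(a) = (s_a,t_a)$, set $\psi(a)=t_a$, $B = T\times A\times T$, and define
\begin{displaymath}
\sigma(a_1\cdots a_n) = (\psi(1),a_1,\psi(a_2\cdots a_n))\,(\psi(a_1),a_2,\psi(a_3\cdots a_n))\,\cdots\,(\psi(a_1\cdots a_{n-1}),a_n,\psi(1)).
\end{displaymath}
Let $\chi\colon B^*\to S$ be the morphism $\chi(t,a,t') = \lambda(t,s_a)\cdot t'$ (well-defined because $\lambda_t$ and $\rho_{t'}$ commute). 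A direct induction on $n$ using the commutation of $\lambda$ and $\rho$, parallel to the one in Theorem~\ref{SW:thm: LaA*}, gives $\phi(a_1\cdots a_n) = (\chi\sigma(a_1\cdots a_n),\psi(a_1\cdots a_n))$. Hence $\phi\inv(s,t) = \psi\inv(t)\cap\sigma\inv(\chi\inv(s))$. Since $T\in\psvV$, each $\psi\inv(t)\in A^*\calV$; since $S\in\psvJ_1$, $\chi\inv(s)$ is a boolean combination of languages $B^*(t,a,t')B^*$; and crucially
\begin{displaymath}
\sigma\inv(B^*(t,a,t')B^*) = \psi\inv(t)\,a\,\psi\inv(t'),
\end{displaymath}
which is a product of the form $KaL$ with $K,L\in A^*\calV$. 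So $\phi\inv(s,t)\in A^*\calW$, and hence every language recognized by $\phi$ lies in $A^*\calW$.

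The main obstacle is point (3) of my wreath-product step: $\sigma$ is not a morphism, nor even sequential (it needs to look arbitrarily far to the right to compute the third tag coordinate), so one cannot invoke closure of $\calV$ under inverse morphisms. The trick is that we do not need to process $\sigma\inv$ of an arbitrary language in $B^*\calJ_1$; we only need it applied to the basic generators $B^*(t,a,t')B^*$, and for these the preimage can be computed by hand into exactly the form $\psi\inv(t)\,a\,\psi\inv(t')$ needed to land in $A^*\calW$. Once this identity is pinned down, the rest is routine bookkeeping: verifying that $\lambda,\rho$ indeed commute, checking the inductive formula for $\phi(a_1\cdots a_n)$, and deducing that the correspondence $\psvJ_1\dast\psvV\mapsto\calW$ yields a variety by Eilenberg's theorem.
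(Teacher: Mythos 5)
Your proposal is correct and follows essentially the same route as the paper: a two-sided analogue of Lemma~\ref{SW:claim: LaA*} for the products $KaL$ (which the paper leaves to the reader and you fill in correctly), followed by the two-sided wreath product principle with the same $B = T\times A\times T$, $\sigma$, $\chi$, and the key computation $\sigma\inv(B^*(t,a,t')B^*) = \psi\inv(t)\,a\,\psi\inv(t')$. Your explicit handling of the fact that $\sigma$ is not a morphism, by inverting it only on the generators of $B^*\calJ_1$, is exactly how the paper concludes ``as in the proof of Theorem~\ref{SW:thm: LaA*}''.
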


\begin{proof}
The first step of the proof consists in verifying that if $K$ and $L$ are accepted by a monoid in $T \in \psvV$, then $KaL$ is accepted by $U_1^{T\times T}\dast T$. (Note that if $K$ and $L$ are accepted by monoids $T_1$ and $T_2$, then they are both accepted by $T_1\times T_2$, so it is no restriction to assume that $K$ and $L$ are accepted by the same monoid.) This step is performed essentially as in Lemma~\ref{SW:claim: LaA*}, and the details are left to the reader.

The second step, to prove that if $\phi$ is a morphism $\phi\colon A^* \to S \dast T$ with $S\in \psvJ_1$ and $T\in \psvV$, then each $\phi\inv(s,t)$ is in $A^*\calW$. Here too, we use (a 2-sided version of) the wreath product principle \cite{Weil:1992}. For each $a\in A$, let $\phi(a) = (s_a,t_a)$. Let $\psi\colon A^* \to T$ be the morphism given by $\psi(a) = t_a$, let $B = T\times A\times T$ and let $\sigma\colon A^* \to B^*$ be the map
\begin{multline}
\sigma(a_1\cdots a_n) \\= (1,a_1,\psi(a_2\cdots a_n))\ (\psi(a_1),a_2,\psi(a_3\cdots a_n))\ \cdots \ (\psi(a_1\cdots a_{n-1}),a_n,1).\nonumber
\end{multline}
Then, if $\chi\colon B^* \to S$ is the morphism given by $\chi(t,a,t') = (t\cdot s_a)\cdot t'$, then
\begin{displaymath}
\phi(a_1\cdots a_n) = \left(\chi\sigma(a_1\cdots a_n), \ \psi(a_1\cdots a_n) \right).
\end{displaymath}
We conclude as in the proof of Theorem~\ref{SW:thm: LaA*}.
\end{proof}

\begin{remark}
In view of Sch\"utzenberger's theorem (Theorem~\ref{SW:thm: Schutz} above), one can use this result to show that the least pseudovariety closed under the operation $\psvV\mapsto \psvJ_1\dast\psvV$, is the pseudovariety $\psvA$ of aperiodic monoids.
\end{remark}

Semidirect product decomposition yields very difficult decision problems, such as the complexity problem briefly described in Remark~\ref{SW:remark: complexity}. Tilson showed that the consideration of certain categories offered a systematic tool for understanding semidirect (and 2-sided semidirect) product decompositions (\cite{Tilson:1987}, see also \cite{Straubing:1994}).  Almeida and Weil combined this category-theoretical approach with topological methods to provide sets of profinite identities describing many instances of semidirect products \cite{Almeida&Weil:1998}. As with Mal'cev products, these sets are usually infinite and do not offer immediate solutions to decidability problems, see \cite{Auinger&Steinberg:2003}.

For the products discussed in this section, \cite{Almeida&Weil:1998} gives the following descriptions.

\begin{proposition}
Let $\psvV$ be a pseudovariety of monoids. Then
$\psvJ_1\ast\psvV$ is defined by the set of profinite identities of the form $xy^2 = xy$ and $xyz = xzy$ for all $x,y,z \in \widehat{X^*}$ such that $\psvV$ satisfies $xy = xz = x$.

$\psvJ_1\dast\psvV$ is defined by the set of profinite identities of the form $xy^2x' = xyx'$ and $xyzx' = xzyx'$ for all $x,y,z,x' \in \widehat{X^*}$ such that $\psvV$ satisfies $xy = xz = x$ and $yx' = zx' = x'$.
\end{proposition}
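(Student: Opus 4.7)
The plan is to handle $\psvJ_1\ast\psvV$ in detail; the statement for $\psvJ_1\dast\psvV$ follows by an exact parallel argument, substituting Theorem~\ref{SW:thm: KaL} for Theorem~\ref{SW:thm: LaA*} and adjoining a suffix $x'$ with the right-stability hypothesis $yx'=zx'=x'$.

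First I would verify that every $M\in\psvJ_1\ast\psvV$ satisfies the listed identities. Writing $M=S\ast T$ with $S\in\psvJ_1$ and $T\in\psvV$, and considering the continuous extension $\hat\phi\colon\widehat{X^*}\to M$ of any morphism $\phi\colon X^*\to M$, with components $\hat\phi(u)=(s_u,t_u)$, continuity and the definition of the semidirect product multiplication give $s_{uv}=s_u+t_u\cdot s_v$ (written additively, since $S$ is commutative and idempotent) and $t_{uv}=t_ut_v$ for all $u,v\in\widehat{X^*}$. Under $\psvV\models xy=x$ the projection $M\to T$ forces $t_{xy}=t_x$, so $s_{xy^2}=s_x+t_x\cdot s_y+t_{xy}\cdot s_y=s_x+t_x\cdot s_y=s_{xy}$ by idempotency; adding $\psvV\models xz=x$, commutativity of $S$ gives $s_{xyz}=s_x+t_x\cdot s_y+t_x\cdot s_z=s_{xzy}$. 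The $T$-coordinates match under the same $\psvV$-hypotheses, so both identities hold throughout $\psvJ_1\ast\psvV$.

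For the converse, let $\psvV'$ denote the pseudovariety defined by the listed identities. By Corollary~\ref{SW:cor: syntactic generation} combined with Theorem~\ref{SW:thm: LaA*}, it suffices to show that whenever $\synt(L)\in\psvV'$, the language $L$ lies in the boolean closure of $A^*\calV$ under the operation $K\mapsto KaA^*$. Given a surjective $\phi\colon A^*\to M=\synt(L)$, I would form $T=M/{\sim_\psvV}$, where $\sim_\psvV$ is the congruence on $M$ identifying pairs equated by every morphism of $M$ onto a member of $\psvV$, so that $T\in\psvV$. The identities of $\psvV'$ assert precisely that, along any word, once the $\psvV$-image has stabilized, further contributions to $M$ behave commutatively and idempotently, so the \emph{discrepancy} between $M$ and $T$ can be captured by a $\psvJ_1$-valued left action. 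Concretely, I would adapt the wreath-product principle used in the proof of Theorem~\ref{SW:thm: LaA*} to factor $\phi$ through a semidirect product $U_1^N\ast T$, with $N$ indexing the finitely many markers $(t,a)$ (a $T$-class followed by a letter), the identities of $\psvV'$ guaranteeing the well-definedness of the induced action on $M$.

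The main obstacle is this last step: checking that the relativized form of the $\psvJ_1$ identities---only available under the hypothesis $\psvV\models xy=xz=x$---is strong enough to force the prospective $U_1^N$-action to factor through $M$ rather than through some larger quotient. The profinite setting is essential here: the variables $y,z\in\widehat{X^*}$ must range over all continuous instantiations, and compactness of $\widehat{X^*}$ (Proposition~\ref{SW:elementary profinite}) allows reduction to finitely many marker events $(t,a)$. The $2$-sided case runs in parallel using Theorem~\ref{SW:thm: KaL}, with markers of the form $(t,a,t')$ encoding both left and right $T$-contexts and the extra suffix variable $x'$ under $yx'=zx'=x'$ capturing the stability of the right action.
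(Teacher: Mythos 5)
Your first half is essentially right: in a semidirect product $S\ast T$ with $S\in\psvJ_1$ and $T\in\psvV$, writing $\hat\phi(u)=(s_u,t_u)$, the hypothesis that $\psvV$ satisfies $xy=xz=x$ forces $t_{xy}=t_{xz}=t_x$ (compose with the projection onto $T$), and then idempotency and commutativity of $S$ give $\hat\phi(xy^2)=\hat\phi(xy)$ and $\hat\phi(xyz)=\hat\phi(xzy)$; since satisfaction of profinite identities is inherited by direct products and divisors, all of $\psvJ_1\ast\psvV$ satisfies the identities, and the two-sided computation is indeed parallel. For calibration: the paper does not prove this proposition at all -- it quotes it from Almeida and Weil \cite{Almeida&Weil:1998}, whose proof combines Tilson's derived- and kernel-category theorems with profinite methods -- so your attempt has to stand on its own.

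The converse inclusion is the substance of the statement, and your sketch does not establish it; you flag the decisive step yourself as ``the main obstacle.'' Moreover, the specific plan you propose cannot work: you take $T=M/{\sim_{\psvV}}$, the maximal $\psvV$-quotient of $M=\synt(L)$, and try to factor a recognizing morphism through $U_1^N\ast T$. Take $\psvV=\psvG$ and $L=(ab)^*$ over $A=\{a,b\}$. With $K_0=(A^2)^*$ and $K_1=A(A^2)^*$ (group languages, recognized by the two-element group), one has
\begin{displaymath}
(ab)^*=K_0\setminus\bigl(K_0\,b\,A^*\cup K_1\,a\,A^*\bigr),
\end{displaymath}
so by Theorem~\ref{SW:thm: LaA*} the syntactic monoid of $(ab)^*$, the Brandt monoid $B_2^1$, lies in $\psvJ_1\ast\psvG$ and hence satisfies the relativized identities. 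But $B_2^1$ is aperiodic (it has a zero), so its maximal group quotient is trivial, and a factorization through $U_1^N\ast T$ with $T$ trivial would force $B_2^1\in\psvJ_1$, which is false since $ab\ne ba$ in $B_2^1$. Thus the member of $\psvV$ over which the decomposition must be carried out is in general not a quotient of $M$ at all (here it is a two-element group), which is precisely why the actual proof goes through relational morphisms and the derived/kernel category machinery (or the Almeida--Weil profinite basis theorem) rather than the canonical $\psvV$-quotient; your assertions that the identities ``guarantee well-definedness of the induced action'' and that compactness ``reduces to finitely many marker events'' are slogans standing in for that missing argument. As it stands, you have proved one inclusion and gestured at the other, for both the one-sided and the two-sided statements.
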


In \cite{Almeida&Weil:1998}, this result is used to show the decidability of $\psvJ_1\ast\psvJ$ and $\psvJ_1\dast\psvJ$.

It is interesting also to note that 2-sided semidirect products and category-theoretical extensions of the notion of pseudovariety can be used to decompose unambiguous products, that is, to decompose the operation $\psvV \mapsto L\psvI\malcev\psvV$, see \cite{Pin&Straubing&Therien:1988}.

\section{Varieties in other algebraic frameworks}

The fundamental notions explored in this chapter---classes of algebras defined by identities, properties preserved under products and quotients, {\it etc.}---properly belong to the domain of universal algebra.  We have applied these ideas to finite monoids, ordered finite monoids, and stamps, but in fact they are applicable in a much wider variety of settings.  Here we will briefly discuss some of these extensions.

The study of varieties originates in the work of Birkhoff~\cite{Birkhoff:1935}, who showed that a family of algebras (defined in a very general sense) is closed under formation of subalgebras, quotients and products if and only if it is defined by a set of identities.  Such families of algebras are called {\it varieties} because of a loose analogy with the varieties of algebraic geometry defined by sets of polynomial equations.  Note that the classes of finite monoids that we have discussed are not varieties in this sense because they are not, of course, closed under infinite direct products, nor even finite quotients of infinite direct products, and consequently they cannot be defined by  sets of explicit identities (as opposed to profinite identities).

Efforts to adapt Birkhoff's Theorem to finite algebras include work of Eilenberg and Sch\"utzenberger~\cite{Eilenberg&Schutzenberger:1976}, and of Baldwin and Berman~\cite{Baldwin&Berman:1976}, who both showed that pseudovarieties are indeed defined by sets of identities, in the sense that an algebra belongs to a pseudovariety if and only if it satisfies all but finitely many identities of the set.   A different treatment, and the one that we have followed here, based on identities in free profinite algebras, was given by Reiterman, who proved the second part of Theorem 2.15 in the setting of arbitrary finite algebras~\cite{Reiterman:1982} (see also Banaschewski \cite{Banaschewski:1983}).

The first part of Theorem 2.15, characterizing the language classes corresponding to pseudovarieties of finite monoids, is from Eilenberg~\cite{Eilenberg:1976}.  A generalization applicable to pseudovarieties of single-sorted finite algebras is given by Almeida~\cite{Almeida:1994}.

The ordered monoids considered in this chapter are not, strictly speaking, algebras, but rather instances of finite {\it ${\cal L}$-structures,} which are algebras together with a set of relations compatible with the operations in the algebra.  Pin and Weil~\cite{Pin&Weil:1996} prove an analogue of Reiterman's Theorem for such structures.  In this setting the profinite identities are replaced by profinite relational identities. The profinite ordered identities discussed in this chapter are a particular instance.

Variety theories of the kind described here have also been successfully extended to a number of many-sorted algebras that arise in the domain of automata theory, and which we briefly describe:

Wilke~\cite{Wilke:1993} and Perrin and Pin~\cite{Perrin&Pin:2004} consider regular languages of infinite words.  Here the corresponding algebraic objects are two-sorted algebras called {\it $\omega$-semigroups.} These are pairs $(S_f,S_{\omega})$, where $S_f$ is a semigroup, and where there are additional operations $S_f\times S_{\omega}\to S_{\omega}$ and $S_f \to S_{\omega}$.  Here the free object (analogous to the free monoid in the case of pseudovarieties of finite monoids) is the pair $(A^+,A^{\omega})$ of finite and infinite words over $A$. The three operations correspond to ordinary concatenation of finite words, concatenation of a finite word and an infinite word to obtain an infinite word, and taking the infinite power of a finite word to obtain an infinite word.  
  
  \'Esik and Weil~\cite{Esik&Weil:2005,Esik&Weil:2010} describe a theory of varieties for regular languages of {\it ranked trees}.  These are finite trees in which the nodes are labeled by letters of a finite alphabet $\Sigma$ that is the disjoint union of subalphabets $\Sigma_0,\ldots,\Sigma_n$, where the label of a node with $k$ children belongs to $\Sigma_k$.  In particular, the number of children of any node in such a tree is bounded above by $n$.  The corresponding algebraic objects are called {\it finitary preclones.} These are sequences of finite sets $S_0,S_1,\ldots$.  The operation takes an element $f$ of $S_k$, and a sequence $g=(g_1,\ldots,g_k)$, where $g_i\in S_{m_i}$, and yields an element $f\cdot g$ of $S_m$, where $m=m_1+\cdots + m_k$.  The free object is the sequence $(\Sigma M_0,\Sigma M_1,\ldots)$, where $\Sigma M_k$ consists of {\it $k$-ary ranked trees}: these are ranked trees in which $k$ of the leaves, reading in left-to-right order, have been replaced by the variable symbol $v_1,\ldots, v_k$.  In this free preclone, the operation $f\cdot (g_1,\ldots,g_k)$ is that of replacing the $k$ variables in $f$ by the trees $g_1,\ldots, g_k$ to obtain an $m$-ary ranked tree.  
  
The theory can be extended as well to regular languages of finite unranked forests, in which there is no bound on the degree of branching of the nodes (e.g., Bojanczyk and Walukiewicz \cite{Bojanczyk&Walukiewicz:2008}, Bojanczyk, Straubing and Walukiewicz \cite{Bojanczyk&Straubing&Walukiewicz:2012}).  Here the corresponding algebraic objects are called {\it forest algebras}.  These are pairs $(H,V)$ of monoids where $V$ acts on $H$.  The letters $H$ and $V$ stand for `horizontal' and `vertical':  The free object is the pair $(H_A,V_A)$ where $H_A$ consists of forests labeled by letters of $A$, and $V_A$ consists of {\it contexts}: forests in which the letter at one leaf has been deleted and replaced by a single variable.  The product in $H_A$ is simply concatenation of forests to obtain larger forests; the product in $V_A$ is substitution of one context for the variable in another context; and the action of $V_A$ on $H_A$ is substitution of a forest for the variable in a context so as to obtain a larger forest.

For further details on this algebraic approach of the theory of regular tree
languages, we refer the reader to Chapter~\ref{Chap22} in this Handbook.


\bibliographystyle{abbrv}
\addcontentsline{toc}{section}{References}
\begin{footnotesize}
\bibliography{abbrevs,StraubingWeilBiblio}
\end{footnotesize}

\newpage{\pagestyle{empty}\cleardoublepage}

\end{document}